\newcommand{\overbar}[1]{\mkern 2.5mu\overline{\mkern-2.5mu#1\mkern-2.5mu}\mkern 2.5mu}
\newtheorem{theorem}{Theorem}
\newtheorem{definition}{Definition}
\newtheorem{lemma}{Lemma}
\newtheorem{remark}{Remark}
\newtheorem{example}{Example}
\newtheorem{assumption}{Assumption}
\begin{document}

\title{Global and Asymptotically Efficient Localization from Range Measurements}
\author{Guangyang Zeng, Biqiang Mu, Jiming Chen, Zhiguo Shi, and Junfeng Wu
	\thanks{Guangyang Zeng and Jiming Chen are with the College of Control Science and Engineering and the State Key Laboratory of Industrial Control Technology, Zhejiang University, Hangzhou 310027, P. R. China.
		{\tt\small \{gyzeng,cjm\}@zju.edu.cn}.}
	\thanks{Biqiang Mu is with Key Laboratory of Systems and Control, Institute of Systems Science, Academy of Mathematics and Systems Science, Chinese Academy of Sciences, Beijing 100190, China.
		{\tt\small bqmu@amss.ac.cn}.}
	\thanks{Zhiguo Shi is with the College of Information Science and Electronic Engineering, Zhejiang University, Hangzhou 310027, P. R. China.
		{\tt\small shizg@zju.edu.cn}.}
	\thanks{Junfeng Wu is with the School of Data Science, Chinese University of Hong Kong, Shenzhen, Shenzhen, P. R. China, and the State Key Laboratory of Industrial Control Technology, Zhejiang University, Hangzhou, P. R. China.
		{\tt\small junfengwu@cuhk.edu.cn}.}
}

\maketitle

\begin{abstract}
We consider the range-based localization problem, which involves estimating an object's position by using $m$ sensors, hoping that as the number $m$ of sensors increases, the estimate converges to the true position with the minimum variance. 
We show that under some conditions on the sensor deployment and measurement noises, the LS estimator is strongly consistent and asymptotically normal. However, the LS problem is nonsmooth and nonconvex, and therefore hard to solve. We then devise realizable estimators that possess the same asymptotic properties as the LS one. 
These estimators are based on a two-step estimation architecture, which says that any $\sqrt{m}$-consistent estimate followed by a one-step Gauss-Newton iteration can yield a solution that possesses the same asymptotic property as the LS one. The keypoint of the two-step scheme is to construct a $\sqrt{m}$-consistent estimate in the first step. In terms of whether the variance of measurement noises is known or not, we propose the Bias-Eli estimator (which involves solving a generalized trust region subproblem) and the Noise-Est estimator (which is obtained by solving a convex problem), respectively. Both of them are proved to be $\sqrt{m}$-consistent. Moreover, we show that by discarding the constraints in the above two optimization problems, the resulting closed-form estimators (called Bias-Eli-Lin and Noise-Est-Lin) are also $\sqrt{m}$-consistent. 
Plenty of simulations verify the correctness of our theoretical claims, showing that the proposed two-step estimators can asymptotically achieve the Cramer-Rao lower bound. 
\end{abstract}

\begin{IEEEkeywords}
Range measurements, TOA localization, Two-step localization, Large-sample analysis
\end{IEEEkeywords}

%
\IEEEpeerreviewmaketitle

\section{Introduction}\label{section:intro}
\subsection{Background}
\IEEEPARstart{L}{ocalization} is a fundamental module in the extensive location-based services, including navigation systems, indoor robots, advertising, to name a few. It is referred to acquiring the position of an object with respect to a certain coordinate system based on the measurements of a set of sensors. The commonly used types of measurements include received signal strength (RSS)~\cite{liu2015rss}, angle of arrival (AOA)~\cite{sun2020eigenspace}, time of arrival (TOA)~\cite{shen2012accurate} and time difference of arrival (TDOA)~\cite{sun2018solution}. 
In virtue of the development of high-precision time measurement methods, e.g., the ultra-wideband (UWB) technology~\cite{wymeersch2012machine}, the TOA-based localization can achieve a quite accurate estimation of the object position, and it has been applied widely in the Internet of Things. In this paper, we will investigate the object localization from TOA measurements. 

The TOA methods utilize the absolute time instants when a radio signal emanating from the object reaches several sensors. With the prior knowledge of the propagation velocity of the signal, the range measurements from the object to sensors are obtained. Each TOA measurement will narrow the object position to a circle (in the 2-D case) or a sphere (in the 3-D case) centered at the sensor. Due to the presence of measurement noises, these circles (spheres) do not necessarily intersect at one point. Therefore, advanced estimation methods are required to estimate the object's position.

\subsection{Related Works}
Maximum likelihood (ML) is the most natural criterion in parameter inference for the property of statistical interpretation and asymptotic efficiency. Hence, there exists extensive literature studying the TOA-based localization under the ML formulation~\cite{chan1994simple,chan2006exact,soares2015simple,biswas2006semidefinite,biswas2006semidefinite2}. However, due to the nonconvex property of the ML problem, its global minimum is hard to find. Iterative schemes are adopted in~\cite{chan1994simple,chan2006exact} to seek approximate solutions. These methods are usually sensitive to the initial estimate and generally converge to local minima. There are also some semidefinite relaxation (SDR) approaches which relax the nonconvex problem into a convex one~\cite{soares2015simple,biswas2006semidefinite,biswas2006semidefinite2}.
Although the global minimizers of these relaxed problems can be found, they are not optimal in the ML sense. 

Another prevalent criterion is least squares (LS), based on which the sum of squared errors is minimized~\cite{caffery1998subscriber,wymeersch2009cooperative,gustafsson2005mobile,kim2006interior,caffery2000new,cheung2004accurate,beck2008exact}. If the noises follow i.i.d. Gaussian distribution, the LS estimation resembles the ML one. It is noteworthy that the TOA LS problem is also nonconvex, and gradient-based methods~\cite{caffery1998subscriber,wymeersch2009cooperative,gustafsson2005mobile} do not necessarily obtain the global minimum. Linear approximation~\cite{kim2006interior,caffery2000new} and SDR~\cite{cheung2004accurate,beck2008exact,zhou2018fast} are two widely adopted attempts to obtain approximate solutions. 
An important approach that emerged from the LS formulation is to apply the least squares methodology to the squared range measurements, which is called the squared least squares (S-LS) formulation~\cite{beck2008exact,cheung2004least}. Although the S-LS problem is nonconvex, it owns an unrivaled property that its global minimum can be obtained via solving a generalized trust region subproblem (GTRS)~\cite{beck2008exact}. The Lagrangian dual approach proposed in~\cite{qi2013lagrangian} is equivalent to the GTRS method. There is some literature discussing the relationship between the LS and the S-LS problems~\cite{larsson2009accuracy,beck2008iterative,beck2012solution}. Larsson and Danev~\cite{larsson2009accuracy} compared the asymptotic accuracy between the LS and the S-LS estimators. They identified geometries where the performances of the two methods are identical but also geometries when the difference in performance is unbounded. Beck \textit{et al.}~\cite{beck2008iterative,beck2012solution} empirically showed that the S-LS solution provides a ``good'' initial estimate for some iterative algorithms to globally solve the LS problem. 


\subsection{Contributions and Organization}
Based on our literature survey, we find that the realization of the LS estimator by solving the LS optimization problem~\eqref{LS} is nontrivial~\cite{caffery1998subscriber,wymeersch2009cooperative,gustafsson2005mobile,chuang2008effective,wu2019adaptive}. All of these works cannot guarantee to obtain the global minimum. In addition, most of the literature which investigates the modified or relaxed counterparts of the LS problem does not analyze the relationship between the proposed estimator and the LS solution~\cite{kim2006interior,caffery2000new,cheung2004accurate,beck2008exact,qi2013lagrangian}. Moreover, the consistency and asymptotic normality of the LS estimation are rarely discussed in the literature. 

In this paper, we show that under some mild and readily-checked conditions (Assumptions~\ref{assumption:Gaussian_noise}-\ref{nonsingular_assumption}), the LS estimator is strongly consistent and asymptotically normal. Further, we claim that although the LS estimator is hard to realize, it is viable to devise estimators that possess the same asymptotic property as the LS one. 
Specifically, we propose such estimators in virtue of a two-step estimation architecture~\cite{lehmann2006theory,mu2017globally}. The two-step architecture has the unrivaled property that given any $\sqrt{m}$-consistent estimate in the first step, just a one-step Gauss-Newton (GN) iteration in the second step, which is computationally efficient, can yield an estimate with the minimum variance (asymptotically). The keypoint of the two-step architecture is to devise a $\sqrt{m}$-consistent estimator in the first step. 
In this paper, we show the possibility of constructing $\sqrt{m}$-consistent estimators by solving modified S-LS problems. In terms of whether the statistical knowledge (variance) of measurement noises is known or not, we propose different $\sqrt{m}$-consistent estimators. When the variance is available, we subtract it from the error terms of the S-LS problem, yielding the Bias-Eli problem~\eqref{GTRS_problem}. The Bias-Eli optimization problem is a GTRS problem, and we propose a complete algorithm to seek its global minimizer. In addition, we discard the constraint in~\eqref{GTRS_problem}, yielding an ordinary least squares problem~\eqref{linear_Bias_Eli} called the Bias-Eli-Lin problem. Both the Bias-Eli and Bias-Eli-Lin estimators are proved to be $\sqrt{m}$-consistent. When the statistical knowledge of noises is unknown, we turn to simultaneously estimate the object's position and noises' variance. We first propose a $\sqrt{m}$-consistent estimator by solving the Noise-Est problem~\eqref{Noise_Estimate_problem2}, which is a convex one. We also show that the solution to the ordinary least squares problem Noise-Est-Lin~\eqref{Noise_estimate_linear} which discarding the constraint in~\eqref{Noise_Estimate_problem2} is $\sqrt{m}$-consistent. 
Note that some of the existing localization systems have a high speed of measurements, e.g., sampling rate of practical UWB systems can reach $2.3$ kHz~\cite{grossiwindhager2019snaploc}. Therefore, the proposed asymptotically optimal estimators may play a valuable role in these systems for line-of-sight (LOS) scenarios, especially when the target is static, and a large sample of measurements can be utilized.
To summarize, the main contributions of this paper are listed as follows:
\begin{enumerate}
	\item [$(i).$] Under some assumptions (Assumptions~\ref{assumption:Gaussian_noise}-\ref{nonsingular_assumption}), we prove that the LS estimator which involves solving the nonsmooth and nonconvex LS optimization problem~\eqref{LS} is strongly consistent and asymptotically normal. These assumptions are stated tightly related to practical settings rather than mere abstract mathematical ones, which can be readily checked. Specifically, they are associated with range measurement noises and the deployment of sensors. Some examples of sensor arrangement are supplied to make them more comprehensible. 
	
	\item [$(ii).$] Noticing the LS problem~\eqref{LS} is hard to solve, we devise realizable estimators that possess the same asymptotic properties as the LS one. The estimators are based on a two-step estimation architecture where the keypoint is to construct a $\sqrt{m}$-consistent estimator in the first step. When the variance of measurement noises is known, we devise the Bias-Eli estimator which involves solving a GTRS problem~\eqref{GTRS_problem}. Otherwise, we construct the Noise-Est estimator, which is obtained by solving a convex problem~\eqref{Noise_Estimate_problem2}, to simultaneously estimate the object's position and noises' variance. Both estimators are proved to be $\sqrt{m}$-consistent.  
	
	\item [$(iii).$] By discarding the constraints in the GTRS problem~\eqref{GTRS_problem} and the convex problem~\eqref{Noise_Estimate_problem2}, we obtain two ordinary least squares problems~\eqref{linear_Bias_Eli} and~\eqref{Noise_estimate_linear}, which yield the Bias-Eli-Lin estimator and the Noise-Est-Lin estimator, respectively. In virtue of their closed-form expressions, we prove that both estimators are also $\sqrt{m}$-consistent. In addition, we study their biases and MSEs in the finite sample case, showing that they are unbiased and have the same MSEs.
\end{enumerate}

We remark that our work brings deeper insights into the real applications of TOA-based localization. The consistency and asymptotic normality of the LS estimator, which highly depend on the deployment of sensors, have been overlooked and not yet fully discussed in most of the literature. In addition, we show that some ordinary least squares solutions obtained by discarding constraints of original optimization problems own some elegant properties, e.g., unbiasedness and consistency, which is usually neglected in the existing works. Based on the theoretical developments in this paper, one can appropriately deploy the sensors and adopt computationally efficient algorithms to obtain an estimate which converges to the true object's coordinates with the minimum variance.

The rest of the paper is organized as follows. 
In Section~\ref{model_and_LS_formulation}, we introduce the range measurement model and the formulations of the LS and S-LS problems. 
In Section~\ref{property_of_LS_estimator}, we give some assumptions and show the consistency and asymptotic normality of the LS estimator. 
In Section~\ref{two_step_estimators}, we introduce a two-step estimation scheme to achieve the same asymptotic properties as the LS estimator and focus on devising $\sqrt{m}$-consistent estimators in the first step.
In Section~\ref{finite_sample_section}, we derive the biases and MSEs of the proposed first-step estimators which have closed-form expressions in the finite sample case. 
Simulation results are presented in Section~\ref{simulations}, followed by conclusions in Section~\ref{conclusion}.

{\bf Notations:} For a vector $x \in \mathbb R^n$, $[x]_i,i=1,\ldots,n$, presents the $i$-th element of $x$. For two vectors $x,y \in \mathbb R^n$, $x \preceq y$ denotes the pointwise inequality. Let $p=(p_i)_{i\in\mathbb N}$ and $q=(q_i)_{i\in\mathbb N}$ be two sequences of real numbers. When ${i\in\mathbb N}$ is clear from the context, we will omit the subscript and write $(p_i)$ as  a shorthand of  $(p_i)_{i\in\mathbb N}$. If $t^{-1} \sum_{i=1}^{t} p_i q_i$ converges to a real number its limit $\left\langle p,q \right\rangle_t $ will be called the tail product of $p$ and $q$.
We call $\|p\|_t=\sqrt{\langle p,p
	\rangle_t}$,
if it exists, the tail norm of $p$. For a sequence $p=(p_i)$ and a scalar $a$, $p-a$ produces a sequence, of which the $i$-th element is $p_i-a$.
For a cumulative distribution function $F_{\mu}$, $\mu$ is the measure induced from $F_{\mu}$.

\section{Localization Problem with Range Measurements} \label{model_and_LS_formulation}

\subsection{Range Measurement and Least Squares (LS) Problem}
Let $x^o \in \mathbb R^n$ be the coordinates of the object 
(target to be localized) and $a_i \in \mathbb R^n, i=1,\ldots ,m$ be the coordinates of the sensors. The measured distance from the object to the $i$-th sensor is denoted as $d_i$,
which consists of the true range plus a measurement noise as shown in the following equation:
\begin{equation}\label{range_measurement_model}
d_{i}=\|a_i-x^o\|+r_{i}.
\end{equation}The geometry of the range measurement model is illustrated in Fig.~\ref{range_measurements}.
In the following, we make an assumption on the statistics of the
measurement noises:
\begin{assumption}\label{assumption:Gaussian_noise}
	The $r_{i}$'s are i.i.d. Gaussian noises with 0 mean and finite variance $\sigma^2$.
\end{assumption}

It is desired to find an estimate of the true position $x^o$ such that the estimate fits the collected measurements best in the least squares sense.  A least squares (LS) estimate of $x^o$ based on the measurements $d_1,\ldots, d_m$ is an optimal solution to the following problem:
\begin{equation}\label{LS}
\hbox{(\textbf{LS}):}~~~~\mathop{\rm minimize}_{x\in\mathbb R^n} \sum_{i=1}^{m}  \left(\|a_i-x\|-d_{i}\right)^2.
\end{equation}
The LS problem~\eqref{LS} is nonsmooth and nonconvex, and finding its global minimum is not a simple task. There are basically two directions to do so. The first one attempts to do the global exploration through grid search and random search techniques, e.g., the particle swarm optimization~\cite{chuang2008effective}. The other one corresponds to the local search which needs a good initial estimate, e.g., the Gauss-Newton method~\cite{wu2019adaptive}. There are also some convex relaxation methods that seek solutions to the resulting relaxed problems~\cite{cheung2004accurate,beck2008exact}. However, we remark that all of the aforementioned methods cannot guarantee the global minimum of~\eqref{LS}, and therefore fail in finding an LS estimate of $x^o$.
\begin{figure}[!htb]
	\centering
	\includegraphics[width=0.26\textwidth]{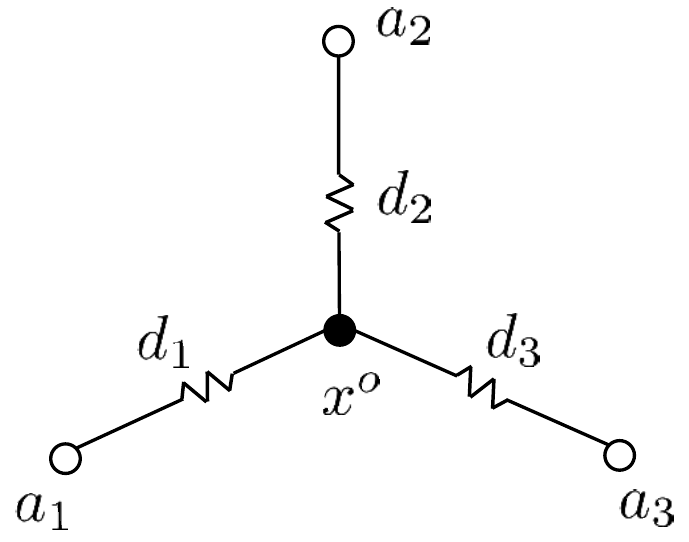}
	\caption{Illustration of $2$D range measurements of a sensor network consisting of three sensors. The solid dot ``$\bullet$'' represents the object, whose coordinate is $x^o$, and the hollow dots ``$\circ$'' represent range sensors, whose coordinate are $a_i$'s. The range measurement (jagged line means that the measurements are noisy) between the $i$-th sensor and the object is denoted as $d_i$.}
	\label{range_measurements}
\end{figure}

\subsection{Squared Range Model and Some Existing Localization Methods}

As mentioned in the previous subsection, due to the nonsmooth and nonconvex properties, the global minimum of~\eqref{LS} is hard to seek. An alternative way is to study a model that is a variation of~\eqref{range_measurement_model}. To do so, 
by squaring~\eqref{range_measurement_model}, it gives the squared range model as follows:
\begin{equation}\label{squared_range_measurements}
d_{i}^2=\|a_i-x^o\|^2+e_{i}, 
\end{equation}
where $e_{i}=2\|a_i-x^o\| r_{i}+r_{i}^2$ are the error terms. Then the squared least squares (S-LS) problem is constructed as:
\begin{equation}\label{S-LS}
\hbox{(\textbf{S-LS}):}~~~~\mathop{\rm minimize}_{x\in\mathbb R^n} \sum_{i=1}^{m} \left(\|a_i-x\|^2-d_{i}^2\right)^2.
\end{equation}

The S-LS problem is still nonconvex. There have been extensive studies on the squared range model. Representative methods include linear correction~\cite{huang2001real}, quadratic-term elimination~\cite{li2004least}, Lagrange multiplier~\cite{cheung2004least,qi2013lagrangian}, etc, resulting in a variety of basic localization algorithms.  Among these works, Beck \textit{et al.}~\cite{beck2008exact} equivalently transformed the S-LS problem into a generalized trust region subproblem (GTRS) which consists of a quadratic objective function and a quadratic equality constraint. In virtue of the necessary and sufficient condition of GTRS solutions derived in~\cite{more1993generalizations}, a global minimizer seeking algorithm was proposed in~\cite{beck2008exact}. Some other works~\cite{cheung2004least,li2004least} discarded the quadratic constraint, which gives rise to the unconstrained LS (ULS) problem with a closed-form solution.  

For the LS problem~\eqref{LS}, it is not clear from our literature survey that under what conditions the LS solution is consistent and asymptotically normal. Besides, we know from~\eqref{squared_range_measurements} that the mean of $e_i$ is $\sigma^2$ (not $0$), which makes the S-LS estimator biased, even in the asymptotic case where $m$ goes to infinity (this will be shown in Fig.~\ref{asymptotic:Bias_S_LS} in the simulations). Therefore the S-LS estimator is not consistent, and it is not feasible to construct an estimator that ``equals'' the LS one asymptotically based on the S-LS solution. All these lead to the interests of this paper.

\subsection{Problems of Interest}
\textit{First, we are interested in offering some readily-checked conditions to ensure the consistency and asymptotic normality of the LS estimator.} Moreover, observing that the global minimum of the LS problem~\eqref{LS} is hard to seek, and the existing localization algorithms (to the best of our knowledge) cannot attain the same (asymptotic) statistical performance as the LS estimator, \textit{another issue of this paper is to devise realizable estimators that possess the same asymptotic statistical properties as the LS estimator, i.e., an estimator achieves the CRLB in the large sample case.}

\section{Asymptotic Efficiency of the LS Estimator} \label{property_of_LS_estimator}

In this section, we will show that under some readily-checked assumptions, the LS estimator which solves~\eqref{LS} is strongly consistent and asymptotically normal. 
For model~\eqref{range_measurement_model}, we make the following assumption on the positions of the target and sensors.

\begin{assumption} \label{compact_set_interior}
	The true object's position $x^o$ is an interior point of a compact set $\mathcal X$. The positions of sensors $a_i$'s belong to a bounded set $\mathcal A$, and for each $i$, $a_i \notin \mathcal X$.
\end{assumption}

The condition that $x^o$ is an interior point of a compact set $\mathcal X$ is mild in practice. For example, it holds when the target is within the convex hull of the set of sensors, in which case the convex hull is identified as the compact set $\mathcal X$. Next, we give a definition of the sample distribution function and make an assumption on the sample distribution function of sensors' coordinates.

\begin{definition}
	The sample distribution function $F_m$ of a sequence $(z_1,z_2,\ldots)$ in $\mathbb R^n$ is defined as $F_m(z)=\#/m$ where $\#$ is the number of vectors in the subsequence $(z_1,\ldots,z_m)$ that satisfy $z_i \preceq z$.
\end{definition}
\begin{assumption} \label{convergence_of_sample_distribution}
	The sample distribution function $F_m$ of the sequence $(a_1,a_2,\ldots)$ converges to a distribution function $F_{\mu}$, i.e., $\lim\limits_{m \rightarrow \infty} F_m(a)=F_{\mu}(a)$, for all $a \in \mathbb R^n$ at which $F_{\mu}$ is continuous. We denote the probability measure generated by $F_{\mu}$ as $\mu$.
\end{assumption}

In what follows, we give two examples of sensor deployment that satisfy Assumption~\ref{convergence_of_sample_distribution}. 
\begin{example} \label{example_random_vector}
	When $a_i, i=1,\ldots,m$ are independent realizations of some random vectors with identical distribution function $F_{\mu}$, we have $\lim\limits_{m \rightarrow \infty} F_m(a)=F_{\mu}(a)$ for all $a \in \mathbb R^n$. 
\end{example}

\begin{example} \label{example_fix_sensors}
	Suppose the number of sensors $M$ is fixed, and each sensor makes $T$ i.i.d. measurements. In this manner, totally $MT$ TOA measurements can be used. This setting is realistic when the object is static or the sampling of the TOA measurements is sufficiently fast compared to the object motion. In this setup, as $T$ goes to infinity, $F_m$ converges to $F_{\mu}$, where $\mu(a_i)=1/M$ for any $i$.
\end{example}

Let $f_i(x):=\|a_i-x\|$ and $f(x):=\left(f_i(x) \right) $.
Based on Assumptions~\ref{compact_set_interior} and \ref{convergence_of_sample_distribution}, the tail norm $\left\| f(x)-f(x^o) \right\|_t$ is well defined. See detailed arguments in Appendix~\ref{existence_of_tail_product_norm}. 
In the following assumption, we suppose the configuration of the sensor network is elaborated.
\begin{assumption} \label{unique_solution_assumption}
	The function $\left\| f(x)-f(x^o) \right\|^2_t$ has a unique minimum at $x=x^o$. Or equivalently, we say that the target is asymptotically uniquely localizable.
\end{assumption}
In finite sample cases where $m$ is fixed, the concept of unique localizability means that $\sum_{i=1}^{m} \left(f_i(x)-f_i(x^o) \right)^2 $ has a unique minimum at $x=x^o$. A sufficient condition for unique localizability is that the sensors are in general positions, i.e., they do not locate on a line for $2$D localization, or do not scatter in a plane for $3$D localization~\cite{aspnes2006theory}. Note that the asymptotically unique localizability in Assumption~\ref{unique_solution_assumption} means that $\lim\limits_{m \rightarrow \infty} m^{-1}\sum_{i=1}^{m} \left(f_i(x)-f_i(x^o) \right)^2 $ has a unique minimum at $x=x^o$. It is a notion related to the whole sequence $(f_1,\ldots,f_m,\ldots)$, but independent of any finite number of $f_i$.  Therefore, any unique localizability property for any finite $m$ does not necessarily imply asymptotically unique localizability. Here we give an example.
\begin{example} \label{example_of_asymptotic_localizability}
	As shown in Fig.~\ref{figure_of_asymptotic_localizability}, there are three sensors denoted as $a_1$, $a_2$, and $a_3$. Suppose $a_1$ and $a_2$ each makes $T$ measurements, and $a_3$ makes only one measurement. Since $a_1$, $a_2$, and $a_3$ are in general positions, the target $x^o$ is uniquely localizable. However, $\mu(a_3)$ decays to $0$ as $T \rightarrow \infty$, making that $\left\| f(x)-f(x^o) \right\|^2_t$ has two minima: one is the true position $x^o$, the other is the red point. Hence, the target $x^o$ is not asymptotically uniquely localizable.
	\begin{figure}[!htb]
		\centering
		\includegraphics[width=0.3\textwidth]{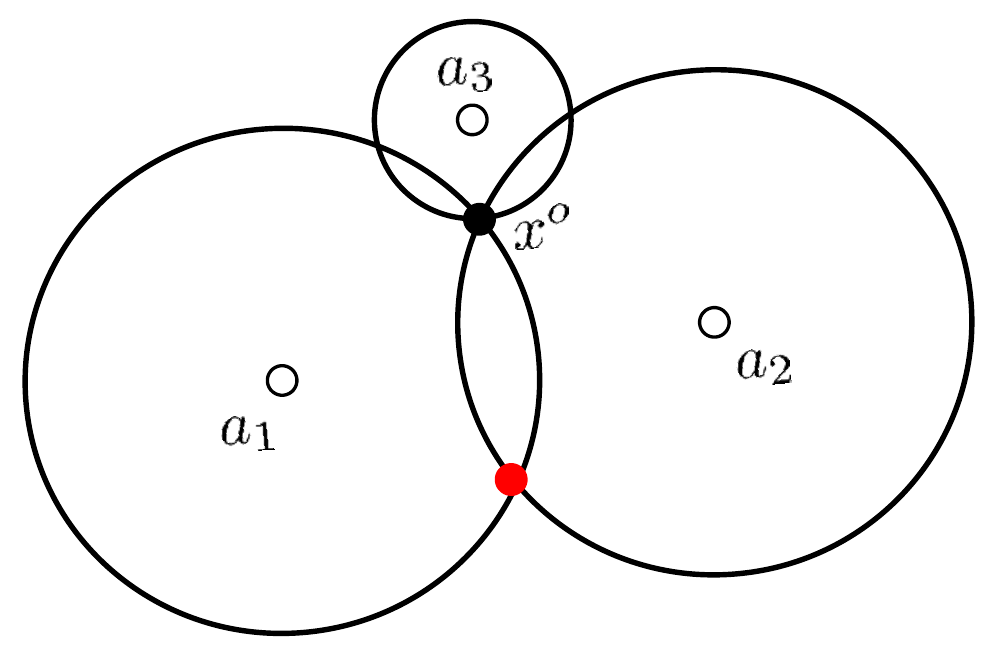}
		\caption{An example of unique localizability not implying asymptotically unique localizability.}
		\label{figure_of_asymptotic_localizability}
	\end{figure}
\end{example}

\begin{assumption} \label{nonsingular_assumption}
	There does not exist a line $\mathcal S$ for 2D localization (a plane $\mathcal S$ for 3D localization) such that $x^o \in \mathcal S$ and $\mu(\mathcal S)=1$.
\end{assumption}

Note that when the unknown $x^o$ is fixed, Assumptions~\ref{unique_solution_assumption} and~\ref{nonsingular_assumption} are essentially the assumptions on sensor deployment. In the following, we make a unified assumption on sensor deployment to guarantee Assumptions~\ref{unique_solution_assumption} and~\ref{nonsingular_assumption}. We remark that this unified assumption generally holds in real localization scenarios and can be readily checked. 

	\begin{assumption} \label{deployment_of_sensor}
		There does not exist a line $\mathcal S$ for 2D localization (a plane $\mathcal S$ for 3D localization) such that $\mu(\mathcal S)=1$.
	\end{assumption}

Given Assumption~\ref{deployment_of_sensor}, Assumption~\ref{nonsingular_assumption} holds straightforwardly. Next, we give a lemma to show that Assumption~\ref{deployment_of_sensor} implies Assumption~\ref{unique_solution_assumption}. 
\begin{lemma} \label{Assumption_6_to_4}
	Given Assumption~\ref{deployment_of_sensor}, the target is asymptotically uniquely localizable. 
\end{lemma}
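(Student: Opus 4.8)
The plan is to translate the tail norm into an integral against the limiting measure $\mu$ and then exploit the geometric structure of the zero set of the integrand. First I would use Assumption~\ref{convergence_of_sample_distribution} together with Assumption~\ref{compact_set_interior} (which confines the $a_i$ to the bounded set $\mathcal A$) to write the tail norm as
\[
\left\| f(x)-f(x^o) \right\|^2_t = \int \bigl(\|a-x\|-\|a-x^o\|\bigr)^2\, d\mu(a) =: g(x),
\]
obtained by passing $m\to\infty$ in $m^{-1}\sum_{i=1}^m \bigl(\|a_i-x\|-\|a_i-x^o\|\bigr)^2$. The integrand is continuous in $a$ and, since $\mu$ is supported on the bounded closure of $\mathcal A$, bounded there; hence the weak convergence $F_m \to F_\mu$ justifies passing the limit inside the integral, whose existence is already guaranteed by Appendix~\ref{existence_of_tail_product_norm}.

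Since the integrand is nonnegative, $g(x)\ge 0$ for every $x$, and clearly $g(x^o)=0$, so $x^o$ is a global minimizer of $g$. The remaining task is uniqueness. I would argue by contradiction: suppose some $x\ne x^o$ also attains $g(x)=0$. Because the integrand is nonnegative and its $\mu$-integral vanishes, it must be zero $\mu$-almost everywhere, i.e.\ $\|a-x\|=\|a-x^o\|$ for $\mu$-almost every $a$.

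Next I would identify this zero set geometrically. Expanding $\|a-x\|^2=\|a-x^o\|^2$ yields the linear equation $2a^\top(x^o-x)=\|x^o\|^2-\|x\|^2$, which (since $x\ne x^o$ implies $x^o-x\ne 0$) describes a hyperplane $\mathcal S$ — a line in 2D and a plane in 3D — namely the perpendicular bisector of the segment joining $x$ and $x^o$. The almost-everywhere equality above then forces $\mu(\mathcal S)=1$, directly contradicting Assumption~\ref{deployment_of_sensor}. Hence no such $x$ exists, $x^o$ is the unique minimizer of $\left\| f(x)-f(x^o) \right\|^2_t$, and this is precisely the asymptotic unique localizability asserted in Assumption~\ref{unique_solution_assumption}.

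The main obstacle is the first step: rigorously justifying the integral representation of the tail norm. The definition of weak convergence only delivers $\int h\,dF_m \to \int h\, dF_\mu$ for bounded continuous $h$, so I must lean on Assumption~\ref{compact_set_interior} to bound the integrand uniformly over the support before invoking the limit. Once that representation is secured, the geometric identification of the bisector hyperplane and the contradiction with Assumption~\ref{deployment_of_sensor} are routine.
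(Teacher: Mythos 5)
Your proof is correct and follows essentially the same route as the paper's: both express the tail norm as $\mathbb{E}_{a\sim\mu}\left[\left(\|a-x\|-\|a-x^o\|\right)^2\right]$, argue by contradiction that a second minimizer would force the perpendicular bisector of the segment joining $x$ and $x^o$ (a line in 2D, a plane in 3D) to carry full $\mu$-measure, and conclude this violates Assumption~\ref{deployment_of_sensor}. Your extra care in justifying the limit via weak convergence and boundedness is exactly what the paper delegates to Appendix~\ref{existence_of_tail_product_norm}, so there is no substantive difference.
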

\begin{proof}
	By definition, we have 
	\begin{equation*}
	\left\| f(x)-f(x^o) \right\|^2_t=\mathbb E_{a \sim \mu}\left[\left(\|a-x\|-\|a-x^o\| \right) ^2 \right], 
	\end{equation*}
	where $\mathbb E_{a \sim \mu}$ is taken over $a$ with respect to $\mu$, and $\left\| f(x^o)-f(x^o) \right\|^2_t=0$. For any $x \in \mathcal X$, define $\mathcal A_x=\{a\in \mathcal A \mid \|a-x\|=\|a-x^o\|\}$. Suppose there is an $x' \neq x^o$ such that $\mathbb E_{a \sim \mu}\left[\left(\|a-x'\|-\|a-x^o\| \right) ^2 \right]=0$. Then, $\mu(\mathcal A_{x'})=1$. Note that $\mathcal A_{x'}$ is the vertical bisector of the segment connecting $x^o$ and $x'$. Hence $\mathcal A_{x'}$ is a line in 2D case and a plane in 3D case. This contradicts Assumption~\ref{deployment_of_sensor}. Hence, $\left\| f(x)-f(x^o) \right\|^2_t$ has a unique minimum at $x=x^o$.
\end{proof}

Let $f_j'(x)=\left( f_{ji}'(x)\right) $ and $f_{jk}''(x) =\left( f_{jki}''(x)\right) $ be two sequences with respect to $i$, where 
\begin{equation*}
f_{ji}'(x)=\frac{\partial f_i(x)}{\partial [x]_j}~~\text{and}~~f_{jki}''=\frac{\partial^2 f_i(x)}{\partial [x]_j \partial [x]_k}.
\end{equation*}
Denote the optimal solution to the LS problem~\eqref{LS} as $\hat x^{\rm LS}_{m}$. Under Assumptions~\ref{assumption:Gaussian_noise}-\ref{nonsingular_assumption} or Assumptions~\ref{assumption:Gaussian_noise}-\ref{convergence_of_sample_distribution},\ref{deployment_of_sensor}, the LS estimator $\hat x^{\rm LS}_{m}$ enjoys the following consistency and asymptotic normality. The proof of Theorem~\ref{theorem_LS} is presented in Appendix~\ref{proof_theorem_ls}.

\begin{theorem}[Consistency and asymptotic normality] \label{theorem_LS}
	Under Assumptions~\ref{assumption:Gaussian_noise}-\ref{nonsingular_assumption} or Assumptions~\ref{assumption:Gaussian_noise}-\ref{convergence_of_sample_distribution},\ref{deployment_of_sensor}, we have $\hat x^{\rm LS}_{m} \rightarrow x^o$ with probability one as $m \rightarrow \infty$. Moreover,
	\begin{equation}\label{asymptotic_normality_LS}
	\sqrt{m}(\hat x^{\rm LS}_{m}-x^o) \rightarrow \mathcal N(0,\sigma^2 M^{-1}(x^o))~~\text{as}~m \rightarrow \infty,
	\end{equation}
	where $M(x) \in \mathbb R^{n \times n}$ is a matrix of which $M_{jk}(x)=\left\langle f_j'(x),f_k'(x)\right\rangle_t $.
\end{theorem}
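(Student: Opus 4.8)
The plan is to treat $\hat x^{\rm LS}_m$ as an M-estimator of the criterion $Q_m(x):=m^{-1}\sum_{i=1}^m(\|a_i-x\|-d_i)^2$ and to follow the classical two-stage route: first establish strong consistency from a uniform law of large numbers together with a well-separated unique minimizer, then obtain asymptotic normality by expanding the stationarity condition. Throughout I would use $g_i(x):=\nabla_x\|a_i-x\|=(x-a_i)/\|a_i-x\|$, whose $j$-th component is $f_{ji}'(x)$, and $S_m:=mQ_m$.

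For consistency I would substitute $d_i=f_i(x^o)+r_i$ and expand,
\begin{equation*}
Q_m(x)=\frac{1}{m}\sum_{i=1}^m\big(f_i(x)-f_i(x^o)\big)^2-\frac{2}{m}\sum_{i=1}^m\big(f_i(x)-f_i(x^o)\big)r_i+\frac{1}{m}\sum_{i=1}^m r_i^2 .
\end{equation*}
By the tail-norm definition and Assumption~\ref{convergence_of_sample_distribution} the first term converges to $\|f(x)-f(x^o)\|_t^2$ (the integrand $(\|a-x\|-\|a-x^o\|)^2$ is bounded and continuous on the bounded set $\mathcal A$); the last term tends to $\sigma^2$ a.s.\ by the strong law; and the cross term is an average of independent zero-mean variables with uniformly bounded weights (each $f_i$ is $1$-Lipschitz and $\mathcal X$ is bounded), hence vanishes a.s.\ by Kolmogorov's criterion. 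Thus $Q_m(x)\to\|f(x)-f(x^o)\|_t^2+\sigma^2=:Q(x)$ pointwise. I would upgrade this to \emph{uniform} convergence on the compact set $\mathcal X$: since $x\mapsto f_i(x)$ are equi-Lipschitz and the cross term has an a.s.-bounded random Lipschitz constant $\le 2m^{-1}\sum_i|r_i|$, the family $\{Q_m\}$ is equicontinuous, so pointwise a.s.\ convergence on a countable dense set extends uniformly. As $Q$ has its unique minimizer at $x^o$ by Lemma~\ref{Assumption_6_to_4} (equivalently Assumption~\ref{unique_solution_assumption}), the standard argmin-consistency argument delivers $\hat x^{\rm LS}_m\to x^o$ almost surely.

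For asymptotic normality, consistency together with interiority of $x^o$ (Assumption~\ref{compact_set_interior}) ensures $\nabla_x S_m(\hat x^{\rm LS}_m)=0$ for large $m$, and a mean-value expansion of the gradient gives
\begin{equation*}
\sqrt{m}\,(\hat x^{\rm LS}_m-x^o)=-\Big[\tfrac{1}{m}\nabla^2_x S_m(\tilde x_m)\Big]^{-1}\,\tfrac{1}{\sqrt m}\,\nabla_x S_m(x^o),
\end{equation*}
with $\tilde x_m$ on the segment between $\hat x^{\rm LS}_m$ and $x^o$. Since $\|a_i-x^o\|-d_i=-r_i$, the score reduces to $\tfrac{1}{\sqrt m}\nabla_x S_m(x^o)=-\tfrac{2}{\sqrt m}\sum_{i=1}^m r_i\,g_i(x^o)$; because the $r_i$ are i.i.d.\ Gaussian (Assumption~\ref{assumption:Gaussian_noise}) this is, for every $m$, exactly a centered Gaussian with covariance $\tfrac{4\sigma^2}{m}\sum_i g_i(x^o)g_i(x^o)^\top$, whose $(j,k)$ entry converges to $4\sigma^2 M_{jk}(x^o)$, so the score converges to $\mathcal N(0,4\sigma^2 M(x^o))$ with no Lindeberg check. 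Differentiating twice, $\tfrac{1}{m}\nabla^2_{jk}S_m(x^o)=\tfrac{2}{m}\sum_i f_{ji}'f_{ki}'-\tfrac{2}{m}\sum_i r_i f_{jki}''$, whose Gram part tends to $2M_{jk}(x^o)$ and whose noise-weighted part vanishes a.s.; combined with $\tilde x_m\to x^o$ and equicontinuity of the second-derivative arrays, $\tfrac{1}{m}\nabla^2_x S_m(\tilde x_m)\to 2M(x^o)$. Assumption~\ref{nonsingular_assumption} (or Assumption~\ref{deployment_of_sensor}) makes $M(x^o)\succ 0$, since $v^\top M(x^o)v=\int(v^\top g(a))^2\,d\mu=0$ would force $\mu$ onto the hyperplane $\{a:v^\top a=v^\top x^o\}$ through $x^o$. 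Slutsky's theorem then yields $\sqrt m(\hat x^{\rm LS}_m-x^o)\to\mathcal N\big(0,(2M)^{-1}(4\sigma^2 M)(2M)^{-1}\big)=\mathcal N(0,\sigma^2 M^{-1}(x^o))$.

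The main obstacle I anticipate is the behavior near the singularities of $f_i$: the map $\|a_i-x\|$ is smooth only away from $a_i$, and its Hessian (hence $f_{jki}''$) scales like $1/\|a_i-x\|$, so the equi-Lipschitz control of $g_i$ and the uniform boundedness of the curvature term required above need $\inf_{i,\,x\in\mathcal X}\|a_i-x\|>0$. Assumption~\ref{compact_set_interior} gives $a_i\notin\mathcal X$ for each $i$, but securing a \emph{uniform} separation over all $i$ (most naturally a positive distance between $\mathcal A$ and $\mathcal X$) is the delicate point; once that uniformity is in place, the pointwise limits, the argmin-consistency lemma, and the final Slutsky step are routine.
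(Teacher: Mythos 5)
Your proposal is correct in substance, but it takes a genuinely different route from the paper. The paper's proof (Appendix~\ref{proof_theorem_ls}) is citation-based: it rephrases the four conditions (a)--(d) of Jennrich's classical theorem on nonlinear least squares, checks that Assumptions~\ref{assumption:Gaussian_noise}--\ref{nonsingular_assumption} imply them, and invokes that theorem; the only substantive new work is the existence of the tail products (via the Helly--Bray theorem, Appendix~\ref{existence_of_tail_product_norm}) and the positive definiteness of $M(x^o)$ under Assumption~\ref{nonsingular_assumption}, which the paper proves by splitting $\mu$ across the hyperplane $\{a : \langle x, a-x^o\rangle = 0\}$ --- essentially the same argument as your $v^\top M(x^o)v = \int (v^\top g(a))^2\,d\mu = 0$ contradiction. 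You instead re-derive the M-estimation asymptotics from scratch: uniform law of large numbers plus argmin consistency for the first claim, then stationarity, mean-value expansion, convergence of the Hessian, and Slutsky for the second. Your route is longer but self-contained, and it buys two things: (i) the Gaussianity of the $r_i$ makes the normalized score \emph{exactly} Gaussian for every $m$, so the CLT step needs no Lindeberg-type verification, whereas Jennrich's theorem is built to handle general i.i.d.\ noise; (ii) it makes explicit a regularity requirement that the paper's proof hides --- your closing observation that one needs $\inf_{i,\,x\in\mathcal X}\|a_i-x\|>0$ is well taken, since Assumption~\ref{compact_set_interior} ($a_i\notin\mathcal X$ pointwise, $\mathcal A$ bounded) does not by itself prevent sensors from accumulating at the boundary of $\mathcal X$. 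Note that this is not a defect of your proof relative to the paper's: the paper's Appendix~\ref{existence_of_tail_product_norm} asserts that $\frac{x-a}{\|a-x\|}$ and $\frac{I_n}{\|a-x\|}-\frac{(x-a)(x-a)^\top}{\|a-x\|^3}$ are bounded on $\mathcal A$ uniformly for $x\in\mathcal X$, which needs exactly the same separation; both proofs require it, and only yours says so.
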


		The Fisher information matrix $F$ of model~\eqref{LS} is given in~\eqref{Fisher_information_matrix}, and the CRLB is ${\rm tr}(F^{-1})$. From~\eqref{asymptotic_normality_LS} we have that the covariance of $\hat x^{\rm LS}_{m}$ converges to $\frac{\sigma^2}{m} M^{-1}(x^o)$. Further combining the definition of $M(x)$, it holds that $\lim\limits_{m \rightarrow \infty} \frac{\sigma^2}{m} M^{-1}(x^o)=F^{-1}$, which implies that the LS estimator $\hat x^{\rm LS}_{m}$ is asymptotically efficient.

\begin{remark} \label{compared_with_asymptotic_literature}
	Larsson and Danev~\cite{larsson2009accuracy} derived the asymptotic performance of the LS estimator using the setup in Example~\ref{example_fix_sensors}. The analysis therein uses the consistency of the LS estimator and a first-order Taylor expansion to calculate the mean squared error. However, the conditions that guarantee consistency are not specifically discussed. We remark that Assumption~\ref{convergence_of_sample_distribution} in this paper, which includes Example~\ref{example_fix_sensors} as a special case, is much less conservative, therefore covering a wider range of localization scenarios. 
\end{remark}

\section{Global and Asymptotically Efficient Localization} \label{two_step_estimators}
\subsection{Two-step Method in Nonlinear Least-Squares Estimation}

The LS estimator involves seeking the global minimum of the nonconvex LS problem~\eqref{LS}. When using the gradient-based optimization methods, it requires that the starting point is within the attraction neighborhood of the optimal solution. Thus, the gradient-based optimization algorithm is generally applied to improve the precision when a good initial estimate, which is close to the global minimizer, has been obtained since the objective function in~\eqref{LS} can be locally approximated as a convex function in a small neighborhood of the global minimizer. 
Therefore, the search of the LS estimator $\hat x^{\rm LS}_{m}$ is often done in two steps~\cite{gourieroux1995statistics}: 

\vspace{2mm}
\noindent\fbox{%
	\parbox{0.47\textwidth}{%
		{\bf Step 1}. Determine a consistent but not necessarily precise estimate.
		
		{\bf Step 2}. Use this preliminary estimate as an initial value for some algorithms that determine the LS estimator.
	}%
}
\vspace{1mm}

In Step 2, the Gauss-Newton (GN) algorithm or other local methods are commonly used for improving the accuracy of the consistent estimate obtained in Step 1. The GN algorithm has the following iterative form:
\begin{equation}\label{GN_iteration}
x_{k+1}=x_k+\left(J^\top (x_k) J(x_k)\right)^{-1}  J^\top (x_k)(d-\bar f(x_k)),
\end{equation} 
where 
\begin{align*}
d&=[d_1,\ldots,d_m]^\top, \\
\bar f(x_k)&=\left[ f_1(x_k),\ldots,f_m(x_k)\right] ^\top, \\
J(x_k)&=\left[\frac{x_k-a_1}{\|x_k-a_1\|},\ldots,\frac{x_k-a_m}{\|x_k-a_m\|}\right]^\top.
\end{align*}
The initial value of the GN iteration $x_0$ is the consistent estimate obtained in Step 1.
	Specifically, from the definition of $J(x)$ and Assumption~\ref{convergence_of_sample_distribution},
	\begin{equation*}
	\frac{J^\top (x) J(x)}{m} \rightarrow H(x):=\mathbb E_{a \sim \mu} \left[ \frac{(x-a)(x-a)^\top}{\|x-a\|^2}\right]. 
	\end{equation*}
	Given Assumption~\ref{deployment_of_sensor}, $H(x)$ is nonsingular for any $x \in \mathbb R^n$. Therefore, the matrix $J^\top(x)J(x)$ is invertible almost surely in a large sample case.
The standard two-step architecture given above has the attractive property that the resulting two-step estimator has the same asymptotic property as the original LS one~\cite{lehmann2006theory}. Note that the conclusion in~\cite{lehmann2006theory} is for general point estimation problems. Here we rephrase it in the context of our formulation. Before that, we give the definition of stochastic boundedness.
\begin{definition}
	The notation $X_m=O_p(a_m)$ means that the set of values $X_m/a_m$ is stochastically bounded. That is, for any $\epsilon >0$, there exists a finite $M$ and a finite $N$ such that $P\left(|X_m/a_m|>N \right)<\epsilon $ for any $m>M$.
\end{definition}
\begin{theorem}[ {\cite[Theorem 4.3]{lehmann2006theory}}] \label{theorem_two_step}
	Suppose that $\hat x_{m}$ is a $\sqrt{m}$-consistent estimate of $x^o$, i.e., $\hat x_{m}-x^o=O_p(1/\sqrt{m})$. Denote the one-step GN iteration of $\hat x_{m}$ by $\hat x^{\rm GN}_m$, i.e.,
	\begin{equation*}
	\hat x^{\rm GN}_m=\hat x_{m}+\left(J^\top (\hat x_{m}) J(\hat x_{m})\right)^{-1}  J^\top (\hat x_{m})(d-\bar f(\hat x_{m})).
	\end{equation*} 
	Then, under Assumptions~\ref{assumption:Gaussian_noise}-\ref{nonsingular_assumption} or Assumptions~\ref{assumption:Gaussian_noise}-\ref{convergence_of_sample_distribution},\ref{deployment_of_sensor}, we have 
	\begin{equation*}
	\hat x^{\rm GN}_m-\hat x^{\rm LS}_{m}=o_p(1/\sqrt{m}).
	\end{equation*}
	That is $\hat x^{\rm GN}_m$ has the same asymptotic property that $\hat x^{\rm LS}_{m}$ possesses. 
\end{theorem}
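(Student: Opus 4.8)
The plan is to show that both $\hat x^{\rm GN}_m$ and $\hat x^{\rm LS}_m$ admit the same first-order (Bahadur-type) stochastic expansion about $x^o$, namely
\begin{equation*}
\hat x^{\rm GN}_m - x^o = \bigl(J_0^\top J_0\bigr)^{-1} J_0^\top r + o_p(1/\sqrt m), \qquad \hat x^{\rm LS}_m - x^o = \bigl(J_0^\top J_0\bigr)^{-1} J_0^\top r + o_p(1/\sqrt m),
\end{equation*}
where $J_0 := J(x^o)$ and $r := d - \bar f(x^o)$ is the noise vector with i.i.d.\ $\mathcal N(0,\sigma^2)$ entries; subtracting the two then yields the claim. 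Throughout I would use that, since $a_i \notin \mathcal X$ while $\mathcal A$ is bounded and $\mathcal X$ is compact (Assumption~\ref{compact_set_interior}), $\inf_i \|a_i - x\|$ stays bounded away from zero on a neighborhood of $x^o$, so each $f_i$ is smooth there with first, second and third derivatives bounded uniformly in $i$. I would also use $\tfrac1m J(x)^\top J(x) \to H(x)$ with $H(x^o)$ nonsingular (the paragraph following~\eqref{GN_iteration}), so that $\bigl(J(\hat x_m)^\top J(\hat x_m)\bigr)^{-1} = \tfrac1m\bigl(H(x^o)^{-1} + o_p(1)\bigr)$ whenever the argument is $\sqrt m$-consistent.

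First I would expand the residual $d - \bar f(\hat x_m)$ about $x^o$. Writing $\delta := \hat x_m - x^o = O_p(1/\sqrt m)$ and Taylor-expanding each coordinate $f_i$ to second order gives $d - \bar f(\hat x_m) = r - J_0\delta - \tfrac12 q$, where $[q]_i = \delta^\top \nabla^2 f_i(\xi_i)\delta = O_p(1/m)$ uniformly in $i$. Substituting into the Gauss--Newton update and abbreviating $J := J(\hat x_m)$,
\begin{equation*}
\hat x^{\rm GN}_m - x^o = \delta + (J^\top J)^{-1} J^\top r - (J^\top J)^{-1} J^\top J_0 \delta - \tfrac12 (J^\top J)^{-1} J^\top q.
\end{equation*}
Because $J - J_0 = O_p(1/\sqrt m)$ entrywise, one has $(J^\top J)^{-1} J^\top J_0 = I + (J^\top J)^{-1} J^\top (J_0 - J) = I + O_p(1/\sqrt m)$, so the third term equals $\delta + O_p(1/m)$ and cancels the leading $\delta$; the quadratic term is $O_p(1/m)$ since $J^\top q = O_p(1)$ and $(J^\top J)^{-1} = O_p(1/m)$. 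This leaves $\hat x^{\rm GN}_m - x^o = (J^\top J)^{-1} J^\top r + O_p(1/m)$.

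The step I expect to be the crux is replacing $J = J(\hat x_m)$ by $J_0$ inside $(J^\top J)^{-1} J^\top r$. The inverse factor is handled by continuity, so it reduces to showing $(J(\hat x_m) - J_0)^\top r = o_p(\sqrt m)$. Here the naive bound $\|(J(\hat x_m)-J_0)^\top r\| \le L\|\delta\| \sum_i |r_i| = O_p(1/\sqrt m)\cdot O_p(m) = O_p(\sqrt m)$ is \emph{not} good enough, since it discards the mean-zero structure of $r$. Instead I would expand $J_{ij}(\hat x_m) - J_{ij}(x^o)$ to first order, so that $(J(\hat x_m)-J_0)^\top r = \sum_k\bigl(\sum_i \partial_k J_{ij}(x^o)\, r_i\bigr)[\delta]_k + (\text{remainder})$: the inner sum is mean zero with variance $O(m)$, hence $O_p(\sqrt m)$, and times $[\delta]_k = O_p(1/\sqrt m)$ it is $O_p(1)$, while the quadratic remainder is bounded by $C\|\delta\|^2\sum_i|r_i| = O_p(1)$; thus $(J(\hat x_m)-J_0)^\top r = O_p(1) = o_p(\sqrt m)$. (Equivalently, one invokes stochastic equicontinuity of $x \mapsto \tfrac{1}{\sqrt m} J(x)^\top r$ at $x^o$ together with $\hat x_m \xrightarrow{p} x^o$.) This establishes the representation for $\hat x^{\rm GN}_m$. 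Running the identical expansion on the first-order optimality condition $J(\hat x^{\rm LS}_m)^\top(d - \bar f(\hat x^{\rm LS}_m)) = 0$ — which holds for large $m$ because $\hat x^{\rm LS}_m \to x^o$ lies in the interior of $\mathcal X$ where the objective is smooth — and using that $\hat x^{\rm LS}_m$ is itself $\sqrt m$-consistent by Theorem~\ref{theorem_LS}, yields the same representation for $\hat x^{\rm LS}_m$ (this is essentially the asymptotic-normality part of Theorem~\ref{theorem_LS}). Subtracting the two representations gives $\hat x^{\rm GN}_m - \hat x^{\rm LS}_m = o_p(1/\sqrt m)$, and since $\tfrac{1}{\sqrt m} J_0^\top r \to \mathcal N(0,\sigma^2 H(x^o))$ the common leading term reproduces the limit law $\mathcal N(0,\sigma^2 M^{-1}(x^o))$ in~\eqref{asymptotic_normality_LS}.
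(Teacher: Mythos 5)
Your proof is correct, but it follows a genuinely different route from the paper, because the paper does not prove Theorem~\ref{theorem_two_step} at all: it invokes \cite[Theorem 4.3]{lehmann2006theory} as a known result for general point estimation and merely rephrases it in the localization setting. What you have written is a self-contained derivation --- essentially the classical one-step-estimator argument specialized to nonlinear least squares: establish the common Bahadur-type expansion $\hat x - x^o = (J_0^\top J_0)^{-1} J_0^\top r + o_p(1/\sqrt{m})$ for both the one-step Gauss--Newton iterate (Taylor expansion of the residual, with the initial error $\delta$ cancelling) and for $\hat x^{\rm LS}_m$ (via its stationarity condition, which holds with probability tending to one since $\hat x^{\rm LS}_m$ eventually lies in the region where the objective is smooth, by Theorem~\ref{theorem_LS} and Assumption~\ref{compact_set_interior}), then subtract. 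Your handling of the crux term $(J(\hat x_m)-J_0)^\top r$ is the right one: the naive Lipschitz bound indeed only gives $O_p(\sqrt{m})$, and the further linearization that exploits the mean-zero structure of $r$ gives $O_p(1)$; note also that the product-of-stochastic-orders step there remains valid despite the dependence of $\delta$ on $r$, so no independence issue arises. What your approach buys is transparency: it shows exactly where each hypothesis enters --- Assumption~\ref{compact_set_interior} for uniform derivative bounds near $x^o$, Assumptions~\ref{convergence_of_sample_distribution} and~\ref{nonsingular_assumption} (or~\ref{deployment_of_sensor}) for invertibility of $m^{-1}J^\top J$ in the limit, and Theorem~\ref{theorem_LS} for the $\sqrt{m}$-consistency of $\hat x^{\rm LS}_m$ --- details the paper delegates wholesale to the citation, and it even yields the slightly stronger remainder $O_p(1/m)$ in the Gauss--Newton representation. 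What the citation buys the paper is brevity and coverage by a general theorem whose regularity conditions are verified in the reference.
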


\subsection{Global and Consistent Localization} \label{root_m_consistent_localization}

In virtue of Theorem~\ref{theorem_two_step}, the keypoint of the two-step architecture is to construct a $\sqrt{m}$-consistent estimator in the first step. In the following, in terms of whether the variance of measurement noises is known a priori or not, we propose different $\sqrt{m}$-consistent estimators. 

\subsubsection{Consistent Localization with Prior Noise Statistical Knowledge}
With known $\sigma^2$, we can subtract $\sigma^2$ from both sides of~\eqref{squared_range_measurements}:
\begin{equation}\label{squared_range_measurements_unbiased}
d_{i}^2-\sigma^2=\|a_i-x^o\|^2+\varepsilon_{i}, 
\end{equation}
where $\varepsilon_{i}=e_{i}-\sigma^2$ has $0$ mean. Then the bias-eliminate (Bias-Eli) problem is constructed:
\begin{equation}\label{Bias_eliminate_problem}
\hbox{(\textbf{Bias-Eli}):}~~~~\mathop{\rm minimize}_{x\in\mathbb R^{n}} \sum_{i=1}^{m}(\|a_i-x\|^2-d_{i}^2+\sigma^2)^2.
\end{equation}
We call a solution to~\eqref{Bias_eliminate_problem} the Bias-Eli estimate, denoted as $\hat x^{\rm BE}_m$, of $x^o$. Let $y=[x^\top~ \|x\|^2]^\top$. Problem~\eqref{Bias_eliminate_problem} can be equivalently converted into the following constrained problem~\cite{beck2008exact}:
\begin{subequations}\label{GTRS_problem}
	\begin{align}
	\mathop{\rm minimize~}\limits_{y\in\mathbb R^{n+1}} ~& \|Ay-b\|^2 \label{GTRS_objective} \\
	\hbox{\textbf{(Bias-Eli)}:}~~~~\mathop{\rm subject~to~} ~& y^\top D y+2g^\top y=0,\label{GTRS_constraint}
	\end{align}
\end{subequations}
where 
\begin{equation*}
A=\left[ 
\begin{array}{ccc}
-2a_1^\top & 1  \\
\vdots      & \vdots\\
-2a_m^\top & 1
\end{array}
\right] ,
b=\left[ 
\begin{array}{ccc}
d_1^2-\|a_1\|^2-\sigma^2   \\
\vdots      \\
d_m^2-\|a_m\|^2-\sigma^2 
\end{array}
\right]
\end{equation*}
and
\begin{equation*}
D=\left[ 
\begin{array}{ccc}
I_n & 0_{n \times 1}  \\
0_{1 \times n} & 0
\end{array}
\right] ,
g=\left[ 
\begin{array}{ccc}
0_{n \times 1}   \\
-0.5 
\end{array}
\right],
\end{equation*}
where $I_n$ denotes an identity matrix of size $n$ and $0_{i \times j}$ denotes an $i \times j$ matrix whose elements are all $0$. The constraint~\eqref{GTRS_constraint} indicates that $[y]_{n+1}$ equals the square of the norm of $[y]_{1:n}$.
The Bias-Eli problem~\eqref{GTRS_problem} is a nonconvex GTRS problem, which consists of a quadratic objective function and a quadratic equality constraint. A characterization and a rough seeking algorithm of its global minimizer are available in~\cite{more1993generalizations,beck2008exact}. However, the algorithms in~\cite{more1993generalizations,beck2008exact} are not complete. A ``hard case'', where the Hessian matrix of the Lagrangian function is singular, has not been tackled. In addition, some Boolean conditions therein are hard to verify. Here we propose a complete algorithm to seek the solution to the Bias-Eli problem. 
By~\cite[Theorem 3.2]{more1993generalizations}, $y^* \in \mathbb R^{n+1}$ is an optimal solution of~\eqref{GTRS_problem} if and only if there exists a $\lambda^* \in \mathbb R$ such that 
\begin{subequations}
	\begin{align}
	(A^\top A+\lambda^* D)y^* &=A^\top b-\lambda^* g \label{iif_condition_1}\\
	{y^*}^\top D y^*+2g^\top y^* &=0 \label{iif_condition_2}\\
	A^\top A+\lambda^* D & \succeq 0. \label{iif_condition_3}
	\end{align}
\end{subequations}
In what follows, we will illustrate how to seek a pair of $\lambda^*$ and $y^*$ that satisfy~\eqref{iif_condition_1}-\eqref{iif_condition_3}. When $A^\top A+\lambda D$ is nonsingular, we can solve $y$ from~\eqref{iif_condition_1} as
\begin{equation} \label{expression_of_y}
y(\lambda)=(A^\top A+\lambda D)^{-1} (A^\top b-\lambda g).
\end{equation}
Define
\begin{equation} \label{decreasing_function}
c(\lambda):=y(\lambda)^\top D y(\lambda)+2g^\top y(\lambda).
\end{equation}
Based on the structure of $D$, the set of $\lambda$ which satisfies~\eqref{iif_condition_3} has the form $\overline {\mathcal I}= [\lambda_l,\infty)$, where $\lambda_l$ can be obtained via solving the following LMI problem:
\begin{equation} \label{LMI_problem}
\begin{split}
\mathop{\rm minimize}_{\lambda \in\mathbb R}~~ &\lambda \\
{\rm subject~to} ~~& A^\top A+\lambda D \succ 0.
\end{split}
\end{equation}
Define ${\mathcal I}:= (\lambda_l,\infty)$. By~\cite[Theorem 5.2]{more1993generalizations}, $c(\lambda)$ is strictly decreasing over $\mathcal I$. Therefore, when $\lambda^* \in \mathcal I$, a simple bisection method can be applied to search $\lambda^*$~\cite{more1993generalizations,beck2008exact}, i.e., the unique root of $c(\lambda)$. When $A^\top A+\lambda^* D$ is singular, i.e., $\lambda^*=\lambda_l$, which is called the ``hard case'' and omitted in~\cite{more1993generalizations,beck2008exact}, it fails to express $y^*$ by~\eqref{expression_of_y}, and we need to in turn solve the following equations:
\begin{equation} \label{instead_equations}
\begin{cases}
(A^\top A+\lambda_l D)y &=A^\top b-\lambda_l g \\
y^\top D y+2g^\top y &=0.
\end{cases}
\end{equation}
We note that~\eqref{instead_equations} can be solved by a system of polynomial equations rooting. Since $A^\top A+\lambda_l D$ is singular, the solution of~\eqref{instead_equations} may not be unique. Here, we construct a convex optimization problem whose global minimizer is a solution of~\eqref{instead_equations}:
\begin{subequations}\label{convex_problem}
	\begin{align}
	\mathop{\rm minimize~~}\limits_{y\in\mathbb R^{n+1}} ~& v^\top y\label{convex_objective}\hspace{30mm}\\
	\mathop{\rm subject~to~~} ~& (A^\top A+\lambda_l D)y -A^\top b+\lambda_l g =0,\label{affine_equality}\\
	& y^\top D y+2g^\top y  \leq 0,\label{convex_inequality}
	\end{align}
\end{subequations}
where $v \neq 0$ is an eigenvector of $A^\top A+\lambda_l D$ associated with eigenvalue $0$.
It can be readily verified that~\eqref{convex_problem} is a convex optimization problem. Now we show that ``='' in~\eqref{convex_inequality} can be achieved at any minimizer. Suppose there exists an optimal $y'$ satisfying~\eqref{affine_equality} and $y'^\top D y'+2g^\top y'  < 0$. Note that $y'^\top D y'+2g^\top y'  < 0$, there exists a $\delta>0$ such that $y''^\top D y''+2g^\top y''  \leq 0$, where $y''=y'-\delta v$. Since $v$ is an eigenvector of $A^\top A+\lambda_l D$ associated with eigenvalue $0$, $y''$ satisfies~\eqref{affine_equality} automatically. Moreover, we have $v^\top y''=v^\top y'-\delta v^\top v<v^\top y'$, which contradicts the optimality of $y'$. Therefore, the global minimizers of~\eqref{convex_problem} are solutions of~\eqref{instead_equations}. 

By now, we have introduced the methods to calculate the Bias-Eli solution when the matrix $A^\top A+\lambda^* D$ is nonsingular and singular, respectively. However, it is still unclear that how to judge whether $A^\top A+\lambda^* D$ is singular or not. Next, we will tackle this problem and give the whole programmable algorithm to calculate the Bias-Eli solution. 
Since $A^\top A$ is positive definite, it follows that $A^\top A$ and $D$ can be simultaneously diagonalized~\cite{zeng2022localizability}, i.e., there exists a nonsingular matrix $R$ for which 
\begin{equation} \label{simultaneously_diagonalizing}
\begin{split}
R^\top A^\top AR &=\Gamma:={\rm diag} (\gamma_1,\ldots,\gamma_{n+1}),\\
R^\top DR &=\Sigma:={\rm diag} ({\delta}_1,\ldots,{\delta}_{n+1}).
\end{split}
\end{equation}
Further, following the similar derivation with Lemma 2 in~\cite{zeng2022localizability}, finding all $\lambda \in \mathcal I$ that satisfy $c(\lambda)=0$ is equivalent to finding the roots (located in $\mathcal I$) of the following $(2n+2)$-order polynomial:
\begin{align} 
T(\lambda):=&\sum\limits_{i = 1}^{n + 1} \left[w(\lambda) \right]_i \left[2g^\top R \right]_i (\gamma_i + \lambda {\delta _i}) \prod\limits_{j = 1,j \ne i}^{n + 1} {{{\left( {\gamma_j + \lambda {\delta _j}} \right)}^2}} \nonumber\\
& +\sum\limits_{i = 1}^{n + 1} \left[w(\lambda) \right]_i^2 \delta_i \prod\limits_{j = 1,j \ne i}^{n + 1} {{{\left( {\gamma_j + \lambda {\delta _j}} \right)}^2}}, \label{polynomial}
\end{align}
where $w(\lambda)=R^\top (A^\top b-\lambda g)$.

Then by using the Sturm's theorem~\cite{thomas1941sturm}, we can judge whether $T(\lambda)$ has a root in $\mathcal I$. If it is the case, we can calculate the upper bound of the root in virtue of the Cauchy's bound~\cite{rahman2002analytic} and apply a simple bisection method to seek $\lambda^*$. If $T(\lambda)$ has no roots in $\mathcal I$, i.e., $A^\top A+\lambda^* D$ is singular, we resort to solving the convex problem~\eqref{convex_problem}. For real applications, each sensor measures the distances of the target and send the measurements to a base station where the Bias-Eli problem~\eqref{GTRS_problem} is constructed and solved. The whole algorithm which solves the Bias-Eli problem~\eqref{GTRS_problem} is presented in Algorithm~\ref{pseudo_algorithm_Bias_Eli}.
\begin{algorithm}
	\caption{Solution Seeking for the Bias-Eli Problem~\eqref{GTRS_problem}}
	\label{pseudo_algorithm_Bias_Eli}
	\begin{algorithmic}[1]
		\State Input $A$, $b$, $D$, and $g$.
		\State Calculate $\lambda_l$ by solving LMI problem~\eqref{LMI_problem}.
		\State Construct $T(\lambda)$ according to (\ref{polynomial}).
		\State {\bf IF}  $T(\lambda)=0$ has a root in $\mathcal I$ (By Sturm's theorem):
		\State ~~~~Calculate the upper bound of the root of $T(\lambda)=0$ (By Cauchy's bound).
		\State ~~~~Use a bisection method to obtain $\hat \lambda^*$.
		\State ~~~~Calculate $\hat y^{\rm BE}_m=(A^\top A+\hat \lambda^* D)^{-1} (A^\top b-\hat \lambda^* g)$.
		\State {\bf ELSE}: 
		\State ~~~~Calculate $\hat y^{\rm BE}_m$ by solving convex problem~\eqref{convex_problem}. 
		\State {\bf END IF}
		\State  Output $\hat y^{\rm BE}_m$.
	\end{algorithmic}
\end{algorithm}

In what follows, we investigate the asymptotic property of
the Bias-Eli estimator $\hat x^{\rm BE}_m$. The following theorem shows the consistency of $\hat x^{\rm BE}_m$.
\begin{theorem} \label{consistency_of_bias_eli}
	The Bias-Eli estimator is $\sqrt{m}$-consistent, i.e., $\hat x^{\rm BE}_m - x^o=O_p(1/\sqrt{m})$.
\end{theorem}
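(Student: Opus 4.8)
The plan is to exploit the optimization structure directly through a ``basic inequality'' argument, which avoids any separate consistency proof. Write $y^o=[(x^o)^\top,\ \|x^o\|^2]^\top$ and observe that $y^o$ is feasible for~\eqref{GTRS_problem}, since ${y^o}^\top D y^o+2g^\top y^o=\|x^o\|^2-\|x^o\|^2=0$. By the definitions of $A$ and $b$ together with~\eqref{squared_range_measurements_unbiased}, the residual at the truth is exactly the noise, $Ay^o-b=-\varepsilon$, where $\varepsilon=[\varepsilon_1,\ldots,\varepsilon_m]^\top$. Because $\hat y^{\rm BE}_m$ is the global minimizer over the feasible set and $y^o$ is feasible, $\|A\hat y^{\rm BE}_m-b\|^2\le\|Ay^o-b\|^2=\|\varepsilon\|^2$. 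Setting $\Delta=\hat y^{\rm BE}_m-y^o$ and expanding $\|A\Delta-\varepsilon\|^2\le\|\varepsilon\|^2$ yields the basic inequality $\Delta^\top A^\top A\,\Delta\le 2\varepsilon^\top A\Delta$. The key point is that the first $n$ entries of $\Delta$ are precisely $\hat x^{\rm BE}_m-x^o$ (both $\hat y^{\rm BE}_m$ and $y^o$ have their last coordinate equal to the squared norm of the first $n$ by the constraint~\eqref{GTRS_constraint}), so $\|\hat x^{\rm BE}_m-x^o\|\le\|\Delta\|$ and it suffices to show $\|\Delta\|=O_p(1/\sqrt m)$.

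Next I would decouple the two sides of the basic inequality. Applying the Cauchy--Schwarz inequality in the $A^\top A$-metric to $\varepsilon^\top A\Delta=(A^\top\varepsilon)^\top\Delta$ gives $\Delta^\top A^\top A\,\Delta\le 4\,\varepsilon^\top P\varepsilon$, where $P=A(A^\top A)^{-1}A^\top$ is the orthogonal projector onto the range of $A$, of rank $n+1$ (well defined since $A^\top A$ is invertible for all large $m$, as shown in the last step). I would then bound the right-hand side in expectation: the $\varepsilon_i$ are independent with mean zero and $\mathbb E[\varepsilon_i^2]=4\|a_i-x^o\|^2\sigma^2+2\sigma^4$ (using the Gaussian moments $\mathbb E[r_i^3]=0$ and $\mathbb E[r_i^4]=3\sigma^4$), and Assumption~\ref{compact_set_interior} bounds these variances uniformly by some constant $C$. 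Hence $\mathbb E[\varepsilon^\top P\varepsilon]=\sum_{i}\mathbb E[\varepsilon_i^2]\,P_{ii}\le C\,{\rm tr}(P)=C(n+1)$, and Markov's inequality delivers $\varepsilon^\top P\varepsilon=O_p(1)$. Consequently $\Delta^\top A^\top A\,\Delta=O_p(1)$.

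Finally I would convert this into a bound on $\|\Delta\|$ through a coercivity estimate on $A^\top A$. By Assumption~\ref{convergence_of_sample_distribution} and the boundedness of $\mathcal A$, $\tfrac1m A^\top A\to\Psi:=\mathbb E_{a\sim\mu}\big[[-2a^\top,\,1]^\top[-2a^\top,\,1]\big]$. The matrix $\Psi$ is singular only if $-2a^\top c+d=0$ holds $\mu$-almost surely for some $(c,d)\neq 0$, i.e. only if $\mu$ concentrates on a line ($2$D) or a plane ($3$D); Assumption~\ref{deployment_of_sensor} rules this out, so $\Psi\succ 0$. Therefore $\lambda_{\min}(A^\top A)\ge \tfrac m2\lambda_{\min}(\Psi)$ for all large $m$, giving $\|\Delta\|^2\le \tfrac{2}{m\,\lambda_{\min}(\Psi)}\,\Delta^\top A^\top A\,\Delta=O_p(1/m)$, i.e. $\|\hat x^{\rm BE}_m-x^o\|\le\|\Delta\|=O_p(1/\sqrt m)$.

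I expect the coercivity step to be the main obstacle: the whole argument reduces to showing that $\tfrac1m A^\top A$ stays uniformly positive definite, which is exactly where the geometric nondegeneracy of the sensor configuration (Assumption~\ref{deployment_of_sensor}, equivalently Assumption~\ref{nonsingular_assumption}) must enter, and where one must be careful that it is the \emph{almost-sure} convergence of the empirical second-moment matrix, rather than mere convergence of $F_m$ in Assumption~\ref{convergence_of_sample_distribution}, that legitimately lower-bounds $\lambda_{\min}(A^\top A)$. The noise-side bound is routine once the fourth moment of a Gaussian is used, and the basic inequality holds verbatim despite the nonconvexity of the GTRS, since it only compares the global optimum against the single feasible point $y^o$.
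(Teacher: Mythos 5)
Your proof is correct, and it takes a genuinely different route from the paper's. The paper argues in two stages: it first establishes almost-sure consistency of $\hat x^{\rm BE}_m$ by a Jennrich-type argument --- uniform convergence of $m^{-1}P_m(x)$ to $P(x)+\|\varepsilon\|_t^2$ over the compact set $\mathcal X$, the unique-minimum property of $P$ (inherited from Assumption~\ref{unique_solution_assumption}), and the tail-product Lemma~\ref{tail_product_with_noise} --- and then invokes the asymptotic MSE expansion~\eqref{convergencen_of_mse} (citing an external reference for its validity), computes the gradient and Hessian of $P_m$, uses the limits~\eqref{convergence_of_AA}--\eqref{convergence_of_APA} to show $m\,{\rm tr}\left(\mathbb{MSE}\left(\hat x^{\rm BE}_m\right)\right)$ is bounded, and finishes with Chebyshev's inequality. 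You replace both stages with a single non-asymptotic chain: global optimality of $\hat y^{\rm BE}_m$ against the feasible point $y^o$, the projection bound $\Delta^\top A^\top A\,\Delta\le 4\,\varepsilon^\top P\varepsilon=O_p(1)$, and the coercivity $\lambda_{\min}(A^\top A)\gtrsim m$. Your route buys several things: it is fully self-contained (no appeal to an asymptotic expansion whose own regularity conditions are left implicit), it yields an explicit finite-sample tail bound, and it never uses Assumption~\ref{unique_solution_assumption} --- only Assumptions~\ref{assumption:Gaussian_noise}--\ref{convergence_of_sample_distribution} and~\ref{deployment_of_sensor} enter, through the moments of $\varepsilon_i$ and the positive definiteness of $\Psi$. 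The paper's route buys, as by-products, strong (almost-sure) consistency of $\hat x^{\rm BE}_m$ --- your argument delivers only convergence in probability --- and an expression for the asymptotic covariance that is reused elsewhere. One small remark: your closing concern about almost-sure versus weak convergence of the empirical second-moment matrix is moot in this framework, since the $a_i$ are deterministic under Assumption~\ref{convergence_of_sample_distribution}, and that assumption together with the boundedness of $\mathcal A$ from Assumption~\ref{compact_set_interior} gives deterministic convergence $A^\top A/m\to\Psi$ via the Helly--Bray theorem, exactly as in the paper's~\eqref{convergence_of_AA}.
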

The proof of Theorem~\ref{consistency_of_bias_eli} is presented in Appendix~\ref{proof_of_BE_consistency}.

\begin{remark}
	The asymptotic unbiasedness of the Bias-Eli estimator is due to the $0$ mean of error terms $\varepsilon_i$'s. We remark that the error terms $e_i$'s in the S-LS problem~\eqref{S-LS} are not mean $0$, which leads to the biasedness of the S-LS estimator, even in the asymptotic case. Therefore, the S-LS estimate is not consistent, which will be shown in Fig.~\ref{asymptotic_rmse} of our simulations.
\end{remark}

Note that solving the Bias-Eli problem~\eqref{GTRS_problem} involves massive numerical computing for the solution to a nonconvex optimization problem, which is computationally inefficient. In this part, we focus on an approximation of~\eqref{GTRS_problem} by discarding the quadratic constraint~\eqref{GTRS_constraint} and show that the resultant ordinary least squares estimator is consistent. By discarding~\eqref{GTRS_constraint}, we obtain the following linear least squares problem:
\begin{equation} \label{linear_Bias_Eli}
\hbox{\textbf{(Bias-Eli-Lin)}:}~~~~~\mathop{\rm minimize~}\limits_{y\in\mathbb R^{n+1}} ~ \|Ay-b\|^2 .
\end{equation}
The optimal solution to~\eqref{linear_Bias_Eli} has the closed form $\hat y^{\rm BEL}_m=(A^\top A)^{-1} A^\top b$. Theorem~\ref{consistency_of_linear_bias_eli_estimate} gives the consistency of the Bias-Eli-Lin estimator $\hat x^{\rm BEL}_m=\left[\hat y^{\rm BEL}_m \right]_{1:n} $. 
\begin{theorem} \label{consistency_of_linear_bias_eli_estimate}
	The Bias-Eli-Lin estimator is $\sqrt{m}$-consistent, i.e., $\hat x^{\rm BEL}_m - x^o=O_p(1/\sqrt{m})$.
\end{theorem}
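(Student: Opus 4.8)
The plan is to exploit the fact that, in contrast to the constrained GTRS~\eqref{GTRS_problem}, the Bias-Eli-Lin problem~\eqref{linear_Bias_Eli} is an \emph{ordinary least squares} fit of a model that is \emph{linear} in the augmented variable $y=[x^\top~\|x\|^2]^\top$, so the estimator should inherit the standard $\sqrt m$-consistency of OLS once we verify that the regression noise is zero-mean and that the Gram matrix behaves well. First I would write the linear model explicitly: combining $\|a_i-x^o\|^2=\|a_i\|^2-2a_i^\top x^o+\|x^o\|^2$ with~\eqref{squared_range_measurements_unbiased} shows that the $i$-th entry of $b$ obeys $[b]_i=d_i^2-\|a_i\|^2-\sigma^2=-2a_i^\top x^o+\|x^o\|^2+\varepsilon_i=[A]_i\,y^o+\varepsilon_i$, where $y^o=[x^{o\top}~\|x^o\|^2]^\top$ and $\varepsilon_i=e_i-\sigma^2$ has zero mean. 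Thus $b=Ay^o+\varepsilon$, and plugging this into the closed form $\hat y^{\rm BEL}_m=(A^\top A)^{-1}A^\top b$ gives the clean error decomposition
\begin{equation*}
\hat y^{\rm BEL}_m-y^o=(A^\top A)^{-1}A^\top\varepsilon=\left(\tfrac{A^\top A}{m}\right)^{-1}\tfrac{A^\top\varepsilon}{m}.
\end{equation*}
Because $x^o=[y^o]_{1:n}$, it suffices to show the right-hand side is $O_p(1/\sqrt m)$ and then read off its first $n$ coordinates.

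I would control the two factors separately. Writing each summand of $A^\top A$ as $v_iv_i^\top$ with $v_i=[-2a_i^\top~1]^\top$, Assumption~\ref{convergence_of_sample_distribution} together with the boundedness of $\mathcal A$ gives $\tfrac1m A^\top A=\int vv^\top\,dF_m\to\mathbb E_{a\sim\mu}[vv^\top]$ (the integrands are polynomials in $a$, hence bounded and continuous on the compact support of $\mu$). This limit is singular only if some nonzero $c$ satisfies $c^\top v=0$ for $\mu$-almost every $a$, i.e.\ only if $\mu$ concentrates on a hyperplane. A full-measure hyperplane through $x^o$ is forbidden by Assumption~\ref{nonsingular_assumption}, while one not through $x^o$ would, by reflecting $x^o$ across it, yield a second zero of $\|f(\cdot)-f(x^o)\|_t^2$ and thus contradict the asymptotic unique localizability of Assumption~\ref{unique_solution_assumption}; alternatively Assumption~\ref{deployment_of_sensor} excludes all full-measure hyperplanes at once. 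Hence the limit is positive definite and $\left(\tfrac1m A^\top A\right)^{-1}=O_p(1)$.

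For the noise factor I would bound a second moment and invoke Chebyshev. Conditional on the $a_i$'s the $\varepsilon_i$ are independent with zero mean, so $\mathbb E\!\left[\tfrac{A^\top\varepsilon}{\sqrt m}\,\tfrac{\varepsilon^\top A}{\sqrt m}\right]=\tfrac1m\sum_i v_iv_i^\top\,\mathrm{Var}(\varepsilon_i)$. A short Gaussian-moment calculation (using $\mathbb E[r_i]=\mathbb E[r_i^3]=0$ and $\mathbb E[r_i^4]=3\sigma^4$) gives $\mathrm{Var}(\varepsilon_i)=4\|a_i-x^o\|^2\sigma^2+2\sigma^4$, which is uniformly bounded since $a_i\in\mathcal A$ is bounded (Assumption~\ref{compact_set_interior}); combined with the boundedness of $\|v_i\|$ this makes the displayed covariance bounded uniformly in $m$, so $\tfrac{A^\top\varepsilon}{\sqrt m}=O_p(1)$ and $\tfrac{A^\top\varepsilon}{m}=O_p(1/\sqrt m)$. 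Multiplying the two factors yields $\hat y^{\rm BEL}_m-y^o=O_p(1)\cdot O_p(1/\sqrt m)=O_p(1/\sqrt m)$, and restricting to the first $n$ entries gives the claim $\hat x^{\rm BEL}_m-x^o=O_p(1/\sqrt m)$.

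I expect the only genuinely delicate step to be the nonsingularity of the limiting Gram matrix $\mathbb E_{a\sim\mu}[vv^\top]$, because this is where the geometry of the deployment, rather than pure linear algebra, is indispensable; by contrast the variance bound is routine once the fourth Gaussian moment is in hand, and the convergence $\tfrac1m A^\top A\to\mathbb E_{a\sim\mu}[vv^\top]$ needs only that the polynomial entries of $vv^\top$ are bounded on the compact support of $\mu$.
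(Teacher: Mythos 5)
Your proposal is correct and follows essentially the same route as the paper's proof: write $b=Ay^o+\varepsilon$, use the closed form to get $\hat y^{\rm BEL}_m-y^o=\left(\tfrac{1}{m}A^\top A\right)^{-1}\tfrac{1}{m}A^\top\varepsilon$, bound the noise term by a Chebyshev/second-moment argument (the paper packages this as Lemma~\ref{property_of_bounded_variance}), and use convergence of $\tfrac{1}{m}A^\top A$ to its limit \eqref{convergence_of_AA}. The one place you go beyond the paper is in explicitly proving that this limiting Gram matrix $\mathbb E_{a\sim\mu}[vv^\top]$ is nonsingular (via Assumption~\ref{deployment_of_sensor}, or via Assumptions~\ref{unique_solution_assumption}--\ref{nonsingular_assumption} with the reflection argument), a step the paper's proof leaves implicit when it asserts $\left(\tfrac{1}{m}A^\top A\right)^{-1}O_p(1)=O_p(1)$.
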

The proof of Theorem~\ref{consistency_of_linear_bias_eli_estimate} is presented in Appendix~\ref{proof_of_BEL_consistency}.

\subsubsection{Simultaneous Localization and Noise Statistical Inference}
When the noise variance $\sigma^2$ is unknown, we are going to simultaneously estimate $x^o$ and $\sigma^2$. We replace $\sigma^2$ as $c$ and modify~\eqref{Bias_eliminate_problem} to the following problem:
\begin{subequations}\label{Noise_Estimate_problem}
	\begin{align}
	\mathop{\rm minimize}_{x\in\mathbb R^{n}, c \in \mathbb R}~ &\sum_{i=1}^{m}(\|a_i-x\|^2-d_{i}^2+c)^2 \label{Noise_Est_obj}\\
	\hbox{\textbf{(Noise-Est)}:}~~~\mathop{\rm subject~to} ~& c \geq 0.\label{Noise_Est_constraint}
	\end{align}
\end{subequations}
Let $\bar y=[x^\top ~\|x\|^2+c]^\top$, Problem~\eqref{Noise_Estimate_problem} can be equivalently converted into 
\begin{subequations}\label{Noise_Estimate_problem2}
	\begin{align}
	\mathop{\rm minimize}_{\bar y\in\mathbb R^{n+1}} ~&\|A \bar y-\bar b\|^2 \label{Noise_Est_obj2}\\
	\hbox{\textbf{(Noise-Est)}:}~~~~\mathop{\rm subject~to} ~& \bar y^\top D \bar y+2g^\top \bar y \leq 0,\label{Noise_Est_constraint2}
	\end{align}
\end{subequations}
where
\begin{equation*}
\bar b=\left[ 
\begin{array}{cc}
d_1^2-\|a_1\|^2  \\
\vdots \\
d_m^2-\|a_m\|^2
\end{array}
\right] ,
\end{equation*}
and $A$, $D$, and $g$ are the same as that in~\eqref{GTRS_problem}.
Since the objective function~\eqref{Noise_Est_obj2} and the inequality constraint~\eqref{Noise_Est_constraint2} are both quadratic functions with a positive semidefinite Hessian matrix, Problem~\eqref{Noise_Estimate_problem2} is a convex one, which can be efficiently solved by mature first and second-order algorithms. We call a solution to~\eqref{Noise_Estimate_problem2} the Noise-Est estimator, denoted as $\hat y^{\rm NE}_m$. Then the Noise-Est estimates of object position $\hat x^{\rm NE}_m$ and noise variance $\hat \sigma^{\rm NE}_m$ are obtained:
\begin{subequations}
	\begin{align}
	\hat x^{\rm NE}_m&=\left[ \hat y^{\rm NE}_m\right] _{1:n} \label{NE_target1} \\
	\hat \sigma^{\rm NE}_m&=\left[ \hat y^{\rm NE}_m\right] _{n+1}-\|\left[ \hat y^{\rm NE}_m\right] _{1:n}\|^2. \label{NE_noise1}
	\end{align} 
\end{subequations}
Let $\bar y^o=\left[{x^o}^\top ~\|x^o\|^2+\sigma^2 \right] ^\top$. The following theorem gives the consistency of $\hat y^{\rm NE}_m$.
\begin{theorem} \label{consistency_of_Noise_Est}
	The Noise-Est estimator is $\sqrt{m}$-consistent, i.e., $\hat y^{\rm NE}_m - \bar y^o=O_p(1/\sqrt{m})$.
\end{theorem}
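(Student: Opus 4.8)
The plan is to carry out a standard constrained least-squares consistency argument, exploiting the fact that the change of variables $\bar y=[x^\top~\|x\|^2+c]^\top$ turns the true parameter $\bar y^o$ into an (almost) exact solution of the linear system $A\bar y=\bar b$. First I would record the key structural identity: substituting $d_i^2=\|a_i-x^o\|^2+e_i$ into $\bar b$ gives $[A\bar y^o-\bar b]_i=\sigma^2-e_i=-\varepsilon_i$, so that $A\bar y^o-\bar b=-\varepsilon$ with $\varepsilon=[\varepsilon_1,\ldots,\varepsilon_m]^\top$. Here $\mathbb E[\varepsilon_i]=0$ and, using the Gaussianity of $r_i$ (Assumption~\ref{assumption:Gaussian_noise}) together with the boundedness of $\|a_i-x^o\|$ (Assumption~\ref{compact_set_interior}), a routine moment computation gives $\mathrm{Var}(\varepsilon_i)=4\|a_i-x^o\|^2\sigma^2+2\sigma^4$, which is bounded uniformly in $i$; thus the entries of $\varepsilon$ are independent, zero-mean, with uniformly bounded variance. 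I would also verify that $\bar y^o$ is feasible for the constraint~\eqref{Noise_Est_constraint2}, since $\bar y^{o\top}D\bar y^o+2g^\top\bar y^o=\|x^o\|^2-(\|x^o\|^2+\sigma^2)=-\sigma^2\le 0$.

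Next I would invoke optimality. Because $\hat y^{\rm NE}_m$ minimizes $\|A\bar y-\bar b\|^2$ over the feasible set and $\bar y^o$ lies in that set, we have $\|A\hat y^{\rm NE}_m-\bar b\|^2\le\|A\bar y^o-\bar b\|^2=\|\varepsilon\|^2$. Writing $\Delta:=\hat y^{\rm NE}_m-\bar y^o$ and using $A\hat y^{\rm NE}_m-\bar b=A\Delta-\varepsilon$, expanding the square and cancelling $\|\varepsilon\|^2$ yields the basic inequality
\begin{equation*}
\Delta^\top\left(\tfrac{1}{m}A^\top A\right)\Delta\le 2\Delta^\top\left(\tfrac{1}{m}A^\top\varepsilon\right).
\end{equation*}
It then suffices to control the two normalized quantities on each side.

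For the Gram matrix, the rows of $A$ are $\alpha_i^\top$ with $\alpha_i=[-2a_i^\top~1]^\top$, so $\tfrac{1}{m}A^\top A=\tfrac{1}{m}\sum_i\alpha_i\alpha_i^\top$ converges by Assumption~\ref{convergence_of_sample_distribution} to $Q:=\mathbb E_{a\sim\mu}[\alpha\alpha^\top]$ with $\alpha=[-2a^\top~1]^\top$. The crux is $Q\succ0$: if some $u=[u_1^\top~u_2]^\top\ne0$ satisfied $u^\top Qu=\mathbb E_{a\sim\mu}[(u^\top\alpha)^2]=0$, then $-2u_1^\top a+u_2=0$ would hold $\mu$-almost surely; necessarily $u_1\ne0$ (else $u_2=0$ and $u=0$), so the mass of $\mu$ lies, up to a null set, on a hyperplane $\mathcal S$ (a line in 2D, a plane in 3D) with $\mu(\mathcal S)=1$. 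Under Assumption~\ref{deployment_of_sensor} this is impossible directly. Under Assumptions~\ref{unique_solution_assumption}--\ref{nonsingular_assumption} I would rule it out as follows: if $x^o\in\mathcal S$ it contradicts Assumption~\ref{nonsingular_assumption}; if $x^o\notin\mathcal S$, reflecting $x^o$ across $\mathcal S$ to $x'\ne x^o$ gives $\|a-x'\|=\|a-x^o\|$ for every $a\in\mathcal S$, whence $\|f(x')-f(x^o)\|_t^2=0$, contradicting the asymptotic unique localizability of Assumption~\ref{unique_solution_assumption}. Hence $Q\succ0$, and the smallest eigenvalue of $\tfrac{1}{m}A^\top A$ exceeds some constant $c>0$ for all large $m$. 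For the noise term, $\mathbb E[\tfrac{1}{m}A^\top\varepsilon]=0$ and, by independence, $\mathrm{Cov}(\tfrac{1}{m}A^\top\varepsilon)=\tfrac{1}{m^2}\sum_i\alpha_i\alpha_i^\top\mathrm{Var}(\varepsilon_i)=O(1/m)$, since $\alpha_i$ and $\mathrm{Var}(\varepsilon_i)$ are bounded; Markov's inequality then gives $\tfrac{1}{m}A^\top\varepsilon=O_p(1/\sqrt m)$.

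Finally I would combine these. For all large $m$ the basic inequality yields $c\|\Delta\|^2\le\Delta^\top(\tfrac{1}{m}A^\top A)\Delta\le2\|\Delta\|\,\|\tfrac{1}{m}A^\top\varepsilon\|$, so $\|\Delta\|\le(2/c)\|\tfrac{1}{m}A^\top\varepsilon\|=O_p(1/\sqrt m)$, i.e. $\hat y^{\rm NE}_m-\bar y^o=O_p(1/\sqrt m)$, as claimed. The part I expect to be most delicate is establishing $Q\succ0$ from the deployment hypotheses---in particular the reflection argument that translates the degeneracy of $Q$ into a violation of unique localizability---and making rigorous the ``for all large $m$'' lower bound on $\lambda_{\min}(\tfrac{1}{m}A^\top A)$, which rests on the (deterministic, respectively almost-sure) convergence supplied by Assumption~\ref{convergence_of_sample_distribution}.
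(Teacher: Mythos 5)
Your proof is correct, but it takes a genuinely different route from the paper's. The paper argues in two stages: it first establishes almost-sure consistency of $\hat y^{\rm NE}_m$ via a Jennrich-style uniform-convergence argument ($m^{-1}Q_m \rightarrow Q+\|\varepsilon\|_t^2$, where $Q(x,c)$ is shown, under Assumption~\ref{deployment_of_sensor}, to have its unique minimum at $[{x^o}^\top~\sigma^2]^\top$); it then writes the KKT conditions of the convex problem~\eqref{Noise_Estimate_problem2} and invokes complementary slackness --- since the constraint value converges a.s. to $-\sigma^2<0$, the multiplier vanishes with probability tending to one --- to conclude that $\hat y^{\rm NE}_m$ equals the unconstrained solution $\hat y^{\rm NE0}_m=(A^\top A)^{-1}A^\top \bar b$ with probability tending to one, see~\eqref{limit_probability}, whose $\sqrt{m}$-consistency then follows as in Theorem~\ref{consistency_of_linear_bias_eli_estimate}. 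You instead run the classical ``basic inequality'' argument for constrained least squares: feasibility of $\bar y^o$ (constraint value $-\sigma^2\le 0$) plus optimality of $\hat y^{\rm NE}_m$ give $\Delta^\top\left(\tfrac{1}{m}A^\top A\right)\Delta\le 2\Delta^\top\left(\tfrac{1}{m}A^\top\varepsilon\right)$, and you close by bounding $\lambda_{\min}\left(\tfrac{1}{m}A^\top A\right)$ away from zero (via $Q\succ 0$, which you derive from the deployment assumptions with a reflection argument that parallels the paper's own nonsingularity proof of $M(x^o)$ in Appendix~\ref{proof_theorem_ls}) and showing $\tfrac{1}{m}A^\top\varepsilon=O_p(1/\sqrt{m})$ by Chebyshev, exactly the content of the paper's Lemma~\ref{property_of_bounded_variance}. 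Your route is shorter and more elementary: it needs neither uniform convergence nor any KKT analysis, and it delivers the stated $O_p(1/\sqrt{m})$ rate in one shot, under either set of hypotheses. What the paper's longer route buys is twofold: it yields strong (a.s.) consistency of $\hat y^{\rm NE}_m$ as a byproduct, and the intermediate fact~\eqref{limit_probability} is reused verbatim to prove Theorem~\ref{consistency_of_noise_estimate} (and to explain why Noise-Est and Noise-Est-Lin have identical MSEs for large $T$ in the simulations); if your proof replaced the paper's, Theorem~\ref{consistency_of_noise_estimate} would need its own, Theorem~\ref{consistency_of_linear_bias_eli_estimate}-style, argument.
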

The proof of Theorem~\ref{consistency_of_Noise_Est} is presented in Appendix~\ref{proof_of_NE_consistency}.

In the following, we consider the unconstrained version of~\eqref{Noise_Estimate_problem2}. By relaxing the constraint~\eqref{Noise_Est_constraint2}, we obtain the following ordinary (linear) least squares problem:
\begin{equation}\label{Noise_estimate_linear}
\hbox{(\textbf{Noise-Est-Lin}):}~~~~\mathop{\rm minimize}_{\bar y\in\mathbb R^{n+1}} \|A \bar y-\bar b\|^2.
\end{equation}
Let $\hat y^{\rm NEL}_m$ denote the optimal solution to problem~\eqref{Noise_estimate_linear}; it has the closed-form $\hat y^{\rm NEL}_m=(A^\top A)^{-1} A^\top \bar b$. Then the Noise-Est-Lin estimates of object position $\hat x^{\rm NEL}_m$ and noise variance $\hat \sigma^{\rm NEL}_m$ are obtained:
\begin{subequations}
	\begin{align}
	\hat x^{\rm NEL}_m&=\left[ \hat y^{\rm NEL}_m\right] _{1:n} \label{NE_target} \\
	\hat \sigma^{\rm NEL}_m&=\left[ \hat y^{\rm NEL}_m\right] _{n+1}-\|\left[ \hat y^{\rm NEL}_m\right] _{1:n}\|^2. \label{NE_noise}
	\end{align} 
\end{subequations}
Note that $\hat y^{\rm NEL}_m$ is identical to $\hat y^{\rm NE0}_m$ in the proof of Theorem~\ref{consistency_of_Noise_Est}; see~\eqref{limit_probability} for the definition of $\hat y^{\rm NE0}_m$. Since we have shown $\hat y^{\rm NE0}_m$ is $\sqrt{m}$-consistent in the proof of Theorem~\ref{consistency_of_Noise_Est}, the following theorem holds straightforwardly.
\begin{theorem} \label{consistency_of_noise_estimate}
	The Noise-Est-Lin estimator is $\sqrt{m}$-consistent, i.e., $\hat y^{\rm NEL}_m - \bar y^o=O_p(1/\sqrt{m})$.
\end{theorem}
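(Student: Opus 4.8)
The plan is to exploit the fact that Problem~\eqref{Noise_estimate_linear} is unconstrained, so its minimizer is the unique solution of the normal equations, $\hat y^{\rm NEL}_m=(A^\top A)^{-1}A^\top \bar b$, which is well-defined almost surely once $m$ is large enough that $A^\top A$ is invertible. As already noted in the text, this coincides with the quantity $\hat y^{\rm NE0}_m$ appearing in~\eqref{limit_probability} inside the proof of Theorem~\ref{consistency_of_Noise_Est}. The quickest route is therefore to simply invoke the $\sqrt m$-consistency of $\hat y^{\rm NE0}_m$ established there. To keep the argument self-contained, however, I would reconstruct the estimate directly from the linear-model structure of $\bar b$, since that makes transparent why the bias vanishes.

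The key step is an exact error decomposition. Writing out the $i$-th entry of $\bar b$ and using~\eqref{squared_range_measurements}, one gets $[\bar b]_i=d_i^2-\|a_i\|^2=\|a_i-x^o\|^2-\|a_i\|^2+e_i=-2a_i^\top x^o+\|x^o\|^2+e_i$, whereas the $i$-th row of $A$ applied to $\bar y^o$ gives $-2a_i^\top x^o+\|x^o\|^2+\sigma^2$. Hence $\bar b=A\bar y^o+\varepsilon$ with $\varepsilon=(\varepsilon_i)$ and $\varepsilon_i=e_i-\sigma^2$, and by~\eqref{squared_range_measurements_unbiased} each $\varepsilon_i$ has zero mean. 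Substituting into the closed form yields the clean identity
\begin{equation*}
\hat y^{\rm NEL}_m-\bar y^o=(A^\top A)^{-1}A^\top\varepsilon=\frac{1}{\sqrt m}\left(\frac{A^\top A}{m}\right)^{-1}\frac{A^\top\varepsilon}{\sqrt m}.
\end{equation*}

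It then remains to control the two factors on the right. Under Assumptions~\ref{convergence_of_sample_distribution} and~\ref{deployment_of_sensor}, $m^{-1}A^\top A$ converges to the matrix $\mathbb E_{a\sim\mu}$ of the rank-one blocks built from the rows $[-2a^\top,\,1]$, and this limit is nonsingular precisely because $\mu$ is not concentrated on a line (2D) or plane (3D); hence $(m^{-1}A^\top A)^{-1}=O_p(1)$. For the stochastic factor, $\varepsilon_i=2\|a_i-x^o\|r_i+r_i^2-\sigma^2$ is zero-mean with variance $4\|a_i-x^o\|^2\sigma^2+2\sigma^4$, uniformly bounded in $i$ since $\|a_i-x^o\|$ is bounded by Assumption~\ref{compact_set_interior} and $\sigma^2$ is finite by Assumption~\ref{assumption:Gaussian_noise}; combined with the boundedness of the rows of $A$, a direct second-moment computation gives $\mathbb E\|A^\top\varepsilon/\sqrt m\|^2=m^{-1}\sum_i \mathrm{Var}(\varepsilon_i)\,\|\mathrm{row}_i(A)\|^2=O(1)$, so $A^\top\varepsilon/\sqrt m=O_p(1)$ by Chebyshev. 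Multiplying the two $O_p(1)$ factors against the explicit $1/\sqrt m$ delivers $\hat y^{\rm NEL}_m-\bar y^o=O_p(1/\sqrt m)$. I do not expect a genuine obstacle here: the only delicate ingredients, namely the invertibility of $A^\top A$ for large $m$ and the uniform variance bound on the $\varepsilon_i$, are exactly those already handled in the Noise-Est analysis, so the entire content of this theorem reduces to verifying the identification $\hat y^{\rm NEL}_m=\hat y^{\rm NE0}_m$.
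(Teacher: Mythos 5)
Your proposal is correct and follows essentially the same route as the paper: the paper proves this theorem by identifying $\hat y^{\rm NEL}_m$ with $\hat y^{\rm NE0}_m=(A^\top A)^{-1}A^\top\bar b$ and reusing the decomposition $\bar b = A\bar y^o+\varepsilon$ together with a Chebyshev-type bound (its Lemma~\ref{property_of_bounded_variance}) and the convergence of $m^{-1}A^\top A$ in~\eqref{convergence_of_AA}, which is exactly your argument restated in vector second-moment form. The only difference is cosmetic: you spell out the nonsingularity of the limit of $m^{-1}A^\top A$ via Assumption~\ref{deployment_of_sensor}, a point the paper leaves implicit when writing $\left(\frac{1}{m}A^\top A\right)^{-1}O_p(1)=O_p(1)$.
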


\subsection{Heterogeneity of Noise Variances}
Different sensors may have distinct measurement variances $\sigma^2_i,i=1,\ldots,m$. In this case, with known noise variances, we can modify the measurement model~\eqref{range_measurement_model} as follows:
\begin{equation}\label{general_measurement_model}
\bar d_{i}=\|a_i-x^o\|/\sigma_i+\bar r_{i},
\end{equation}
where $\bar d_{i}=d_{i}/\sigma_i$ and $\bar r_{i}=r_{i}/\sigma_i$. Since the modified noise terms $\bar r_i$ have the same variance, the solution to the following weighted least squares (WLS) problem is consistent and asymptotically efficient:
\begin{equation}\label{WLS}
\hbox{(\textbf{WLS}):}~~~~\mathop{\rm minimize}_{x\in\mathbb R^n} \sum_{i=1}^{m}  \frac{\left(\|a_i-x\|-d_{i}\right)^2}{\sigma^2_i}.
\end{equation}
Similarly, we can use a two-step scheme to develop an estimator that is strongly consistent and asymptotically efficient. Specifically, in the first step, the following weighted Bias-Eli-Lin problem needs to be solved:
\begin{equation} \label{R_linear_Bias_Eli}
\hbox{\textbf{(W-Bias-Eli-Lin)}:}~~~\mathop{\rm minimize~}\limits_{y\in\mathbb R^{n+1}} ~ (Ay-b_{\sigma})^\top W (Ay-b_{\sigma}),
\end{equation}
where 
$$
b_{\sigma}=\left[ 
\begin{array}{ccc}
d_1^2-\|a_1\|^2-\sigma^2_1   \\
\vdots      \\
d_m^2-\|a_m\|^2-\sigma^2_m 
\end{array}
\right],
$$
and $W={\rm diag}\left( \frac{1}{\sigma_1^2},\ldots,\frac{1}{\sigma_m^2}\right)$ is the weight matrix. 
In the second step, a Gauss-Newton iteration associated with~\eqref{general_measurement_model} is executed. The result is verified in Fig.~\ref{asymptotic_rmse_WBEL} in Section~\ref{simulations}.

We next consider the case where $M$ sensors have unknown and distinct measurement noise variances $\sigma^2_i,i=1,\ldots,M$. Since the number of variables to be estimated exceeds the number of sensors, it is impossible to simultaneously estimate the target position and all noise variance like~\eqref{Noise_estimate_linear}. Instead, we can first design a consistent estimate of each sensor's variance $\hat \sigma^2_i$ based on repetitive measurements as follows~\cite{lehmann2006theory}:
\begin{equation} \label{consistent_estimate_variance}
\hat \sigma^2_i=\frac{1}{T-1} \sum_{j=1}^{T} (d_{ij}-\bar d_i)^2,
\end{equation}
where $d_{ij}$ is the $j$-th measurement of the $i$-th sensor, and $\bar d_i=\sum_{j=1}^{T} d_{ij} /T$ with $T$ being the number of repetitive measurements by each sensor. There are totally $m=MT$ measurements, and we construct the following approximate weighted least squares (AWLS) problem:
\begin{equation}\label{AWLS}
		\hbox{(\textbf{AWLS}):}~~~~\mathop{\rm minimize}_{x\in\mathbb R^n} \sum_{i=1}^{M}\sum_{j=1}^{T}  \frac{\left(\|a_i-x\|-d_{ij}\right)^2}{\hat \sigma^2_i}.
\end{equation}
Similarly, we first solve the following approximate weighted Bias-Eli-Lin problem:
\begin{equation} \label{AR_linear_Bias_Eli}
\hbox{\textbf{(AW-Bias-Eli-Lin)}:}\mathop{\rm minimize~}\limits_{y\in\mathbb R^{n+1}} ~ (Ay-\hat b_{\sigma})^\top \hat W (Ay-\hat b_{\sigma}),
\end{equation}
where 
$$
\hat b_{\sigma}=\left[ 
\begin{array}{ccc}
d_{11}^2-\|a_1\|^2-\hat \sigma^2_1   \\
\vdots      \\
d_{MT}^2-\|a_M\|^2-\hat \sigma^2_M
\end{array}
\right] \in \mathbb R^{m},
$$
and $\hat W={\rm diag}\left( \frac{1}{\hat \sigma_1^2},\ldots,\frac{1}{\hat \sigma_1^2},\ldots,\frac{1}{\hat \sigma_M^2},\ldots,\frac{1}{\hat \sigma_M^2}\right)$ is the weight matrix. Note that 
\begin{align*}
& (A^\top \hat W A)^{-1} A^\top \hat W \hat b_{\sigma}-(A^\top W A)^{-1} A^\top W b_{\sigma} \\
= & \left( (A^\top \hat W A)^{-1}-(A^\top W A)^{-1}\right) A^\top \hat W \hat b_{\sigma} \\
& +( A^\top W A)^{-1} A^\top ( \hat W \hat b_{\sigma}-W b_{\sigma})  \\
= & O_p(1/\sqrt{m}) ( A^\top W A)^{-1} A^\top W b_{\sigma}(1+O_p(1/\sqrt{m})) \\
& + ( A^\top W A)^{-1} A^\top O_p(1/\sqrt{m})  W b_{\sigma} \\
= & O_p(1/\sqrt{m}) ( A^\top W A)^{-1} A^\top W b_{\sigma},
\end{align*}
which implies that the AW-Bias-Eli-Lin estimator is $\sqrt{m}$-consistent. In the second step, a Gauss-Newton iteration associated with~\eqref{AWLS} is executed.

\section{Finite-sample Analysis for Practical Localization} \label{finite_sample_section}
In this section, we conduct the finite sample analysis of the Bias-Eli-Lin and Noise-Est-Lin estimators in virtue of their closed-form solutions. Specifically, we will derive their biases and MSEs respectively when $m$ is a finite number.

\begin{figure*}[!t]
	\centering
	\begin{subfigure}[b]{0.45\textwidth}
		\centering
		\includegraphics[width=\textwidth]{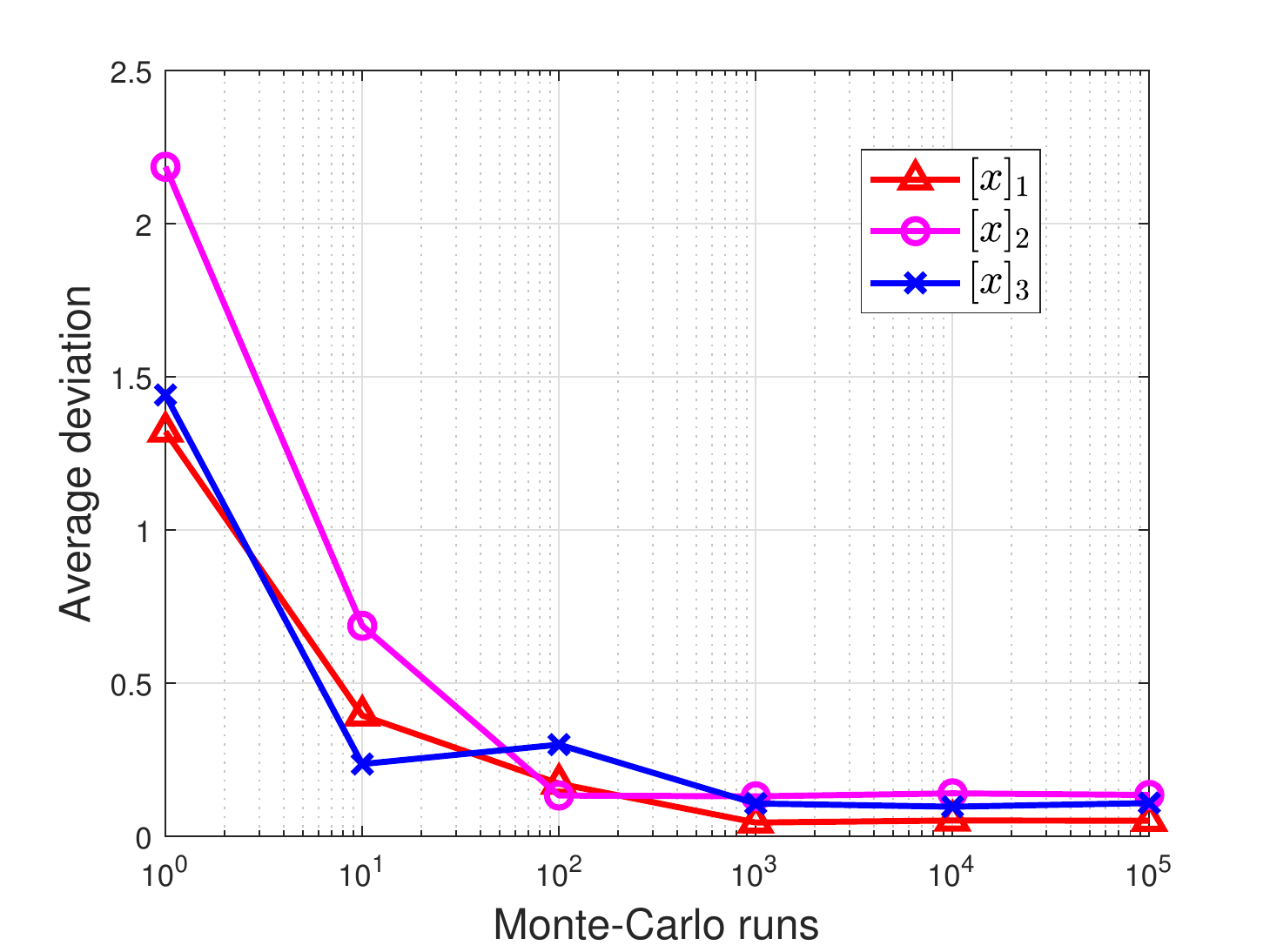}
		\caption{Bias-Eli}
		\label{finite:Bias_Bias_Eli}
	\end{subfigure}
	\hfill
	\begin{subfigure}[b]{0.45\textwidth}
		\centering
		\includegraphics[width=\textwidth]{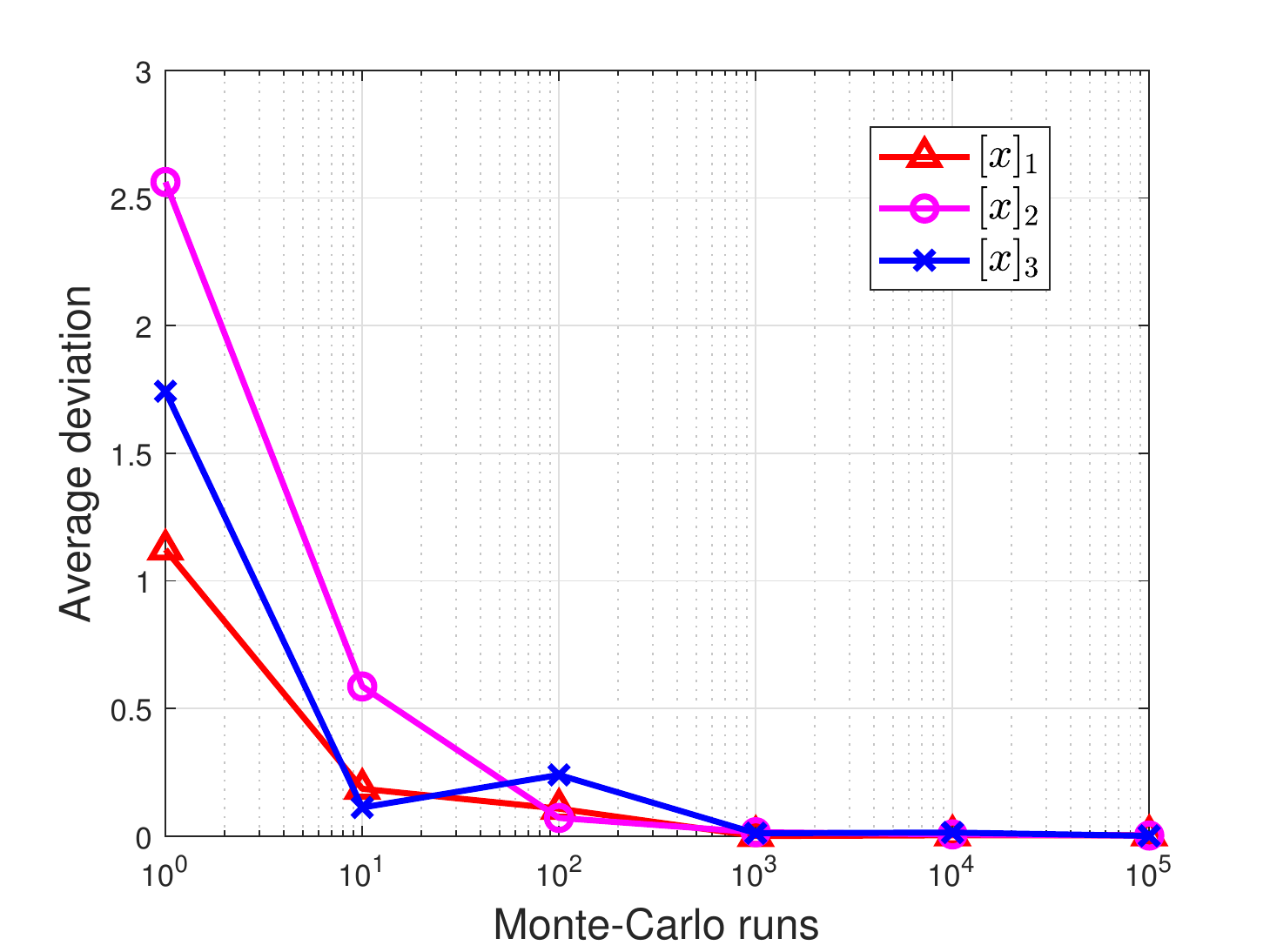}
		\caption{Bias-Eli-Lin}
		\label{finite:Bias_Bias_Eli_Lin}
	\end{subfigure}
	\begin{subfigure}[b]{0.45\textwidth}
		\centering
		\includegraphics[width=\textwidth]{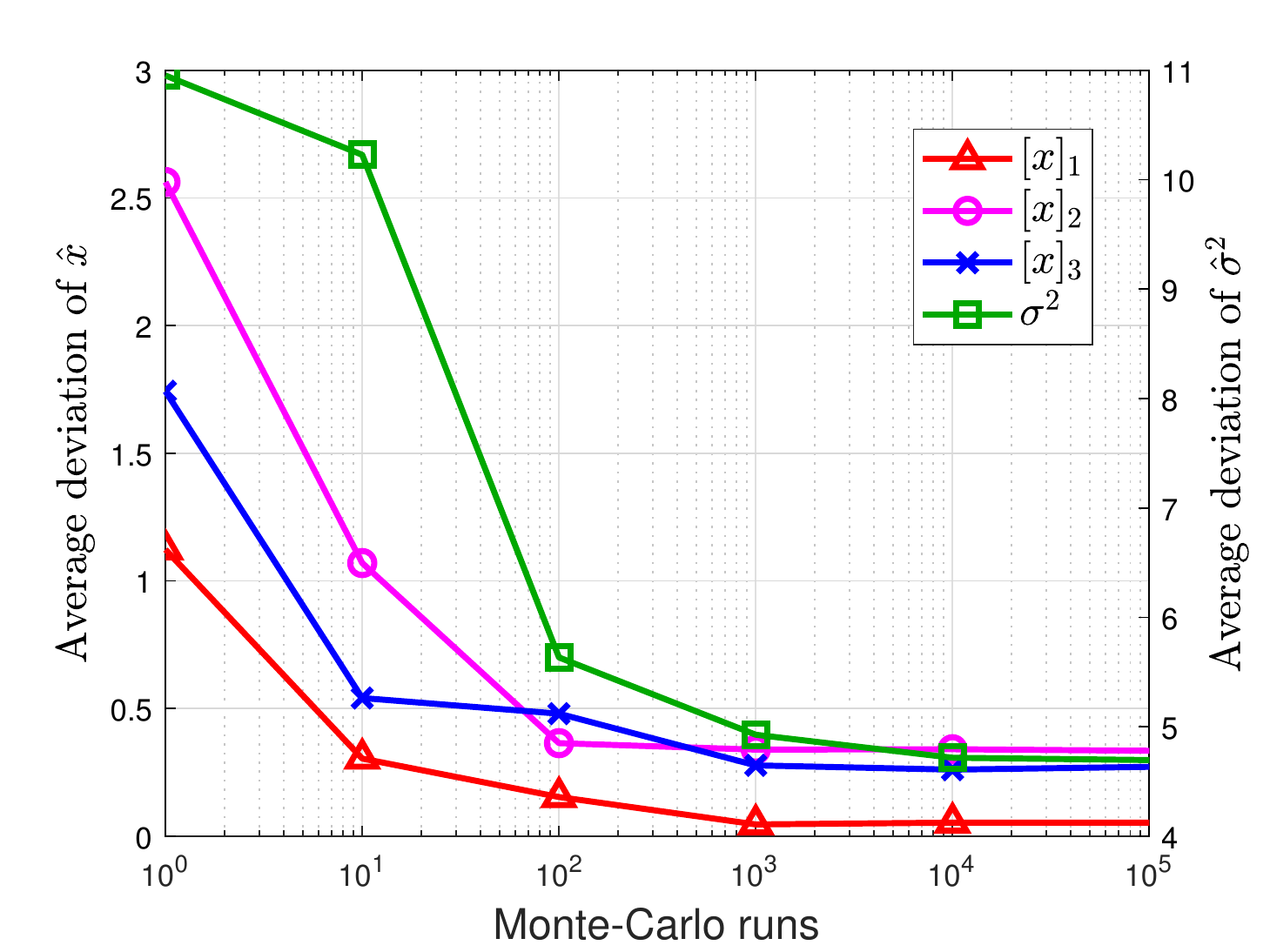}
		\caption{Noise-Est}
		\label{finite:Bias_Noise_Est}
	\end{subfigure}
	\hfill
	\begin{subfigure}[b]{0.45\textwidth}
		\centering
		\includegraphics[width=\textwidth]{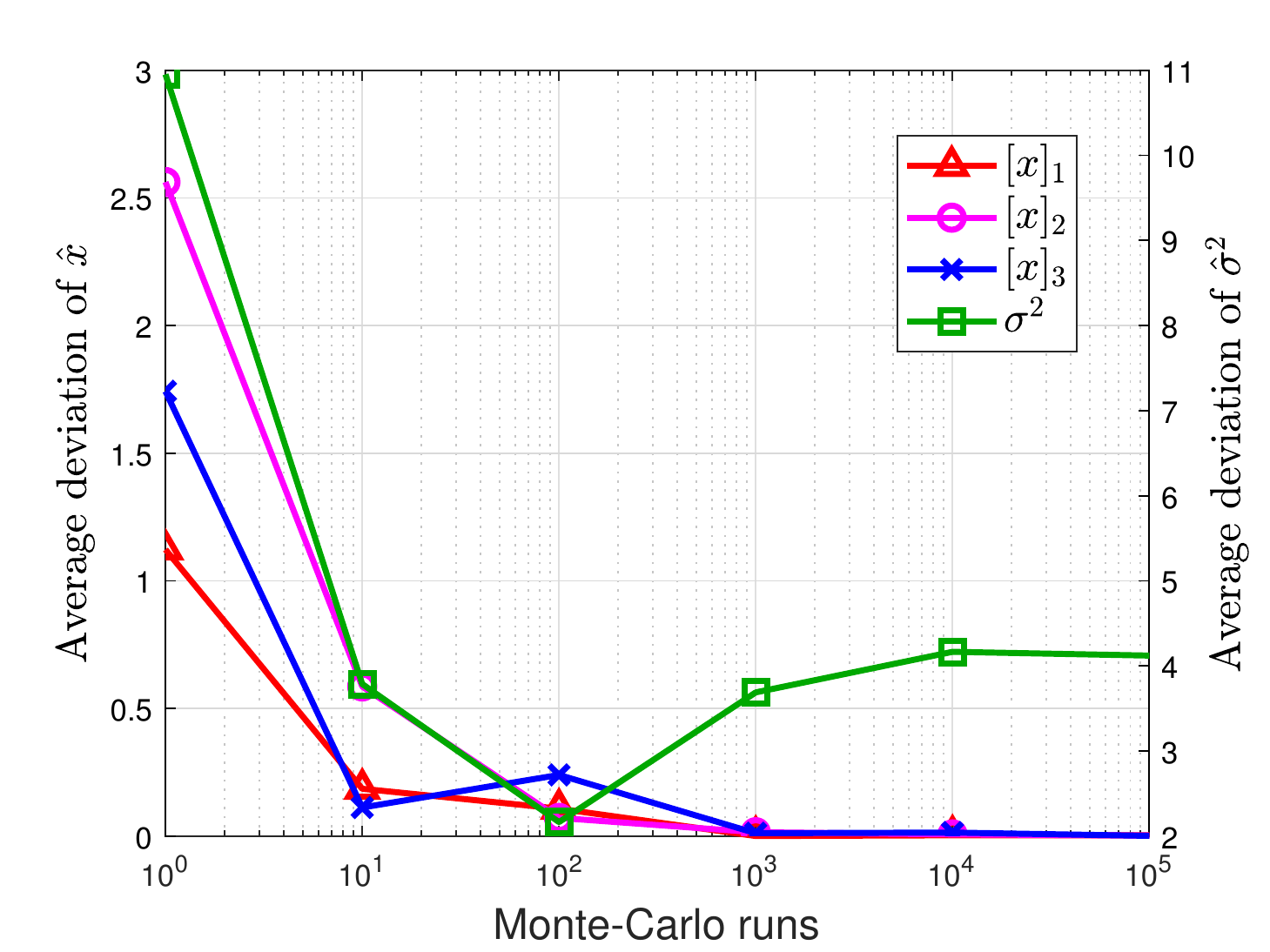}
		\caption{Noise-Est-Lin}
		\label{finite:Bias_Noise_Est_Lin}
	\end{subfigure}
	\caption{Biases of first-step estimators in the finite sample case.}
	\label{finite:Bias}
\end{figure*}

\subsection{Bias and MSE of the Noise-Est-Lin estimator} \label{finite_sample_noise_est}
Recall that the Noise-Est-Lin estimation is the optimal solution of~\eqref{Noise_estimate_linear}, which has the closed form $\hat y^{\rm NEL}_m=(A^\top A)^{-1} A^\top \bar b$. Then we have 
\begin{align*}
\mathbb E \left[ \hat y^{\rm NEL}_m\right] &=(A^\top A)^{-1} A^\top \mathbb E[\bar b] \\
&=(A^\top A)^{-1} A^\top \left[ 
\begin{array}{cc}
-2a_1^\top x^o+\|x^o\|^2+\sigma^2  \\
\vdots \\
-2a_m^\top x^o+\|x^o\|^2+\sigma^2
\end{array}
\right] \\
& =\left[ 
\begin{array}{cc}
x^o  \\
\|x^o\|^2+\sigma^2
\end{array}
\right],
\end{align*}
which indicates that the Noise-Est-Lin estimator is unbiased for the estimation of $x^o$. Now we show that the estimate of $\sigma^2$ is generally biased. Note that
\begin{equation*}
\hat y^{\rm NEL}_m=\left[ 
\begin{array}{cc}
x^o  \\
\|x^o\|^2+\sigma^2
\end{array}
\right]+(A^\top A)^{-1} A^\top \left[ 
\begin{array}{cc}
\varepsilon_1  \\
\vdots \\
\varepsilon_m
\end{array}
\right].
\end{equation*}
Let $\bar A=(A^\top A)^{-1} A^\top$, $\epsilon_i=\sum_{j=1}^{m} \bar A_{ij} \varepsilon_j$, and $\epsilon=[\epsilon_1,\ldots,\epsilon_{n+1}]^\top $. We have 
\begin{equation*}
\hat y^{\rm NEL}_m=\left[ 
\begin{array}{cc}
x^o  \\
\|x^o\|^2+\sigma^2
\end{array}
\right]+ \epsilon.
\end{equation*}
From~\eqref{NE_noise}, we have 
\begin{equation} \label{expression_of_sigma_estimation} 
\begin{split}
\hat \sigma^{\rm NEL}_m &=\|x^o\|^2+\sigma^2+\epsilon_{n+1}-\|x^o+[\epsilon]_{1:n}\|^2 \\
& =   \sigma^2+\epsilon_{n+1}-2{x^o}^\top [\epsilon]_{1:n} -\sum_{i=1}^{n} [\epsilon]_{i}^2 .
\end{split}
\end{equation}
Note that $\mathbb E\left[ \bar A_{ij} \varepsilon_j \bar A_{ik} \varepsilon_k\right] =0$ for $j \neq k$. We obtain 
\begin{align*}
\mathbb E \left[ [\epsilon]_{i}^2\right] &= \sum_{j=1}^{m} {\bar A_{ij}}^2 \mathbb E\left[ \varepsilon_j^2\right] \\
&= \sum_{j=1}^{m} {\bar A_{ij}}^2 \left(  4 f_j^2(x^o) \sigma^2+2 \sigma^2\right), 
\end{align*}
which together with $\mathbb E[\epsilon_i]=0$ for any $i$ give
\begin{equation}\label{bias_sigma}
\mathbb E\left[\hat \sigma^{\rm NEL}_m-\sigma^2 \right]=-\sum_{i=1}^{n}\sum_{j=1}^{m}  \bar A_{ij}^2\left(4f_j^2(x^o) \sigma^2+2\sigma^4\right) .
\end{equation}
Recall that $\bar y^o=\left[{x^o}^\top ~\|x^o\|^2+\sigma^2 \right]^\top $. To obtain the MSE of $\hat y^{\rm NEL}_m$, we first calculate 
\begin{equation} \label{expectation_bb}
\begin{split}
\mathbb E\left[\bar b {{}\bar b}^\top \right] & =\mathbb E\left[ (A \bar y^o+\varepsilon )(A \bar y^o+\varepsilon )^\top\right] \\
&=A \bar y^o {{}\bar y^o}^\top A^\top + \mathbb E[\varepsilon \varepsilon^\top] \\
& = A \bar y^o {{}\bar y^o}^\top A^\top + \Lambda ,
\end{split}
\end{equation}
where $\Lambda$ is shown in~\eqref{covariance_of_noises}. Then we obtain 
\begin{align} 
\mathbb {MSE}(\hat y^{\rm NEL}_m)& =\mathbb E\left[ (\hat y^{\rm NEL}_m-\bar y^o)(\hat y^{\rm NEL}_m-\bar y^o)^\top\right] \nonumber\\
& =\mathbb E\left[\hat y^{\rm NEL}_m {{{}\hat y}^{\rm NEL}_m}^\top \right]- \bar y^o{{}\bar y^o}^\top \nonumber\\
&=(A^\top A)^{-1} A^\top \mathbb E\left[\bar b {{}\bar b}^\top \right] A (A^\top A)^{-1} - \bar y^o{{}\bar y^o}^\top \nonumber\\
&=  (A^\top A)^{-1} A^\top \Lambda A (A^\top A)^{-1}. \label{finite_MSE_NE}
\end{align}

\subsection{Bias and MSE of the Bias-Eli-Lin estimator} \label{finite_sample_bias_eli}
Deriving the bias and MSE for the Bias-Eli-Lin estimator can utilize the results of the Noise-Est-Lin one. Specifically, 
\begin{equation}
\begin{split}
\hat y^{\rm BEL}_m &=(A^\top A)^{-1} A^\top b \\
&= (A^\top A)^{-1} A^\top \bar b -(A^\top A)^{-1} A \left[ 
\begin{array}{cc}
\sigma^2  \\
\vdots \\
\sigma^2
\end{array}
\right] \\
&= \hat y^{\rm NEL}_m -\left[ 
\begin{array}{cc}
0_{n \times 1}  \\
\sigma^2
\end{array}
\right].
\end{split}
\end{equation}
By further noting that $\hat x^{\rm BEL}_m=[\hat y^{\rm BEL}_m]_{1:n}$, and $\hat x^{\rm NEL}_m=[\hat y^{\rm NEL}_m]_{1:n}$, we have $\hat x^{\rm BEL}_m=\hat x^{\rm NEL}_m$. Denote $[{x^o}^\top ~\|x^o\|^2]^\top$ as ${y^o}$. Recall that $\bar y^o=[{x^o}^\top ~\|x^o\|^2+\sigma^2]^\top$. We have 
\begin{equation}
\mathbb E\left[\hat y^{\rm BEL}_m -y^o\right] =\mathbb E\left[\hat y^{\rm NEL}_m -\bar y^o\right]=0 
\end{equation}
and 
\begin{equation} \label{finite_MSE_BEL}
\begin{split}
\mathbb {MSE}(\hat y^{\rm BEL}_m)= & \mathbb E\left[ (\hat y^{\rm BEL}_m- y^o)(\hat y^{\rm BEL}_m-y^o)^\top\right] \\
=& \mathbb E\left[ (\hat y^{\rm NEL}_m- \bar y^o)(\hat y^{\rm NEL}_m-\bar y^o)^\top\right] \\
= & (A^\top A)^{-1} A^\top \Lambda A (A^\top A)^{-1},
\end{split}
\end{equation}
i.e., the Bias-Eli-Lin estimator has the same bias and MSE as the Noise-Est-Lin one. 

\begin{figure*}[t]
	\centering
	\begin{subfigure}[b]{0.45\textwidth}
		\centering
		\includegraphics[width=\textwidth]{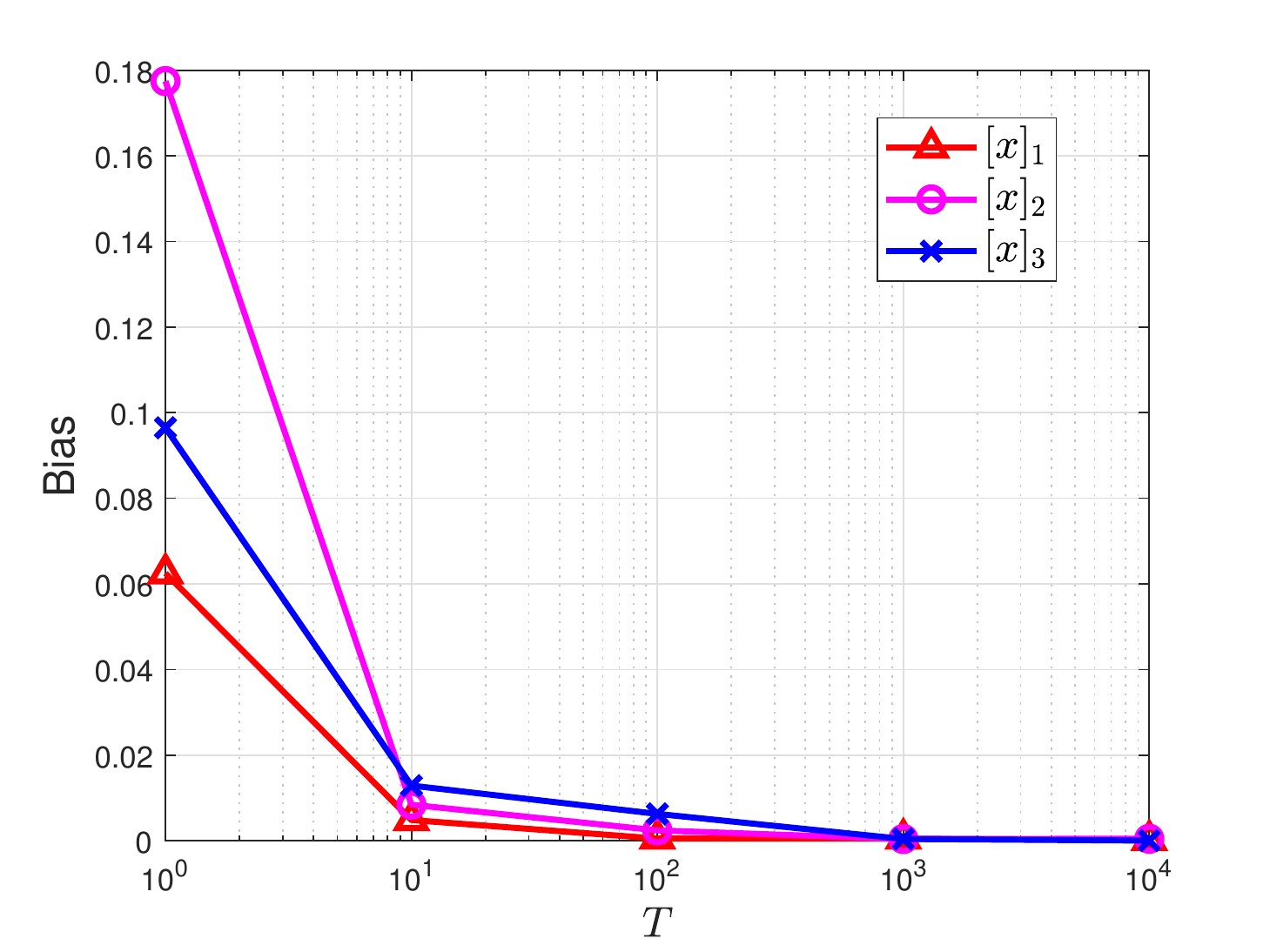}
		\caption{Bias-Eli}
		\label{asymptotic:Bias_Bias_Eli}
	\end{subfigure}
	\hfill
	\begin{subfigure}[b]{0.45\textwidth}
		\centering
		\includegraphics[width=\textwidth]{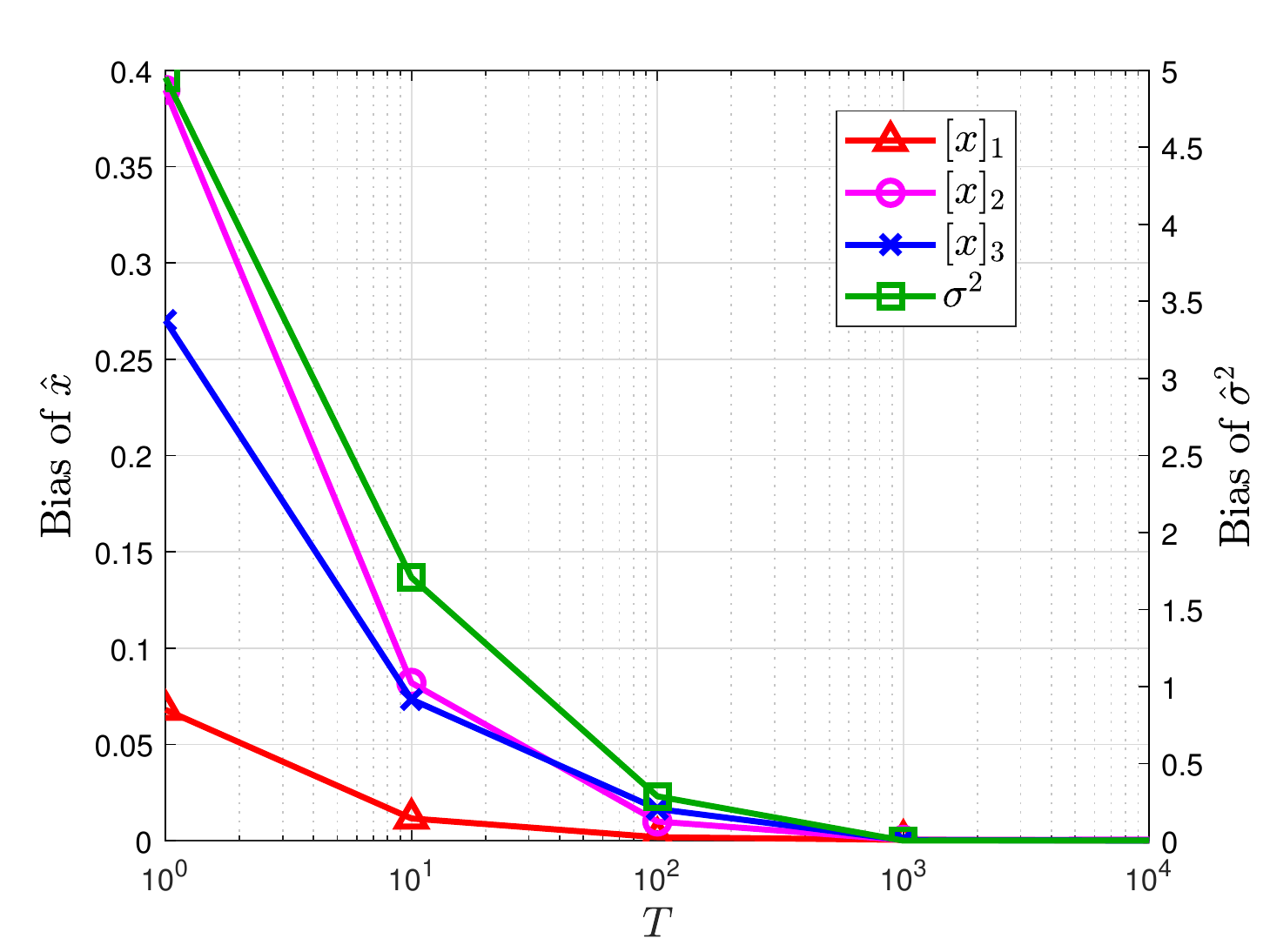}
		\caption{Noise-Est}
		\label{asymptotic:Bias_Noise_Est}
	\end{subfigure}
	\begin{subfigure}[b]{0.45\textwidth}
		\centering
		\includegraphics[width=\textwidth]{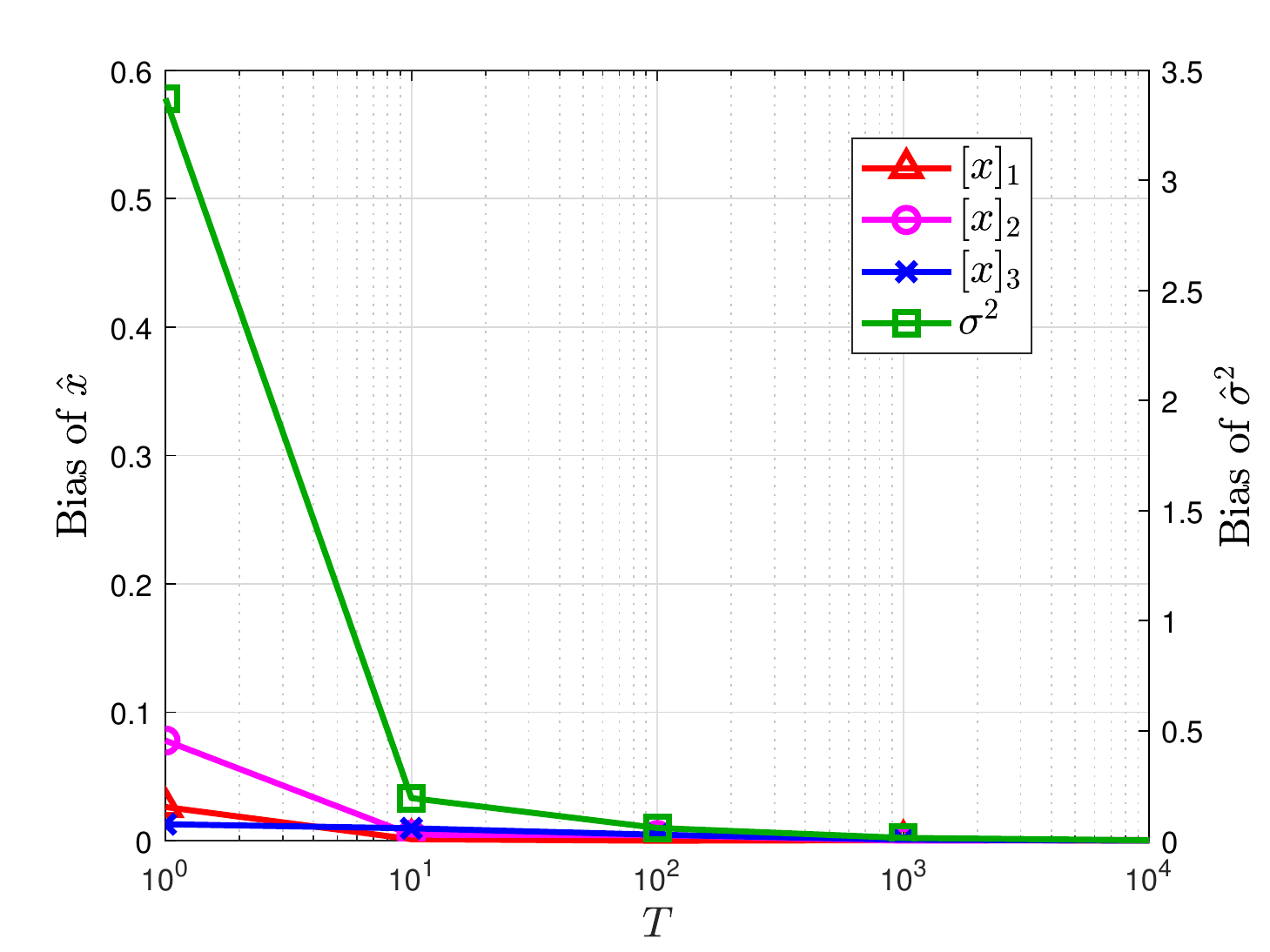}
		\caption{Noise-Est-Lin}
		\label{asymptotic:Bias_Noise_Est_Lin}
	\end{subfigure}
	\hfill
	\begin{subfigure}[b]{0.45\textwidth}
		\centering
		\includegraphics[width=\textwidth]{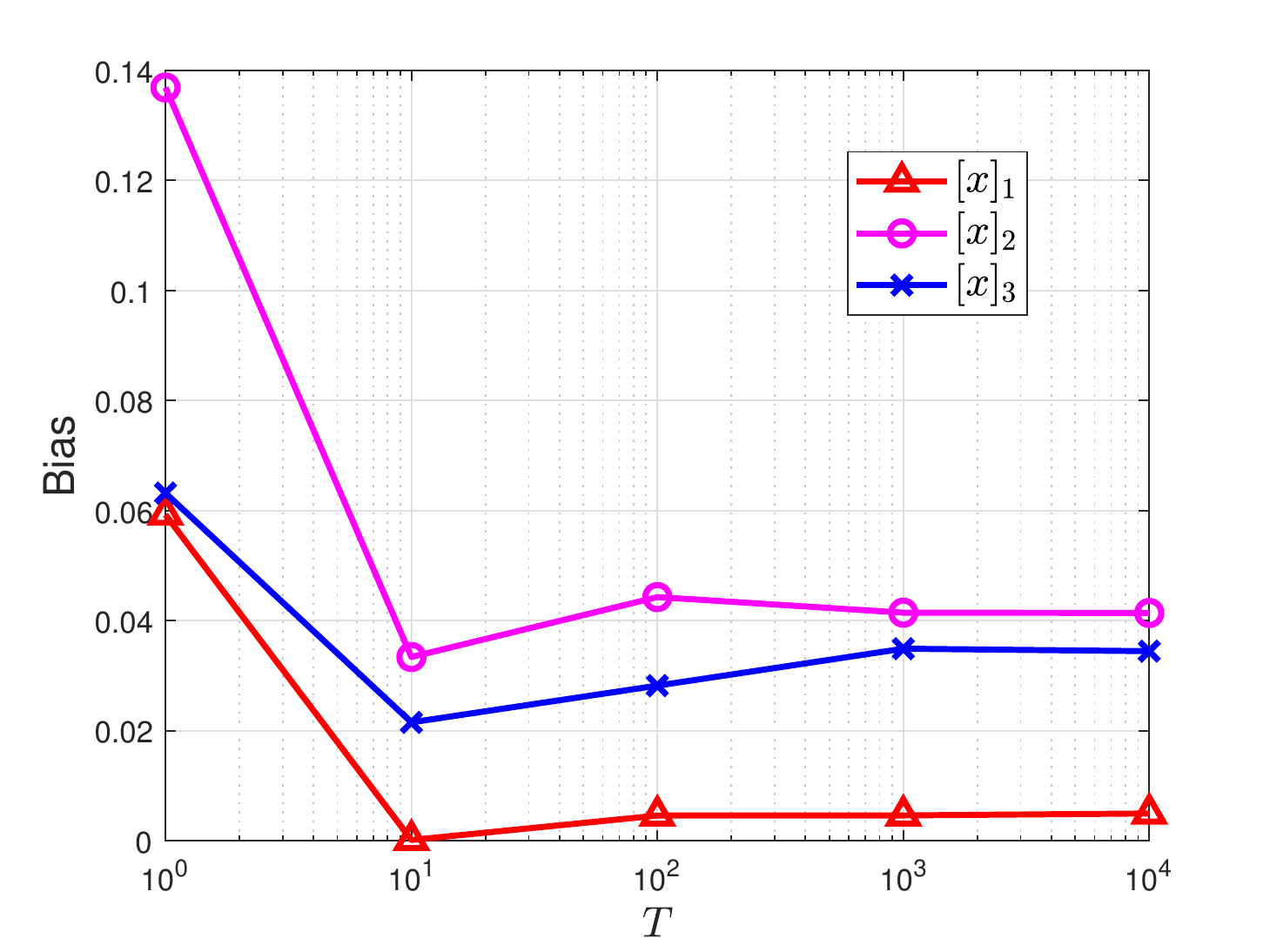}
		\caption{S-LS}
		\label{asymptotic:Bias_S_LS}
	\end{subfigure}
	\caption{Biases of first-step estimators in the large sample case.}
	\label{asymptotic:Bias}
\end{figure*}

\section{Simulations and Discussions} \label{simulations}
In this section, we perform simulations to verify our theoretical developments. Throughout our simulations, we let $10$ sensors sit at 
\begin{align*}
a_1&=[5~0~5]^\top,~a_2=[5~5~-5]^\top, ~a_3=[5~-5~5]^\top, \\
a_4&=[5~0~0]^\top,~a_5=[5~5~5]^\top, ~a_6=[-5~0~-5]^\top, \\
a_7&=[-5~-5~5]^\top,~a_8=[-5~5~-5]^\top, \\
a_9&=[-5~0~0]^\top,~a_{10}=[-5~-5~-5]^\top,
\end{align*}
and the true coordinates of the object is set as $x^o=[6~6~6]^\top$. For the finite sample case, each sensor only observes once to localize the object. While for testing the asymptotic performances, we suppose that each sensor makes totally $T$ rounds of observations, and by letting $T$ tend to infinity, the TOA measurements can be large enough. Note that the sensor deployment scheme here is exactly the case that Example~\ref{example_fix_sensors} illustrates, and thus satisfies Assumption~\ref{convergence_of_sample_distribution}. Further, since the ten sensors are not coplanar, Assumption~\ref{deployment_of_sensor} also holds. We take the Cramer-Rao lower bound (CRLB) as the baseline for the devised estimators' MSEs. Given range measurements $d=[d_1,\ldots,d_m]^\top$, based on the original model~\eqref{range_measurement_model}, the log likelihood function is
\begin{equation*}
\ell (d;x)=m \log \frac{1}{\sqrt{2 \pi}\sigma}-\sum\limits_{i=1}^{m} \frac{\left(d_i-\|a_i-x\| \right)^2}{2\sigma^2},
\end{equation*}
which gives
\begin{equation*}
\frac{\partial \ell (d;x)}{\partial x}=\frac{1}{\sigma^2}\sum\limits_{i=1}^{m} r_i \frac{(x-a_i)^\top}{\|x-a_i\|}.
\end{equation*}
Then we obtain the Fisher information matrix
\begin{equation} \label{Fisher_information_matrix}
\begin{split}
{F}= & \mathbb E\left[\frac{\partial \ell (d;x)}{\partial x} \left( \frac{\partial \ell (d;x)}{\partial x}\right) ^\top \right] \\
= & \frac{1}{\sigma^2} \sum_{i=1}^{m} \frac{(x-a_i)(x-a_i)^\top}{\|x-a_i\|^2},
\end{split}
\end{equation}
and ${\rm CRLB}={\rm tr}(F^{-1})$.

\begin{figure*}[t]
	\centering
	\includegraphics[width=1\textwidth]{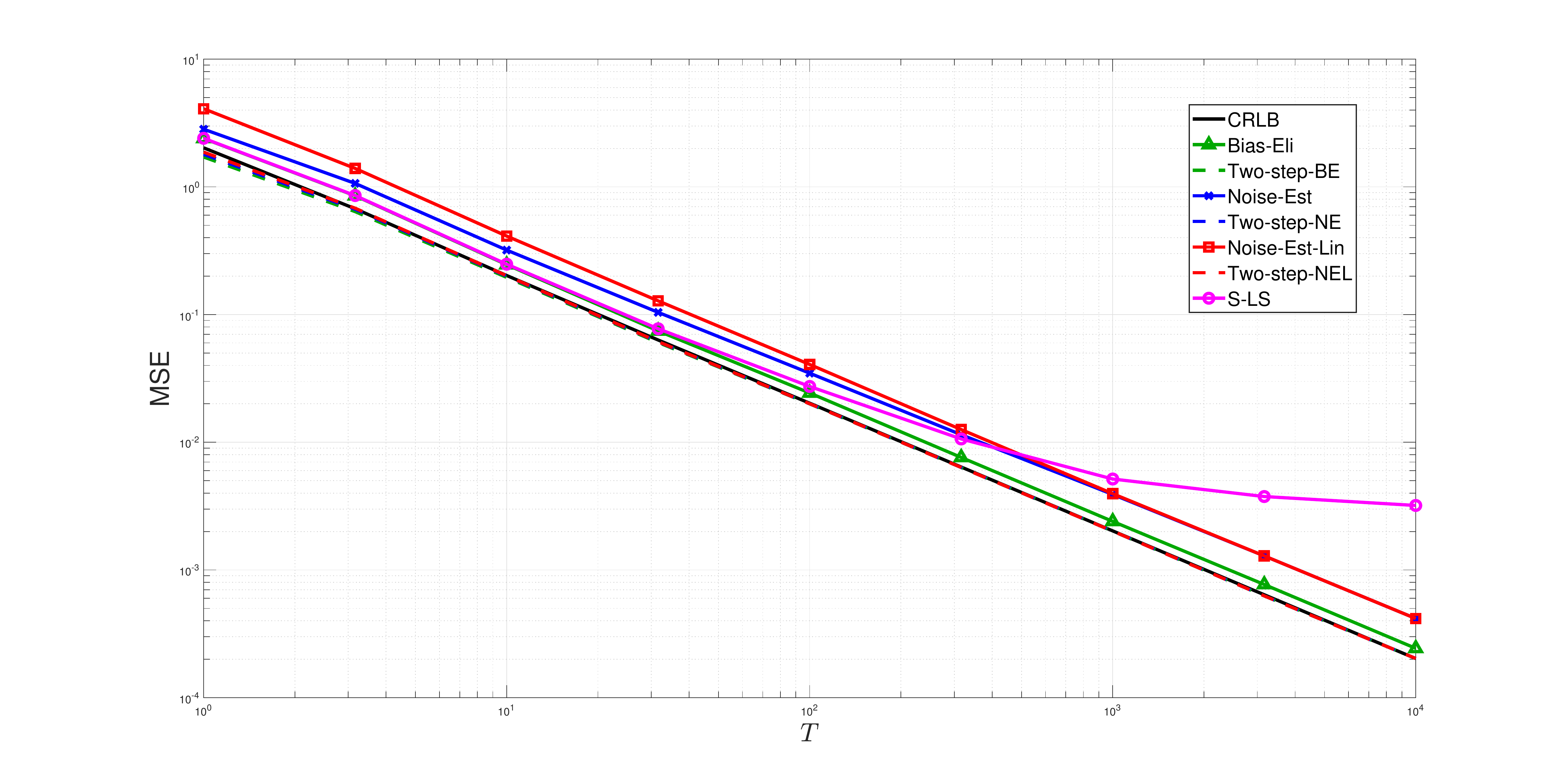}
	\caption{MSE comparison under different numbers of measurements.}
	\label{asymptotic_rmse}
\end{figure*}

\emph{Trial 1.} First, we investigate the biases of the proposed first-step estimators in the finite sample case. The range measurements $d_i,i=1,\ldots,m$ are corrupted by a sequence of i.i.d. Gaussian noises $r_i \sim \mathcal N(0,1)$. We increase the Monte-Carlo runs to show the trends of the average deviation of the estimation from the true position
$\left| \frac{1}{N}\sum_{i=1}^{N}\left( {\hat x(\omega_i) - x^o}\right)\right|$ 
where $N$ is the number of Monte-Carlo runs, and $\hat x(\omega_i)$ is our coordinate estimate in the $i$-th run under a sample $\omega_i$ of the random runs. The results are presented in Fig.~\ref{finite:Bias}: the curve ``$[x]_i$'' represents the average deviation of the $i$-th element of $\hat x$ from $x^o$; the curve ``$\sigma^2$'' represents the average deviation of the noise variance estimation $\hat \sigma^2$ from $\sigma^2$. We see that in the finite sample case, the Bias-Eli and Noise-Est estimators are biased, since with the increase of Monte-Carlo runs, the biases do not go to $0$. While their unconstrained counterparts Bias-Eli-Lin and Noise-Est-Lin are both unbiased for the estimation of $x^o$. The Noise-Est-Lin is biased for the estimation of $\sigma^2$, which coincides with our claim in Subsection~\ref{finite_sample_noise_est}. In the previous section, we show that $\hat x^{\rm BEL}_m=\hat x^{\rm NEL}_m$. This is verified by our simulation results (Fig.~\ref{finite:Bias_Bias_Eli_Lin} and~\ref{finite:Bias_Noise_Est_Lin}). Hence, in the rest of our simulations, we only plot the results of the Noise-Est-Lin estimator and omit that of the Bias-Eli-Lin one.

\emph{Trial 2.} In this trial, we will show the biases of the proposed first-step estimators in the large sample case by letting $T$ go to large. The variance of range measurement noises is $1$. The $T$ is set to be $1$, $10$, $100$, $1000$ and $10000$ respectively, and for each $T$, we run $1000$ Monte-Carlo tests to evaluate the average deviation. Note that the range measurement frequency of UWB systems can reach $1000$ Hz~\cite{grossiwindhager2019snaploc}, which means the time to obtain $10000$ measurements for each sensor is about $10$ seconds. Therefore, this setting is realistic and can be realized when the target is static. We regard the average deviation as the bias, i.e.,
\begin{equation*}
\left|\mathbb E\left[ {\hat x - x^o} \right] \right| 
\approx
\left| \frac{1}{N}\sum_{i=1}^{N}\left( {\hat x(\omega_i) - x^o}\right)\right|,
\end{equation*}
where $N=1000$. The results are shown in Fig.~\ref{asymptotic:Bias}, where Fig.~\ref{asymptotic:Bias_S_LS} plots the biases of the S-LS estimator~\eqref{S-LS}. We see that the biases of the Bias-Eli and Noise-Est estimators tend to $0$ as the increase of $T$, while the S-LS solution~\cite{more1993generalizations,beck2008exact} does not. The difference lies in that the error terms $e_i$ in S-LS problem~\eqref{S-LS} are not mean $0$, while the error terms $\varepsilon_i$ in the Bias-Eli problem~\eqref{Bias_eliminate_problem} Noise-Est problem~\eqref{Noise_Estimate_problem} have $0$ mean. We also note that the Noise-Est and Noise-Est-Lin solutions are both asymptotically unbiased for $\sigma^2$.

\begin{figure*}[!t]
	\centering
	\begin{minipage}[b]{.48\textwidth}
		\centering
		\includegraphics[width=0.96\textwidth]{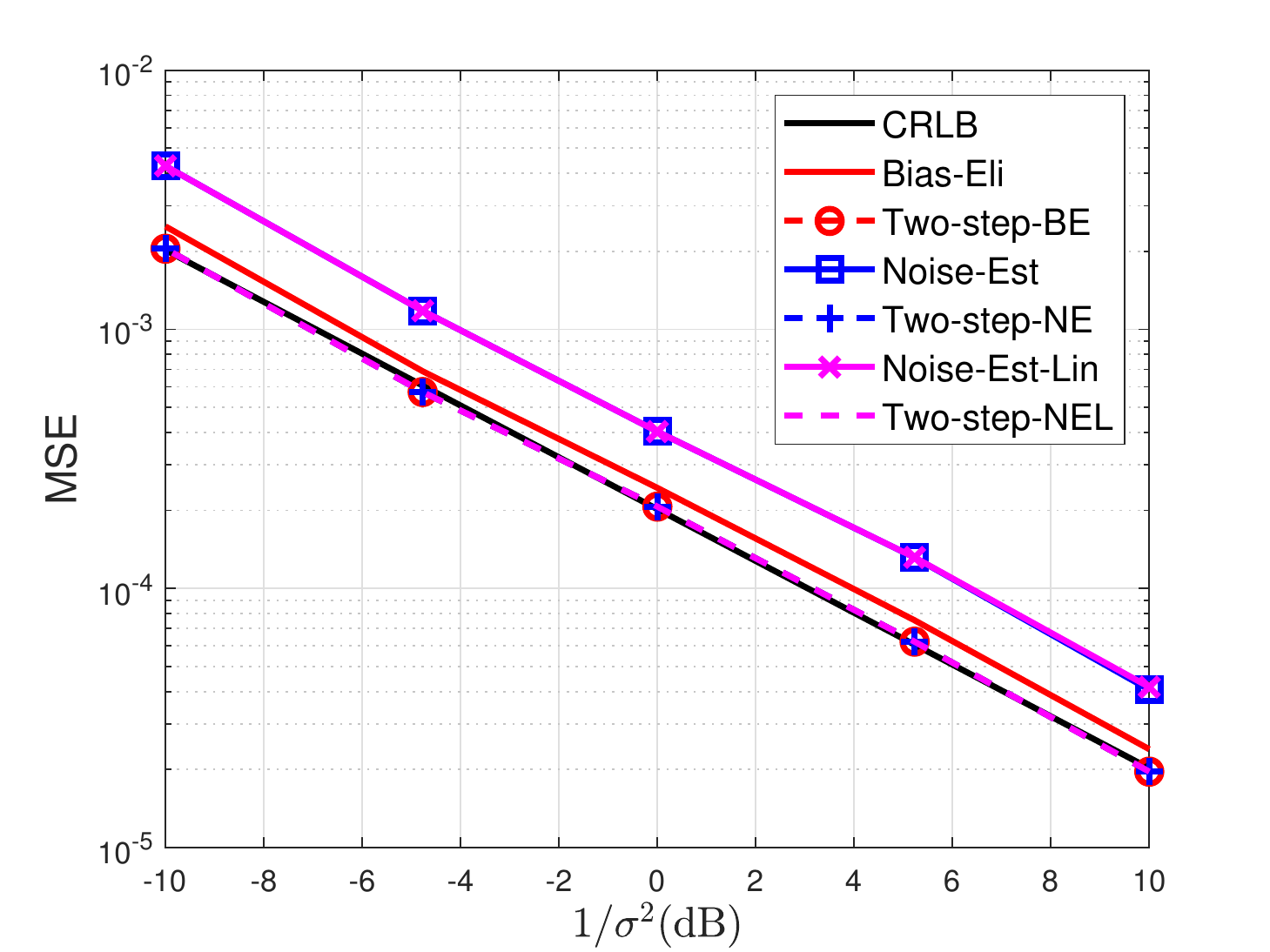}
		\caption{Asymptotic MSE comparison under different noise intensities.}
		\label{asymptotic_rmse_two_step_varied_noise}
	\end{minipage}\qquad
	\begin{minipage}[b]{.48\textwidth}
		\centering
		\includegraphics[width=0.96\textwidth]{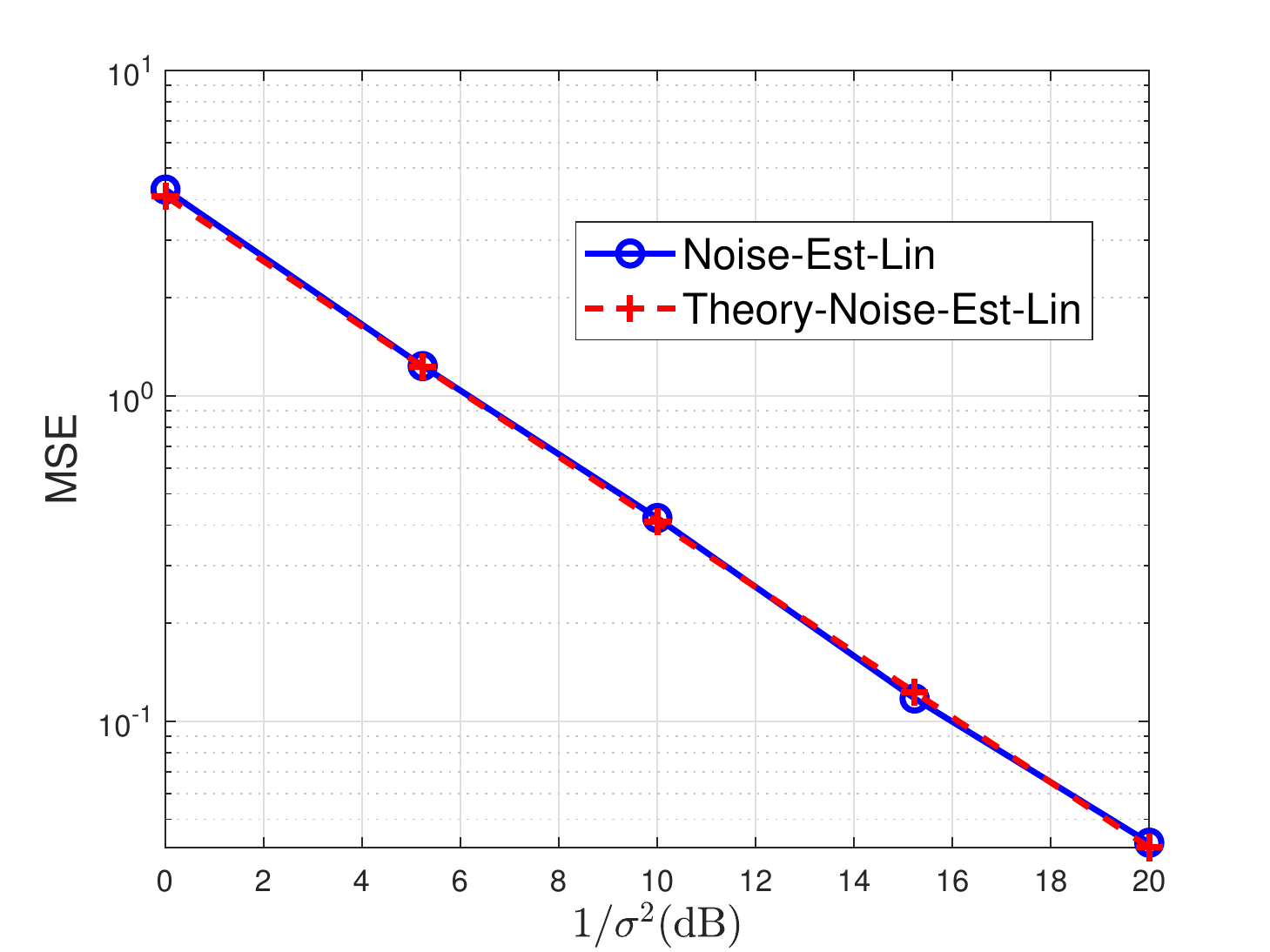}
		\caption{Theoretical and practical MSEs of the Noise-Est-Lin estimator under different noise intensities.}
		\label{theory_rmse_varied_noise}
	\end{minipage}
\end{figure*}
\begin{figure*}[!t]
	\centering
	\begin{minipage}[b]{.48\textwidth}
		\centering
		\includegraphics[width=0.96\textwidth]{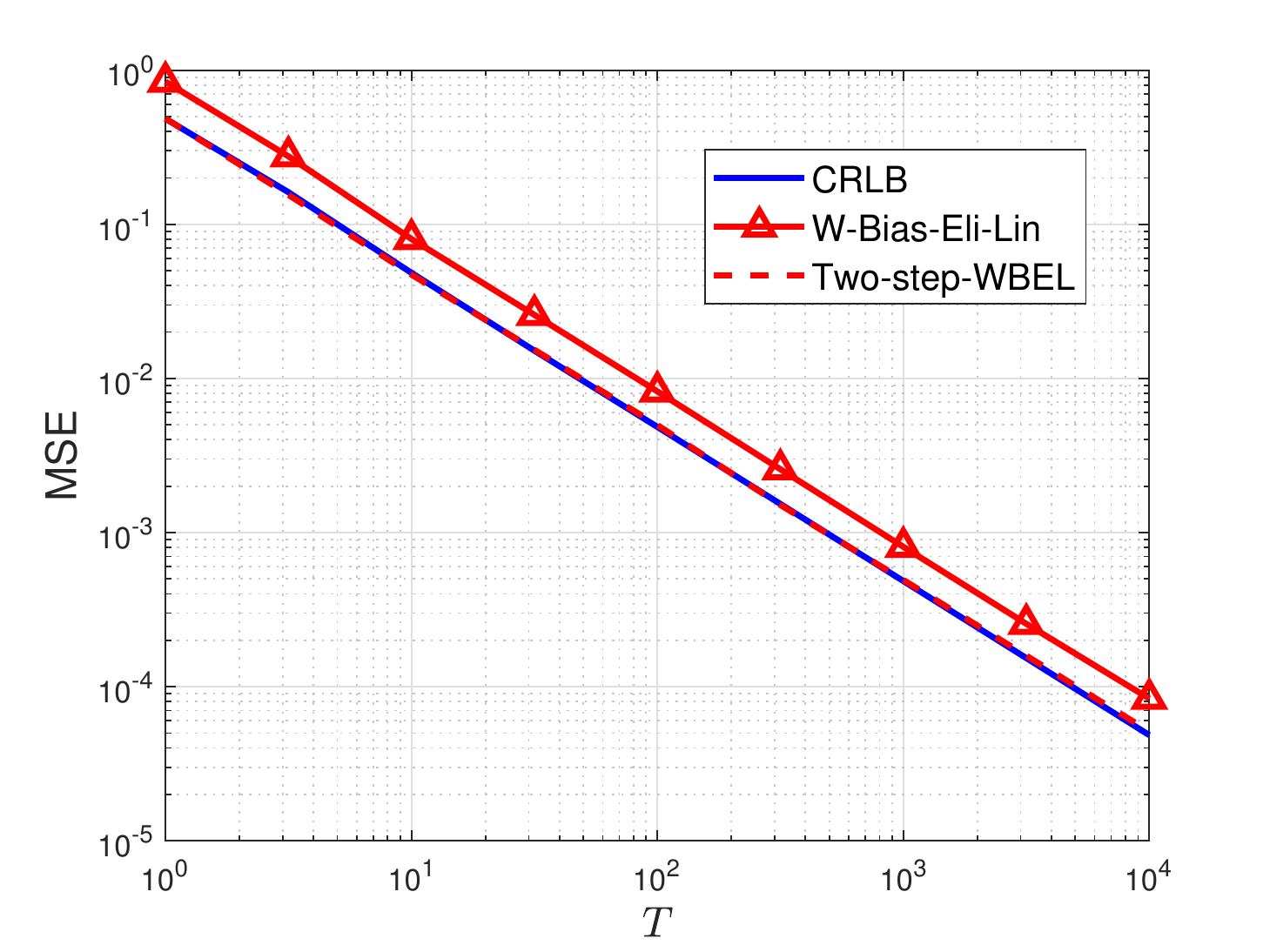}
		\caption{MSE of the W-Bias-Eli-Lin estimator and the corresponding two-step estimator.}
		\label{asymptotic_rmse_WBEL}
	\end{minipage}\qquad
	\begin{minipage}[b]{.48\textwidth}
		\centering
		\includegraphics[width=0.96\textwidth]{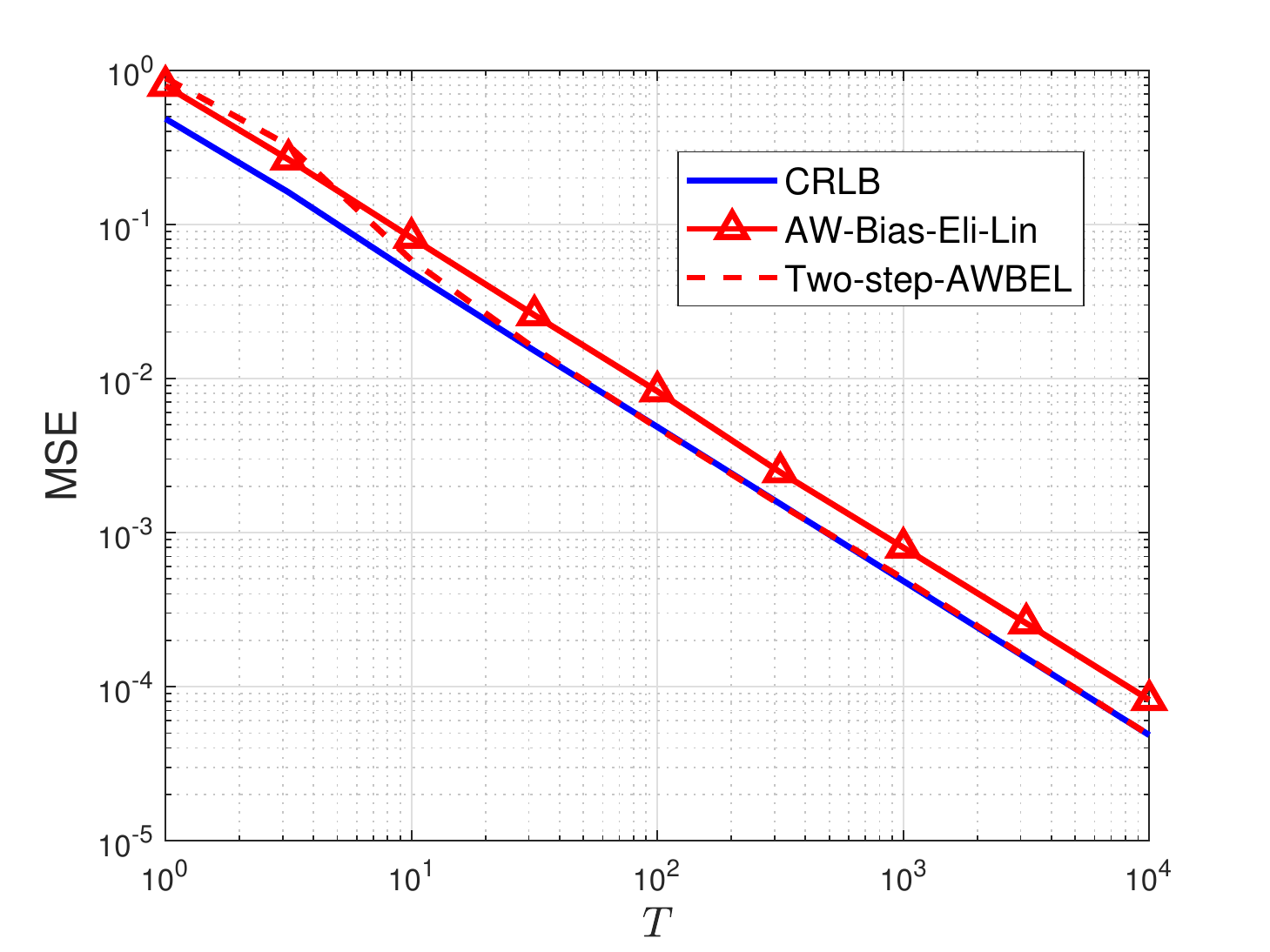}
		\caption{MSE of the AW-Bias-Eli-Lin estimator and the corresponding two-step estimator.}
		\label{asymptotic_rmse_AWBEL}
	\end{minipage}
\end{figure*}

\emph{Trial 3.} Now we are going to present the MSEs of $\hat x$ under varied numbers of measurements. The MSE of $\hat x$ is approximated as follows 
\begin{equation*}
\mathbb E\left[\left\| {\hat x - x^o} \right\|^2\right]
\approx
\frac{1}{N}\sum_{i=1}^{N}{\left\| {\hat x(\omega_i) - x^o} \right\|^2}.
\end{equation*}
We let $\sigma^2=1$ and run $1000$ Monte-Carlo tests to evaluate the MSEs. The results are plotted in Fig.~\ref{asymptotic_rmse}. We note that the proposed first-step estimators are all $\sqrt{m}$-consistent. Nevertheless, the S-LS solution is not. This is because the noise term $e_i$ in the S-LS problem are not mean $0$, which leads to biasedness of the result, even when $T$ goes to infinity. We also note that due to the extra estimation of the noise variance, the MSEs of the Noise-Est and Noise-Est-lin estimators are larger than that of the Bias-Eli estimator. 
An interesting phenomenon is that when $T$ is not large enough, the performance of the Noise-Est estimator is superior to that of the Noise-Est-Lin estimator, which implies that the inequality constraint~\eqref{Noise_Est_constraint2} plays a positive role in these cases. While when $T$ is sufficiently large, the Noise-Est estimate converges to the Noise-Est-Lin one with probability one as shown in~\eqref{limit_probability}. Therefore, they have the same MSEs in the large sample case. We can also see that the proposed two-step estimators perform better than the corresponding first-step estimators. When $T$ is small, the two-step estimators perform even better than the CRLB. This is because in the finite sample case, these estimator are biased and their MSEs are not necessarily larger than the CRLB. Nevertheless, as $T$ increases, all of them become unbiased ones and the MSEs equal the CRLB. It is noteworthy that although the MSEs of the first-step estimators vary obviously, all of the two-step estimators have the same MSEs when $T$ is large.

\emph{Trial 4.} In this trial, we investigate the relations between the improvement induced by the additional one-step Gauss-Newton iteration and the noise variance in the large sample case. The number of observations of each sensor $T$ is set as $10000$, and for each noise variance, we run $1000$ Monte-Carlo tests to approximate the MSEs. The noise variances are chosen as $0.1,0.3,1,3,10$, and the results are presented in Fig.~\ref{asymptotic_rmse_two_step_varied_noise}, where we take the x-axis as $10\log(1/\sigma^2)$. We see from the figure that although the MSEs of the first-step estimators behave differently, all of the two-step estimators have the same MSEs. This is guaranteed by Theorem~\ref{theorem_two_step} which says that any $\sqrt{m}$-consistent estimator can yield an estimate owning the same asymptotic property as the original LS solution by applying a one-step GN iteration. It is also notable that all of the two-step estimators are asymptotically efficient, i.e., their MSEs reach the CRLB.

\emph{Trial 5.} In this trial, we compare the theoretical and practical MSEs of the Noise-Est-Lin estimator under different noise intensities in the finite sample case. We set $T=1$ and run $1000$ Monte-Carlo tests for each choice of noise variance to obtain the practical MSEs. The theoretical MSEs are calculated according to~\eqref{finite_MSE_NE}. We present the results in Fig.~\ref{theory_rmse_varied_noise}. We see that the theoretical and practical MSEs of the Noise-Est-Lin estimator match well. 
%

\emph{Trial 6.} In this trial, we test the performances of the weighted version of the proposed estimators when the measurement variances among sensors are different. The measurement variances of ten sensors are set as $0.01$, $0.04$, $0.09$, $0.16$, $0.25$, $0.36$, $0.49$, $0.64$, $0.81$, and $1$, respectively. For each choice of $T$, we run $1000$ Monte-Carlo tests to calculate the MSEs. When the measurement variances of each sensor are known, a weighted two-step estimation scheme can be utilized to obtain asymptotically efficient solutions. The result is plotted in Fig.~\ref{asymptotic_rmse_WBEL}, from which we see that the first-step estimator is consistent, and the two-step estimator is asymptotically efficient. We note that when $T$ is small, the two-step estimator can also reach the CRLB in this setting, although it is not guaranteed theoretically. When the measurement variances are unknown, we first estimate the variance of each sensor, and then use an approximate weighted two-step method. Specifically, we set $\hat \sigma^2_i=1$ in the case of $T=1$ to avoid the denominator in~\eqref{AWLS} being $0$. The result is presented in Fig.~\ref{asymptotic_rmse_AWBEL}. We see that when $T$ is small, the estimates of noise variances are not precise, and thus the proposed estimator has relatively large MSEs. However, with the increase of $T$, the estimated variances converge to the true values, and the two-step solution can asymptotically reach the CRLB.

\section{Conclusions} \label{conclusion}
In this paper, we investigated the range-based localization problem. First, we proved that under some conditions on the measurement noises and sensor deployment, the LS estimator is strongly consistent and asymptotically efficient. This guarantees that as the measurements increase, the LS estimate can converge to the true object's position with the minimum variance. However, the LS problem is hard to solve. We then devised realizable estimators that achieve the same asymptotic properties as the LS one. The proposed estimators consist of two steps where the first step is to construct a $\sqrt{m}$-consistent estimator, and the second step is applying a one-step GN iteration. For the first step, we proposed two $\sqrt{m}$-consistent estimators (Bias-Eli and Noise-Est) which involve solving two QCQP optimization problems. Moreover, we showed that the closed-form estimates (Bias-Eli-Lin and Noise-Est-Lin) obtained by discarding the constraints of the QCQP problems are also $\sqrt{m}$-consistent. We noticed that when $m$ is fixed, the nonlinear least squares solutions Bias-Eli and Noise-Est are biased for the estimation of $x^o$, while the linear least squares solutions Bias-Eli-Lin and Noise-Est-Lin are unbiased. As the number of sensors goes to infinity, all of the proposed four first-step estimators are asymptotically unbiased, based on which the second-step estimators can achieve the CRLB. 

\appendices

\section{Existence of some tail products and tail norms} \label{existence_of_tail_product_norm}
Given Assumption~\ref{convergence_of_sample_distribution}, the Helly-Bray Theorem~\cite{billingsley2013convergence} shows the convergence of the sample mean of any bounded, continuous and real-valued function.
\begin{lemma}[Helly-Bray Theorem~\cite{billingsley2013convergence}] \label{Helly_Bray_Thoerem}
	Let $P_1,P_2,\ldots$ be probability measures on a sample space $\Omega$. Then $P_m$ converges weakly to $P$ if and only if
	\begin{equation*}
	\int_{\Omega} gd P_m \rightarrow \int_{\Omega} gd P 
	\end{equation*}
	for all bounded, continuous and real-valued functions on $\Omega$.
\end{lemma}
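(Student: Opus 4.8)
The plan is to prove the two implications separately, reading ``$P_m$ converges weakly to $P$'' as convergence of the associated distribution functions $F_m(x) \to F(x)$ at every continuity point $x$ of $F$ (the notion appearing in Assumption~\ref{convergence_of_sample_distribution}), and taking $\Omega = \mathbb R^n$ as in our setting; otherwise, for a general metric space $\Omega$, weak convergence is \emph{defined} by the integral condition and the statement is vacuous. The goal is thus to show that convergence of distribution functions at continuity points is equivalent to $\int_{\Omega} g\, dP_m \to \int_{\Omega} g\, dP$ for every bounded continuous $g$.

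First I would treat the \emph{if} direction: assuming the integral convergence, I want weak convergence of the distribution functions. Fix a continuity point $x$ of $F$ and $\epsilon > 0$, and write $e=(1,\ldots,1)^\top$. The plan is to sandwich the discontinuous indicator of the orthant $\{z \preceq x\}$ between two bounded continuous ``ramp'' functions $g^-_\epsilon \le \mathbbm{1}_{\{z \preceq x\}} \le g^+_\epsilon$ chosen so that $g^-_\epsilon \ge \mathbbm{1}_{\{z \preceq x-\epsilon e\}}$ and $g^+_\epsilon \le \mathbbm{1}_{\{z \preceq x+\epsilon e\}}$. Applying the hypothesis to $g^+_\epsilon$ and $g^-_\epsilon$ yields
\begin{equation*}
F(x - \epsilon e) \le \liminf_m F_m(x) \le \limsup_m F_m(x) \le F(x + \epsilon e),
\end{equation*}
and letting $\epsilon \to 0$, continuity of $F$ at $x$ forces both outer terms to $F(x)$, so $F_m(x) \to F(x)$.

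For the \emph{only if} direction I would invoke the Skorokhod representation: given $F_m \to F$ weakly, construct on a common probability space random vectors $Y_m \sim P_m$ and $Y \sim P$ with $Y_m \to Y$ almost surely. Then for bounded continuous $g$, continuity gives $g(Y_m) \to g(Y)$ almost surely, and since $g$ is bounded the bounded convergence theorem yields $\int_{\Omega} g\, dP_m = \mathbb E[g(Y_m)] \to \mathbb E[g(Y)] = \int_{\Omega} g\, dP$. A more self-contained alternative avoids Skorokhod by partitioning $\mathbb R^n$ along a grid of continuity hyperplanes of $F$, approximating $g$ uniformly by a simple function constant on the resulting cells, and controlling the tail contribution via the tightness that $F_m \to F$ to a genuine distribution function of total mass one guarantees.

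I expect the \emph{only if} direction to be the main obstacle. The difficulty is twofold: the Skorokhod route is clean but imports a nontrivial construction, whereas the direct route must simultaneously handle the approximation of a merely bounded (not compactly supported) $g$ and the possible escape of mass to infinity, which is exactly where tightness --- and hence the hypothesis that the limit is a genuine probability measure of full mass --- enters. The \emph{if} direction, by contrast, is routine once the ramp functions are written down.
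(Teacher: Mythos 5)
The paper does not prove this lemma at all: it is quoted as a textbook result from Billingsley, and its only role is as an imported tool in Appendix~\ref{existence_of_tail_product_norm} to justify the existence of the various tail products and tail norms. So there is no paper proof to match; what you have done is reconstruct the classical Helly--Bray/portmanteau argument, and your reconstruction is essentially sound. Your reading of the statement is also the right one: for an abstract sample space $\Omega$, weak convergence is usually \emph{defined} by the integral condition and the lemma is vacuous, so the mathematical content lies entirely in the equivalence with convergence of distribution functions at continuity points on $\mathbb R^n$, which is exactly the form in which the paper uses it (Assumption~\ref{convergence_of_sample_distribution} is phrased in terms of $F_m \to F_{\mu}$). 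Your \emph{if} direction --- sandwiching the orthant indicator between continuous ramp functions and using continuity of $F$ at $x$ along the diagonal to collapse $F(x+\epsilon e)$ and $F(x-\epsilon e)$ to $F(x)$ --- is the standard argument and is correct as written.

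One caveat on your \emph{only if} direction: invoking the Skorokhod representation theorem in the form found in Billingsley is circular here, because that theorem takes integral-sense weak convergence as its hypothesis, which is precisely the conclusion you are trying to establish. The non-circular version of that route builds the almost-surely convergent representation directly from CDF convergence; this is the classical quantile-transform construction in dimension one, but it requires genuine additional work in $\mathbb R^n$ (and the paper needs $n=2,3$). Your fallback argument --- pick a box of $P$-mass at least $1-\epsilon$ whose corner coordinates avoid the at most countably many hyperplanes of positive $P$-mass, so that $P_m$ of the box converges by inclusion--exclusion, partition it by continuity hyperplanes, approximate $g$ uniformly on the box by a cell-wise constant function, and bound the tail by $\sup|g|$ times the escaping mass --- is self-contained and correct, and it is the one you should promote to the primary proof; with it the proposal is complete.
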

The existences of the following tail products and tail norms are based on Assumptions~\ref{compact_set_interior},~\ref{convergence_of_sample_distribution}, and Lemma~\ref{Helly_Bray_Thoerem}.

\noindent $\left\| f(x)-f(x^o) \right\|_t$: By Lemma~\ref{Helly_Bray_Thoerem}, $m^{-1} \sum_{i=1}^{m} f_i(x_1) f_i(x_2)$ converges to $\mathbb E_{a \sim \mu} \left[ \|a-x_1\| \|a-x_2\|\right]$, i.e., $\left\langle f(x_1),f(x_2)\right\rangle_t $ for all $x_1$ and $x_2$ in $\mathcal X$, where $\mathbb E_{a \sim \mu}$ is taken over $a$. Hence, the tail norm $\left\| f(x)-f(x^o) \right\|_t$ is well defined.

\noindent $\left\|  f^2(x)-f^2(x^o)\right\|_t $: 
By Lemma~\ref{Helly_Bray_Thoerem}, $\left\langle f^2(x_1),f^2(x_2)\right\rangle_t$ exists for all $x_1,x_2 \in \mathcal X$. Then the tail norm $\left\|  f^2(x)-f^2(x^o)\right\|_t$ exists.

\noindent $\left\langle f^3(x_1),f^3(x_2)\right\rangle_t$ and $\left\langle f^2f(x_1,x_2),f^2f(x_1,x_2)\right\rangle_t$: This is straightforward from Lemma~\ref{Helly_Bray_Thoerem}.

\noindent $\left\langle p,q \right\rangle_t$, where $p \in \left\lbrace f(x_1),f_j'(x_1),f_{jk}''(x_1)\right\rbrace $ and $q \in \left\lbrace f(x_2),f_j'(x_2),f_{jk}''(x_2)\right\rbrace $ for all $x_1$ and $x_2$ in $\mathcal X$: Based on Assumption~\ref{compact_set_interior}, for any $x \in \mathcal X$, $\|a-x\|$, $\frac{x-a}{\|a-x\|}$, and $\frac{I_n}{\|a-x\|}-\frac{(x-a)(x-a)^\top}{\|a-x\|^3}$ are bounded and continuous in $\mathcal A$. Hence these tail products exist by Lemma~\ref{Helly_Bray_Thoerem}. 

\section{Proof of Theorem~\ref{theorem_LS}} \label{proof_theorem_ls}
Some conditions, see (a)-(d) below, are given in~\cite{jennrich1969asymptotic} to guarantee that a general nonlinear LS estimator is consistent and asymptotically normal. Here we rephrase them using the notations in this paper. 
\begin{enumerate}
	\item [(a)] The functions $f_i(x)$ are continuous on a compact subset of a Euclidean space and the noises $r_i$ are i.i.d. with zero mean and finite variance. 
	\item [(b)] The tail product of $f=(f_i)$ with itself exists and that $\|f(x)-f(x^o)\|^2_t$ has a unique minimum at $x=x^o$.
	\item [(c)] The derivatives $f'_{ji}$ and $f''_{jki}$ exist and continuous on $\mathcal X$ and that all tail products of the form $\left\langle p,q \right\rangle_t$, where $p \in \left\lbrace f(x_1),f_j'(x_1),f_{jk}''(x_1)\right\rbrace $, $q \in \left\lbrace f(x_2),f_j'(x_2),f_{jk}''(x_2)\right\rbrace $ exist for all $x_1$ and $x_2$ in $\mathcal X$.
	\item [(d)] The true parameter vector $x^o$ is an interior point of $\mathcal X$ and the matrix $M(x^o)$ is nonsingular. 
\end{enumerate}

Note that the Assumptions~\ref{assumption:Gaussian_noise}-\ref{nonsingular_assumption} in this paper are specific in the context of TOA-based localization. All what we need to do is to show that under Assumptions~\ref{assumption:Gaussian_noise}-\ref{nonsingular_assumption}, the general conditions (a)-(d) in~\cite{jennrich1969asymptotic} can be satisfied.
The existence of $\left\langle p,q \right\rangle_t$, where $p \in \left\lbrace f(x_1),f_j'(x_1),f_{jk}''(x_1)\right\rbrace $, $q \in \left\lbrace f(x_2),f_j'(x_2),f_{jk}''(x_2)\right\rbrace $ for all $x_1$ and $x_2$ in $\mathcal X$ is illustrated in Appendix~\ref{existence_of_tail_product_norm}. Then, it is straightforward that, with Assumptions~\ref{assumption:Gaussian_noise}-\ref{unique_solution_assumption}, the conditions (a)-(c) in~\cite{jennrich1969asymptotic} hold. The only condition that needs to be further verified is Assumption (d), i.e., $M(x^o)$ is nonsingular. Here, we give the proof of 2D case. The argument of 3D case is similar and will be omitted. For any $x \neq 0$, define $\mathcal A^{\perp}_x=\left\lbrace a \in \mathcal A \mid \langle x,a-x^o \rangle = 0 \right\rbrace$, and $\overbar {{\mathcal A}^{\perp}_x}=\mathcal A \setminus \mathcal A^{\perp}_x$. Given Assumption~\ref{nonsingular_assumption}, we have $\mu\left(\overbar {{\mathcal A}^{\perp}_x} \right) >0$. Then we can decompose $M(x^o)$ as
\begin{align*}
M(x^o) = &\lim\limits_{m \rightarrow \infty} \frac{1}{m} \sum_{i=1}^{m} \nabla f_i(x^o) \nabla f_i(x^o)^\top \\
 =&\mathbb E_{a \sim \mu}\left[\nabla \|a-x^o\| \nabla \|a-x^o\|^\top \right] \\
 = &\underbrace{\int_{\mathcal A^{\perp}_x} \nabla \|a-x^o\| \nabla \|a-x^o\|^\top d \mu(a)}_{:= M^{\perp}_x(x^o)} \\
& +\underbrace{\int_{\overbar {{\mathcal A}^{\perp}_x}} \nabla \|a-x^o\| \nabla \|a-x^o\|^\top d \mu(a)}_{:=\overbar {{M}^{\perp}_x}(x^o)}. 
\end{align*}
Since $\mu\left(\overbar {{\mathcal A}^{\perp}_x} \right) >0$, we have $\overbar {{M}^{\perp}_x}(x^o) \neq 0$. Therefore,
\begin{align*}
x^\top M(x^o) x &= x^\top \left(M^{\perp}_x(x^o)+ \overbar {{M}^{\perp}_x}(x^o)\right)  x \\
& = x^\top \overbar {{M}^{\perp}_x}(x^o) x \\
& >0,
\end{align*}
which implies $M(x^o)$ is positive-definite and completes the proof. 

\section{Proof of Theorem~\ref{consistency_of_bias_eli}}
\label{proof_of_BE_consistency}
First, we show that ${\hat x}^{\rm BE}_m$ is consistent. Based on Assumption~\ref{convergence_of_sample_distribution}, the function $P(x):=\left\|  f^2(x)-f^2(x^o)\right\|_t  ^2$ exists. See detailed arguments in Appendix~\ref{existence_of_tail_product_norm}. Moreover, since $P(x)$ is a limit of uniformly convergent sequence of continuous functions, $P(x)$ is continuous. Recalling that $\left\| f(x)-f(x^o) \right\|^2_t$ has a unique minimum at $x=x^o$, it can be verified that $x=x^o$ is also the unique minimum of $P(x)$. Denote the objective function in~\eqref{Bias_eliminate_problem} as $P_m(x)$. Let $f^k(x):=\left(f^k_i(x)\right) $, where $f^k_i(x)=\|a_i-x\|^k$.
Before giving the convergence of $P_m(x) $, we will show that $\left\langle f^2(x^o)-f^2(x),\varepsilon \right\rangle_t  =0$ for almost every $\varepsilon:=(\varepsilon_i)$, which depends on the following lemma.
\begin{lemma}[{\cite[Theorem 3]{jennrich1969asymptotic}}]  \label{tail_product_with_noise}
	If a sequence of noises $e$ satisfies Assumption~\ref{assumption:Gaussian_noise}, and if the tail norm of a sequence $f$ of real numbers exists, then $\langle f,e \rangle_t$ exists and equals $0$ for almost every $e$.
\end{lemma}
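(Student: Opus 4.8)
The plan is to recognize this claim as a strong law of large numbers for a weighted sum of independent random variables and to reduce it to Kolmogorov's classical convergence criterion. Writing $S_t=\sum_{i=1}^t f_i e_i$, the tail product $\langle f,e\rangle_t$ exists and equals $0$ \emph{precisely} when $S_t/t\to 0$, so the whole statement amounts to proving $S_t/t\to 0$ almost surely. Since the $f_i$ are deterministic constants and the $e_i$ are i.i.d.\ with zero mean and finite variance $\sigma^2$ by Assumption~\ref{assumption:Gaussian_noise}, the summands $f_i e_i$ are independent with $\mathbb E[f_i e_i]=0$ and $\mathrm{Var}(f_i e_i)=\sigma^2 f_i^2$.

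First I would extract from the hypothesis the single analytic fact needed about $(f_i)$. The existence of the tail norm means the Cesàro averages $t^{-1}\sum_{i=1}^t f_i^2$ converge to $\|f\|_t^2<\infty$; in particular the partial sums $B_t:=\sum_{i=1}^t f_i^2$ satisfy $B_t=O(t)$. From this I would deduce the summability condition $\sum_{i=1}^\infty f_i^2/i^2<\infty$ by a summation-by-parts (Abel) argument: writing $\sum_{i=1}^N f_i^2 i^{-2}=B_N N^{-2}+\sum_{i=1}^{N-1}B_i\left(i^{-2}-(i+1)^{-2}\right)$ and using $B_i=O(i)$ together with $i^{-2}-(i+1)^{-2}=O(i^{-3})$, the right-hand side stays bounded as $N\to\infty$.

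With $\sum_{i=1}^\infty \sigma^2 f_i^2/i^2<\infty$ in hand, I would invoke Kolmogorov's convergence theorem (the one-series theorem) applied to the independent, zero-mean variables $f_i e_i/i$: since $\sum_i \mathrm{Var}(f_i e_i/i)=\sigma^2\sum_i f_i^2/i^2<\infty$, the series $\sum_i f_i e_i/i$ converges almost surely. Kronecker's lemma, with weights $i\uparrow\infty$, then turns this into $t^{-1}\sum_{i=1}^t f_i e_i\to 0$ almost surely, which is exactly $\langle f,e\rangle_t=0$ for almost every $e$. Equivalently, one may cite the general Kolmogorov strong law for independent summands with $\sum_i \mathrm{Var}(X_i)/i^2<\infty$.

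The main obstacle---the only step that is not a direct citation---is passing from the Cesàro convergence of $t^{-1}\sum_{i\le t} f_i^2$, which is all the hypothesis supplies, to the stronger series bound $\sum_i f_i^2/i^2<\infty$ demanded by Kolmogorov's criterion. The Abel summation above handles this, and its validity rests only on $B_t=O(t)$, so no regularity of $(f_i)$ beyond the existence of the tail norm is required. I note in passing that the Gaussianity in Assumption~\ref{assumption:Gaussian_noise} is inessential here; only independence and a finite second moment are used.
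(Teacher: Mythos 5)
Your proof is correct, and there is in fact nothing in the paper to compare it against: the paper never proves this lemma, but imports it wholesale as \cite[Theorem 3]{jennrich1969asymptotic} and simply invokes it in the appendices. Your argument---identify existence-and-vanishing of $\langle f,e\rangle_t$ with the almost-sure statement $t^{-1}\sum_{i\le t}f_ie_i\to 0$, extract $B_t=\sum_{i\le t}f_i^2=O(t)$ from the existence of the tail norm, upgrade this to $\sum_i f_i^2/i^2<\infty$ by Abel summation, then apply Kolmogorov's one-series theorem to the independent zero-mean variables $f_ie_i/i$ followed by Kronecker's lemma---is the classical route, and is in substance the proof in Jennrich's original paper, so you have correctly reconstructed the cited argument rather than found a new one. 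The one step that is not a direct citation, the passage from Ces\`aro boundedness to the series bound, is handled properly: $B_NN^{-2}\to 0$, the telescoped terms satisfy $B_i\left(i^{-2}-(i+1)^{-2}\right)=O(i^{-2})$, and since the partial sums $\sum_{i\le N}f_i^2/i^2$ are nondecreasing, boundedness gives convergence. Your closing remark that Gaussianity in Assumption~\ref{assumption:Gaussian_noise} is inessential (only independence, zero mean, and finite variance enter) is also accurate and matches the generality of Jennrich's statement; within this paper the Gaussian hypothesis is used elsewhere (e.g., for asymptotic efficiency relative to the CRLB), not here.
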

Define $f^2f(x_1,x_2):=\left(f_i^2(x_1)f_i(x_2) \right) $. Given Assumption~\ref{convergence_of_sample_distribution}, $\left\langle f^3(x^o),f^3(x^o)\right\rangle_t$ and $\left\langle f^2f(x,x^o),f^2f(x,x^o)\right\rangle_t$ exist for all $x \in \mathcal X$. See detailed arguments in Appendix~\ref{existence_of_tail_product_norm}. Then in virtue of Lemma~\ref{tail_product_with_noise}, we have
\begin{align*}
&m^{-1} \sum_{i=1}^{m} \left(f^2_i(x^o)-f^2_i(x) \right) \varepsilon_i \\
=&m^{-1} \sum_{i=1}^{m} \left(f^2_i(x^o)-f^2_i(x) \right) \left(2f_i(x^o)r_i+r_i^2-\sigma^2 \right)  \\
\rightarrow & 2 \left\langle f^3(x^o),r \right\rangle_t -2 \left\langle f^2f(x,x^o),r \right\rangle_t \\
&+ \left\langle f^2(x^o)-f^2(x),r^2-\sigma^2 \right\rangle_t\\
=&0,
\end{align*}
where $r^k:=\left( r_i^k\right) $.
Hence, for almost every $\varepsilon$, we have
\begin{align*}
m^{-1} P_m(x) &=m^{-1}\sum_{i=1}^{m}\left( f_i(x)^2-f_i(x^o)^2-\varepsilon_i \right) ^2 \\
&\rightarrow \left\|f^2(x^o)-f^2(x) \right\|_t^2 +\|\varepsilon\|_t^2 \\
& =P(x)+\|\varepsilon\|_t^2,
\end{align*}
uniformly for $x \in \mathcal X$, where 
\begin{align*}
\|\varepsilon\|_t^2  =&\lim\limits_{m \rightarrow \infty} m^{-1} \sum_{i=1}^{m} \varepsilon_i^2  \\
=&\lim\limits_{m \rightarrow \infty} m^{-1} \sum_{i=1}^{m} \left(2f_i(x^o)r_i +r_i^2-\sigma^2\right) ^2  \\
 =&4\left\langle f^2(x^o),r^2-\sigma^2 \right\rangle_t+4\sigma^2\|f(x^o)\|^2_t+2\sigma^4 \\
&+4\left\langle f(x^o),r^3 \right\rangle_t -4\sigma^2\left\langle f(x^o),r \right\rangle_t\\
 =&4 \sigma^2 \left\|f(x^o) \right\|_t^2  +2\sigma^4. 
\end{align*}
Note that $\hat x^{\rm BE}_m$ minimizes $m^{-1} P_m(x)$ for any $m \in \mathbb N$. Then $\left(\hat x^{\rm BE}_m \right)$ forms a sequence of minimizers of $m^{-1} P_m(x)$. Let $x'$ be a limit point of the sequence $\left(\hat x^{\rm BE}_m \right) $, and let $\left( \hat x^{\rm BE}_{m_k}\right)$ be any subsequence which converges to $x'$. By the continuity of $P$ and the uniform convergence of $m^{-1}P_m$ to $P+\|\varepsilon\|_t^2$, $m_k^{-1}P_{m_k}\left(\hat x^{\rm BE}_{m_k} \right)  \rightarrow P(x')+\|\varepsilon\|_t^2$ as $k \rightarrow \infty$. 
Since $\hat x^{\rm BE}_{m_k}$ is the global minimizer of $m_k^{-1} P_{m_k}(x)$, $m_k^{-1} P_{m_k}\left( \hat x^{\rm BE}_{m_k}\right)  \leq m_k^{-1} P_{m_k}(x^o)$. It follows that by letting $k \rightarrow \infty$, $P(x')+\|\varepsilon\|_t^2 \leq P(x^o)+\|\varepsilon\|_t^2=\|\varepsilon\|_t^2$. Hence $P(x')=0$. Since $P$ has a unique minimum at $x^o$, $x'=x^o$. Thus for almost every $\varepsilon$, $\hat x^{\rm BE}_{m} \rightarrow x^o$. 

Since ${\hat x}^{\rm BE}_m$ is consistent, and $\nabla^2 P_m(x)$ is continuous, as a basic conclusion of large sample analysis in nonlinear estimation, when $m \rightarrow \infty$, we have: 
\begin{equation}  \label{convergencen_of_mse}
\mathbb {MSE}\left( \hat x^{\rm BE}_m\right)= \overline F''^{-1}\mathbb E \left[\nabla P_m(x^o) \nabla P_m(x^o)^\top \right] \overline F''^{-1}, 
\end{equation}
where $\overline F''=\lim_{m \rightarrow \infty} \nabla^2 P_m(x^o)$. For a detailed treatment, one can refer to~\cite{viberg1991sensor}. We are now on the point to calculate $\mathbb E \left[\nabla P_m(x^o) \nabla P_m(x^o)^\top \right]$ and $\overline F''$. The gradient of $P_m(x)$ (recall that $P_m(x)$ is the objective function in~\eqref{Bias_eliminate_problem}) is as follows:
\begin{equation*}
\nabla P_m(x)=2 \left[
\begin{array}{cc}
I & 2x
\end{array}
\right] A^\top (Ay-b),  
\end{equation*}
\hspace{-2mm}which gives 
\begin{equation*}
\mathbb E \left[\nabla P_m(x^o) \nabla P_m(x^o)^\top \right]=4 \left[
\begin{array}{cc}
I & 2x^o
\end{array}
\right] A^\top \Lambda A \left[
\begin{array}{cc}
I \\
2{x^o}^\top
\end{array}
\right],
\end{equation*}
where 
\begin{equation} \label{covariance_of_noises}
\Lambda={\rm diag} \left(4f_1^2(x^o)\sigma^2+2 \sigma^4,\ldots,4f_m^2(x^o)\sigma^2+2 \sigma^4 \right).
\end{equation}
Then we calculate the second gradient as
{\small
\begin{equation*}
\nabla^2 P_m(x)=2 \left[
\begin{array}{cc}
I & 2x
\end{array}
\right] A^\top A \left[
\begin{array}{cc}
I \\
2{x}^\top
\end{array}
\right] +4\left[A^\top (Ay-b) \right] _{n+1} I,
\end{equation*}}
which gives 
\begin{equation*}
\overline F''=2 \left[
\begin{array}{cc}
I & 2x^o
\end{array}
\right] A^\top A \left[
\begin{array}{cc}
I \\
2{x^o}^\top
\end{array}
\right].
\end{equation*} 
Note that (in virtue of Assumption~\ref{convergence_of_sample_distribution})
\begin{equation} \label{convergence_of_AA}
\lim\limits_{m \rightarrow \infty} \frac{A^\top A}{m}=\mathbb E_{a \sim \mu} \left[\left[\begin{array}{cc} 
-2a \\
1
\end{array}
\right] \left[ \begin{array}{cc} 
-2a^\top & 1
\end{array}\right] \right] 
\end{equation}
and 
{\small
\begin{equation} \label{convergence_of_APA}
\begin{split}
&\lim\limits_{m \rightarrow \infty} \frac{A^\top \Lambda A}{m} \\
&=\mathbb E_{a \sim \mu} \left[ \left(4\|a-x^o\|^2 \sigma^2+2 \sigma^4 \right) \left[\begin{array}{cc} 
-2a \\
1
\end{array}
\right] \left[ \begin{array}{cc} 
-2a^\top & 1
\end{array}\right] \right] ,
\end{split}
\end{equation}} 
\hspace{-2mm}where $\mathbb E_{a \sim \mu}$ is taken over $a$ with respect to $\mu$. Hence, we have $m {\rm tr}\left(\mathbb {MSE}\left( \hat x^{\rm BE}_m\right) \right)$ converges to a constant as $m$ goes to infinity, which implies that $m {\rm tr}\left(\mathbb {MSE}\left( \hat x^{\rm BE}_m\right) \right)$ is bounded, i.e., there exists a $\varphi >0$ such that $m {\rm tr}\left(\mathbb {MSE}\left( \hat x^{\rm BE}_m\right) \right) < \varphi$ for arbitrary $m$.
In addition, according to Chebyshev's inequality, we have 
\begin{equation}
P\left(\sqrt{m} \|{\hat x}^{\rm BE}_m-x^o\| >N\right) \leq \frac{m {\rm tr}\left(\mathbb {MSE}\left( \hat x^{\rm BE}_m\right) \right) }{N^2}, 
\end{equation}
for any $N>0$. Therefore, for any $\epsilon>0$, we can select $N$ as $\sqrt{\frac{\varphi}{\epsilon}}$ to ensure
\begin{equation*}
P\left(\sqrt{m} \|{\hat x}^{\rm BE}_m-x^o\| >N\right) < \frac{\varphi }{N^2} = \epsilon
\end{equation*}
for arbitrary $m$, which shows the $\sqrt{m}$-consistency of $\hat x^{\rm BE}_m$ and completes the proof. 

\section{Proof of Theorem~\ref{consistency_of_linear_bias_eli_estimate}}
\label{proof_of_BEL_consistency}
The proof is supported by the following lemma. 
\begin{lemma} \label{property_of_bounded_variance}
	Let $\{X_k\}$ be a sequence of independent random variables with $\mathbb E[X_k]=0$ and $\mathbb E\left[{X_k}^2 \right]  \leq \varphi <\infty$ for all $k$. Then, there holds $\sum_{k=1}^{m}X_k/\sqrt{m}=O_p(1)$.
\end{lemma}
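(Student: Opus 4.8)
The plan is to verify the definition of $O_p(1)$ directly, by bounding the second moment of the normalized sum and then converting this into a tail bound via Chebyshev's inequality. Set $S_m=\sum_{k=1}^{m}X_k$. First I would compute its first two moments: since each summand has zero mean, $\mathbb E[S_m]=0$, and when expanding $\mathbb E[S_m^2]$ the key observation is that independence combined with the zero-mean hypothesis forces every off-diagonal term to vanish, since $\mathbb E[X_jX_k]=\mathbb E[X_j]\mathbb E[X_k]=0$ for $j\neq k$. Hence $\mathbb E[S_m^2]=\sum_{k=1}^{m}\mathbb E[X_k^2]$.

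Next I would apply the uniform bound $\mathbb E[X_k^2]\le\varphi$ to obtain $\mathbb E[S_m^2]\le m\varphi$, so that the normalized quantity has a second moment bounded \emph{uniformly} in $m$, namely $\mathbb E\big[(S_m/\sqrt{m})^2\big]=\mathbb E[S_m^2]/m\le\varphi$. This uniform control is the crux of the argument, and it is precisely what singles out $1/\sqrt{m}$ as the correct normalization; any slower growth of the divisor would destroy the bound, while any faster growth would make it trivially loose.

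Finally I would invoke Chebyshev's inequality to pass from the moment bound to a probability bound: for any $N>0$, $P\big(|S_m/\sqrt{m}|>N\big)\le\mathbb E\big[(S_m/\sqrt{m})^2\big]/N^2\le\varphi/N^2$. Given any $\epsilon>0$, choosing $N=\sqrt{\varphi/\epsilon}$ yields $P\big(|S_m/\sqrt{m}|>N\big)\le\epsilon$ for \emph{every} $m$, which is exactly the definition of $O_p(1)$ (with the threshold $M$ arbitrary, since the bound is uniform). There is no genuine obstacle in this lemma; the only step meriting any care is the cancellation of the cross-covariance terms, which is immediate from independence and the zero-mean assumption, and the verification that the chosen $N$ works simultaneously for all $m$.
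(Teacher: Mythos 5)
Your proof is correct and follows essentially the same route as the paper: bound $\mathbb E\left[\left(\sum_{k=1}^m X_k/\sqrt{m}\right)^2\right]$ by $\varphi$ uniformly in $m$, then apply Chebyshev's inequality with $N=\sqrt{\varphi/\epsilon}$. The only difference is that you spell out the cancellation of the cross terms via independence, a step the paper's proof leaves implicit.
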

\begin{proof}
	Let $\bar X_m=\sum_{k=1}^{m}X_k/\sqrt{m}$, then we have $\mathbb E\left[{{{}\bar X}_k}^2 \right]  \leq \varphi$. Hence, by the Chebyshev's inequality, we obtain
	\begin{equation*}
	P\left(|\bar X_m|\geq N \right) \leq \frac{\varphi}{N^2}, 
	\end{equation*}
	for any $N>0$. For any $\epsilon >0$, we can set $N=\sqrt{\frac{\varphi}{\epsilon}}$ such that $P\left(|\bar X_m|\geq N \right) \leq \epsilon$ for all $m$, which completes the proof. 
\end{proof}
Recall that {\small $\hat y^{\rm BEL}_m=(A^\top A)^{-1} A^\top b=\left(\frac{1}{m} A^\top A\right)^{-1}\left(\frac{1}{m} A^\top b\right)  $}. For $\frac{1}{m} A^\top b$ we have 
\begin{align*}
\frac{1}{m}A^\top b =&\frac{1}{m}A^\top \left[ 
\begin{array}{cc}
-2a_1^\top x^o+\|x^o\|^2  \\
\vdots \\
-2a_m^\top x^o+\|x^o\|^2
\end{array}
\right] \\
&+\frac{1}{m}A^\top \left[ 
\begin{array}{cc}
2\|a_1-x^o\|r_1+r_1^2-\sigma^2  \\
\vdots \\
2\|a_m-x^o\|r_m+r_m^2-\sigma^2
\end{array}
\right]\\
=& \frac{1}{m}A^\top A \left[ 
\begin{array}{cc}
x^o  \\
\|x^o\|^2
\end{array}
\right]+\frac{1}{\sqrt{m}} O_p\left(1 \right),
\end{align*}
where the second equality is based on Lemma~\ref{property_of_bounded_variance}. Thus, we obtain
\begin{equation*}
\sqrt{m}\left( \hat y^{\rm BEL}_m-\left[ 
\begin{array}{cc}
x^o  \\
\|x^o\|^2
\end{array}
\right]\right)= \left( \frac{1}{m}A^\top A\right)^{-1}O_p\left(1 \right)=O_p\left(1 \right)
\end{equation*}
where the second equality holds because $ \frac{1}{m}A^\top A$ converges to a constant matrix, see~\eqref{convergence_of_AA}. Hence, $\hat y^{\rm BEL}_m$ converges to $[{x^o}^\top ~\|x^o\|^2]^\top $ at a rate of $1/\sqrt{m}$, which completes the proof.

\section{Proof of Theorem~\ref{consistency_of_Noise_Est}}
\label{proof_of_NE_consistency}
First, we show that ${\hat y}^{\rm NE}_m$ is consistent. Denote the objective function in~\eqref{Noise_Estimate_problem} as $Q_m(x,c)$. 
In virtue of Lemma~\ref{tail_product_with_noise}, for almost every $\varepsilon$, we have
{\small\begin{align*}
	\frac{1}{m} Q_m(x,c) &=\frac{1}{m}\sum_{i=1}^{m}\left( f_i(x)^2-f_i(x^o)^2-\varepsilon_i -\sigma^2 +c \right) ^2 \\
	&\rightarrow \underbrace{\mathbb E_{a \sim \mu} \left[(\|a-x\|^2-\|a-x^o\|^2 -\sigma^2+c)^2 \right]}_{:=Q(x,c)}+\|\varepsilon\|_t^2 ,
	\end{align*}}
\hspace{-2mm}uniformly for $x \in \mathcal X$ and $c \in \mathbb R$.

Note that $[{{}\hat x^{\rm NE}_m}^\top ~\hat \sigma^{\rm NE}_m ]^\top $ minimizes $m^{-1} Q_m(x,c)$ for any $m \in \mathbb N$. Then $\left([{{}\hat x^{\rm NE}_m}^\top ~\hat \sigma^{\rm NE}_m ]^\top \right)$ forms a sequence of minimizers of $m^{-1} Q_m(x,c)$. Let $[x'^\top ~c']^\top$ be a limit point of the sequence $\left([{{}\hat x^{\rm NE}_m}^\top ~\hat \sigma^{\rm NE}_m ]^\top \right) $, and let $\left( [{{}\hat x^{\rm NE}_{m_k}}^\top ~\hat c^{\rm NE}_{m_k} ]^\top\right)$ be any subsequence which converges to $[x'^\top ~c']^\top$. 
Then, $m_k^{-1}Q_{m_k}\left(\hat x^{\rm NE}_{m_k},\hat c^{\rm NE}_{m_k} \right)  \rightarrow Q(x',c')+\|\varepsilon\|_t^2$ as $k \rightarrow \infty$. 
Since $[{{}\hat x^{\rm NE}_m}^\top ~\hat \sigma^{\rm NE}_m ]^\top $ is a minimizer of $m^{-1} Q_m(x,c)$, $m_k^{-1} Q_{m_k}\left( \hat x^{\rm BE}_{m_k},\hat c^{\rm BE}_{m_k}\right)  \leq m_k^{-1} Q_{m_k}(x^o,\sigma^2)$. It follows that by letting $k \rightarrow \infty$, $Q(x',c')+\|\varepsilon\|_t^2 \leq Q(x^o,\sigma^2)+\|\varepsilon\|_t^2=\|\varepsilon\|_t^2$. Therefore, we obtain $Q(x',c')=0$. Based on the following lemma, we have $[x'^\top ~c']^\top=[{x^o}^\top ~\sigma^2]^\top$.
\begin{lemma}
	Given Assumption~\ref{deployment_of_sensor}, the function $Q(x,c)$ has a unique minimum at $[{x^o}^\top ~\sigma^2]^\top$. 
\end{lemma}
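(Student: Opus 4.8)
The plan is to exploit that $Q(x,c)$ is the $\mu$-expectation of a squared quantity, so $Q(x,c)\ge 0$ everywhere with equality at $[{x^o}^\top~\sigma^2]^\top$, and then to show that this equality can hold \emph{only} at that point. First I would record that the integrand vanishes identically at $(x^o,\sigma^2)$, whence $Q(x^o,\sigma^2)=0$ and the minimum value of $Q$ equals $0$. Consequently, proving that the minimizer is unique reduces to showing that $Q(x,c)=0$ forces $(x,c)=(x^o,\sigma^2)$.

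Next, since the integrand is nonnegative and its $\mu$-integral is zero, $Q(x,c)=0$ implies that
\[
\|a-x\|^2-\|a-x^o\|^2-\sigma^2+c=0 \quad\text{for $\mu$-almost every } a.
\]
Expanding the difference of squared norms, this collapses to the affine identity
\[
-2(x-x^o)^\top a+\left(\|x\|^2-\|x^o\|^2-\sigma^2+c\right)=0 \quad \mu\text{-a.e.}
\]
The heart of the argument is then a case distinction on the vector $x-x^o$. If $x\neq x^o$, the set of $a$ satisfying this identity is exactly a line (in 2D) or a plane (in 3D) $\mathcal S$, and the $\mu$-almost-everywhere condition says $\mu(\mathcal S)=1$, directly contradicting Assumption~\ref{deployment_of_sensor}. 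Hence $x=x^o$, and substituting back, the affine identity degenerates to $c-\sigma^2=0$, i.e.\ $c=\sigma^2$.

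The only steps requiring mild care are the passage from $Q(x,c)=0$ to the $\mu$-almost-everywhere vanishing of the integrand (a nonnegative function with zero integral is zero a.e.), and the identification of the zero set of a nontrivial affine function on $\mathbb R^n$ with precisely the hyperplane flagged by Assumption~\ref{deployment_of_sensor}. I do not anticipate any substantial obstacle: the argument parallels Lemma~\ref{Assumption_6_to_4}, with the vertical bisector there replaced by the affine level set here, and the extra coordinate $c$ being pinned down automatically once $x=x^o$ is forced.
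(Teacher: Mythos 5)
Your proof is correct and takes essentially the same route as the paper's: both reduce $Q(x,c)=0$ to $\mu$-almost-everywhere vanishing of the (nonnegative) integrand, and both identify the zero set of the resulting nontrivial affine function of $a$ as a line/plane whose full $\mu$-measure would contradict Assumption~\ref{deployment_of_sensor}. The only organizational difference is that you split cases on $x\neq x^o$ (forcing $x=x^o$ first, then $c=\sigma^2$), whereas the paper splits on $c=\sigma^2$ versus $c\neq\sigma^2$ and, in the former case, appeals back to the previously established uniqueness of the minimum of $\mathbb{E}_{a\sim\mu}\left[\left(\|a-x\|^2-\|a-x^o\|^2\right)^2\right]$; the substance is identical.
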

\begin{proof}
	For any $x \in \mathcal X$ and $c \in \mathbb R$, define $\mathcal A_{x,c}=\left\lbrace a\in \mathcal A \mid \|a-x\|^2-\|a-x^o\|^2 -\sigma^2+c=0 \right\rbrace $. It can be verified that $\mathcal A_{x,c}$ is either an empty set or a line (plane) in 2D (3D) case. Suppose these exists a $[x'^\top ~c']^\top \neq [{x^o}^\top ~\sigma^2]^\top$ such that $Q(x',c')=0$. If $c'=\sigma^2$, we have $x'=x^o$ since $\mathbb E_{a \sim \mu} \left[(\|a-x\|^2-\|a-x^o\|^2)^2 \right]$ has a unique minimum at $x=x^o$, which leads to a contradiction. If $c' \neq \sigma^2$, we have $x' \neq x^o$. Note that $Q(x',c')=0$. Hence, $\mu\left( \mathcal A_{x',c'} \right) =1$, which contradicts Assumption~\ref{deployment_of_sensor} and completes the proof.
\end{proof}
As a result, for almost every $\varepsilon$, $[{{}\hat x^{\rm NE}_m}^\top ~\hat \sigma^{\rm NE}_m ]^\top \rightarrow [{x^o}^\top ~\sigma^2]^\top$, i.e., $\hat y^{\rm NE}_m \xrightarrow{\text{a.s.}} \bar y^o$. The Lagrangian function of~\eqref{Noise_Estimate_problem2} is 
\begin{equation} \label{Lagrangian_function}
L(\bar y,\lambda)=\|A \bar y-\bar b\|^2 +\lambda \left( \bar y^\top D \bar y+2g^\top \bar y\right) .
\end{equation}
Then $\hat y^{\rm NE}_m$ needs to meet the stationary condition:
\begin{equation} \label{stationary_condition}
2 A^\top A \hat y^{\rm NE}_m-2 A^\top \bar b+2 \lambda D \hat y^{\rm NE}_m +2 \lambda g=0. 
\end{equation}
Since $\hat y^{\rm NE}_m \xrightarrow{\text{a.s.}} \bar y^o$, $\lim\limits_{m \rightarrow \infty} {{}\hat y^{\rm NE}_m}^\top D \hat y^{\rm NE}_m+2g^\top \hat y^{\rm NE}_m<0$ w.p.1. According to the complementary slackness condition, we have that $\lim\limits_{m \rightarrow \infty} P(\lambda^*=0)=1$, where $\lambda^*$ is the optimal Lagrange multiplier. Based on~\eqref{stationary_condition} and $\lim\limits_{m \rightarrow \infty} P(\lambda^*=0)=1$, we have 
\begin{equation} \label{limit_probability}
\lim\limits_{m \rightarrow \infty} P\left( \hat y^{\rm NE}_m=\hat y^{\rm NE0}_m\right)=1 , 
\end{equation}
where $\hat y^{\rm NE0}_m:=(A^\top A )^{-1} A^\top \bar b$.
Now we show that $\hat y^{\rm NE0}_m$ is a $\sqrt{m}$-consistent estimate of $\bar y^o$. The proof is similar to that of Theorem~\ref{consistency_of_linear_bias_eli_estimate}. Note that $\hat y^{\rm NE0}_m=\left(\frac{1}{m} A^\top A\right)^{-1}\left(\frac{1}{m} A^\top \bar b\right) $. For $\frac{1}{m} A^\top \bar b$ we have 
\begin{align*}
\frac{1}{m}A^\top \bar b = \frac{1}{m}A^\top A \left[ 
\begin{array}{cc}
x^o  \\
\|x^o\|^2+\sigma^2
\end{array}
\right]+\frac{1}{\sqrt{m}} O_p\left(1 \right),
\end{align*}
where we have used Lemma~\ref{property_of_bounded_variance}. Thus, we obtain
\begin{equation*}
\sqrt{m}\left( \hat y^{\rm NE0}_m-\left[ 
\begin{array}{cc}
x^o  \\
\|x^o\|^2+\sigma^2
\end{array}
\right]\right) =O_p\left(1 \right), 
\end{equation*}
which implies the $\sqrt{m}$-consistency of $\hat y^{\rm NE0}_m$.

Now we are on the point to show the $\sqrt{m}$-consistency of $\hat y^{\rm NE}_m$. Based on~\eqref{limit_probability}, for any $\epsilon_1>0$, there exists a $M_1>0$ such that for all $m>M_1$, 
\begin{equation*}
P\left( \hat y^{\rm NE}_m=\hat y^{\rm NE0}_m\right) \geq 1-\epsilon_1. 
\end{equation*}
Based on the $\sqrt{m}$-consistency of $\hat y^{\rm NE0}_m$, for any $\epsilon_2>0$, there exists $M_2>0$ and $N>0$ such that for all $m>M_2$,
\begin{equation*}
P\left( \sqrt{m}\left| \hat y^{\rm NE0}_m-\bar y^o\right| \geq N\right)\leq \epsilon_2. 
\end{equation*}
Therefore, for all $m>\max \{M_1,M_2\}$, we have 
{\small
\begin{align*}
&P\left( \sqrt{m}\left| \hat y^{\rm NE}_m-\bar y^o\right| \geq N\right) \\
&= P\left( \hat y^{\rm NE}_m=\hat y^{\rm NE0}_m\right) P\left( \sqrt{m}\left| \hat y^{\rm NE}_m-\bar y^o\right| \geq N \mid \hat y^{\rm NE}_m=\hat y^{\rm NE0}_m\right)  \\
& ~~~+P\left( \hat y^{\rm NE}_m \neq \hat y^{\rm NE0}_m\right) P\left( \sqrt{m}\left| \hat y^{\rm NE}_m-\bar y^o\right| \geq N \mid \hat y^{\rm NE}_m \neq \hat y^{\rm NE0}_m\right) \\
&\leq  \epsilon_2+\epsilon_1.
\end{align*}}
\hspace{-2mm}Since $\epsilon_1$ and $\epsilon_2$ can be arbitrary, we have $\hat y^{\rm NE}_m-\bar y^o=O_p(1/\sqrt{m})$ which completes the proof.

%
%
%
%
%

\ifCLASSOPTIONcaptionsoff
  \newpage
\fi

\small
\bibliographystyle{IEEEtran}
\bibliography{sj_reference}
\end{document}